\newcommand{\hideN}[1]{\null}     
\patchcmd{\chapter}{\if@openright\cleardoublepage\else\clearpage\fi}{}{}{}
\patchcmd{\@IEEEeqnarray}{\relax}{\relax\intertext@}{}{}
\renewcommand\pagenumbering[1]{}
\DeclareMathAlphabet\mathbfcal{OMS}{cmsy}{b}{n}
\newlength{\mylen}
\renewcommand{\cftfigpresnum}{\figurename\enspace}
\renewcommand{\cftfigaftersnum}{:}
\newcommand\blfootnote[1]{%
  \begingroup
  \renewcommand\thefootnote{}\footnote{#1}%
  \addtocounter{footnote}{-1}%
  \endgroup
}
\title{A Study on Arbitrarily Varying Channels with Causal Side Information at the Encoder 
\blfootnote
{
This work was supported by the Israel Science Foundation (grant No. 1285/16).
}
}
\author{Uzi Pereg and Yossef Steinberg  }
\else\newcommand{\bibstar}[1]{}\fi
\definecolor{light-gray}{gray}{0.8}
\newcounter{parentnumber}
\definecolor{dark-gray}{gray}{0.3}
\newlength{\dhatheight}
\newcommand{\bieee}{\begin{IEEEeqnarray}{rCl}}
\newcommand{\eieee}{\end{IEEEeqnarray}}
\newcommand{\prob}[1]{\Pr\left(#1\right)}
\newcommand{\given}{\mid}
\newcommand{\cprob}[2]{\Pr\left(#1\given #2\right)}
\newcommand{\E}{\mathbb{E}}
\newcommand{\eps}{\varepsilon}
\newcommand{\ie}{\emph{i.e.} }
\newcommand{\eg}{\emph{e.g.} }
\newcommand{\etal}{\emph{et al.} }
\newcommand{\cf}{\emph{cf.} }
\newcommand{\tm}{\widetilde{m}}	
\newcommand{\tM}{\widetilde{M}}																				
\newcommand{\tp}{\widetilde{p}}
\newcommand{\tsn}{\widetilde{s}^{\, n}}
\newcommand{\tf}{\widetilde{f}}
\newcommand{\tg}{\widetilde{g}}
\newcommand{\tfnu}{\tf^{\nu}}
\newcommand{\gnu}{\tg}
\newcommand{\bR}{\tR}
\newcommand{\oS}{\overline{S}}
\newcommand{\oq}{\overline{q}}
\newcommand{\tR}{\widetilde{R}}
\newcommand{\hm}{\hat{m}}
\newcommand{\hq}{\widehat{q}}
\newcommand{\hgamma}{\widehat{\gamma}}
\newcommand{\hP}{\hat{P}}
\newcommand{\hM}{\hat{M}}
\newcommand{\Aset}{\mathcal{A}}
\newcommand{\Bset}{\mathcal{B}}
\newcommand{\Dset}{\mathcal{D}}
\newcommand{\Uset}{\mathcal{U}}
\newcommand{\Vset}{\mathcal{V}}
\newcommand{\Qset}{\mathcal{Q}}
\newcommand{\Sset}{\mathcal{S}}
\newcommand{\Wset}{\mathcal{W}}
\newcommand{\Xset}{\mathcal{X}}
\newcommand{\Yset}{\mathcal{Y}}
\newcommand{\Zset}{\mathcal{Z}}
\newcommand{\Eset}{\mathcal{E}}
\newcommand{\markovC}[1]{%
\begin{tikzpicture}[#1]%
\draw (0,0.3ex) -- (1ex,0.3ex);%
\draw (0.5ex,0.3ex) circle (0.2ex);
\draw[white] (0.2ex,0) -- (0.5ex,0);%
\end{tikzpicture}%
}
\newcommand{\Cbar}{\markovC{scale=2}}
\newcommand{\Bern}{\text{Bernoulli}}
\newcommand{\withprob}{\text{w.p. }}
\newcommand{\interior}[1]{\text{int}\hspace{-0.01cm}\big( #1 \big)}
\theoremstyle{remark}	\newtheorem{theorem}{Theorem}
\theoremstyle{remark}	\newtheorem{lemma}[theorem]{Lemma}
\theoremstyle{remark}	\newtheorem{coro}[theorem]{Corollary}
\theoremstyle{remark} \newtheorem{definition}{Definition}
\theoremstyle{remark} 
\theoremstyle{remark} \newtheorem{example}{Example}
\newcommand{\channel}{W_{Y|X,S}}
\newcommand{\nchannel}{W_{Y^n|X^n,S^n}} 
\newcommand{\rp}{\Wset^q} 
\newcommand{\rpig}{\rp_0} 
\newcommand{\compound}{\Wset^\Qset} 
\newcommand{\avc}{\Wset}																		
\newcommand{\avcig}{\avc_0}																	
\newcommand{\opC}{\mathbb{C}}																
\newcommand{\inC}{\mathsf{C}}															 	
\newcommand{\inR}{\mathsf{R}}
\newcommand{\Uchannel}{V^{\encs}_{Y|U,S}}														
\newcommand{\xichannel}{\Uchannel}									
\newcommand{\Uavc}{\Vset_0^{\,\xi}}													
\newcommand{\pSpace}{\mathcal{P}}														
\newcommand{\dF}{\mathsf{F}}																
\newcommand{\dE}{\mathsf{E}}																
\newcommand{\dM}{\mathsf{M}}															 	
\newcommand{\dK}{k}
\newcommand{\enc}{f}																				
\newcommand{\dec}{g}																			 	
\newcommand{\code}{\mathscr{C}}															
\newcommand{\gcode}{\mathscr{C}^{\,\Gamma}}									
\newcommand{\tcode}{\widetilde{\code}}											
\newcommand{\cerr}{P_{e|s^n}^{(n)}}													
\newcommand{\err}{P_e^{(n)}}															
\newcommand{\cost}{\phi}																		
\newcommand{\plimit}{\Omega}																			
\newcommand{\tset}{\Aset^{\delta}}													
\newcommand{\qn}{q^n}
\newcommand{\tQ}{\hat{\Qset}_n}														
\newcommand{\enci}{f_i}																			
\newcommand{\encn}{f^n}																			
\newcommand{\encs}{\xi}																			
\newcommand{\Cavc}{\opC(\avc)}
\newcommand{\Cavcig}{\opC(\avcig)}
\newcommand{\ICrp}{\inC(\rp)}
\newcommand{\ICrpig}{\inC(\rpig)}
\newcommand{\emp}{\hP}																		  
\newcommand{\rstarC}{																			  
\, \hspace{-0.3cm} \text{ $$ \mbox{  
\hspace{-0.1cm} 
\small $\star$   
} $$ }
\hspace{-0.25cm}}
\newcommand{\rCav}{\opC^{\rstarC}\hspace{-0.1cm}(\avc)}
\newcommand{\rCavig}{\opC^{\rstarC}\hspace{-0.1cm}(\avcig)}
\newcommand{\sCondQ}{\mathscr{T}^{\Qset}}
\newcommand{\sCond}{\mathscr{T}}
\newcommand{\bc}{W_{Y_1,Y_2|X,S}}
\newcommand{\sbc}{W_{Y_1|X,S}}
\newcommand{\wbc}{W_{Y_2|X,S}}
\newcommand{\nBC}{W_{Y_1^n,Y_2^n|X^n,S^n}} 
\newcommand{\tBset}{\Bset}
\newcommand{\Brp}{\tBset^q} 
\newcommand{\BrpS}{\tBset^{q^*}} 
\newcommand{\Bcompound}{\tBset^\Qset} 
\newcommand{\BcompoundP}{\tBset^{\pSpace(\Sset)}} 
\newcommand{\avbc}{\tBset}																		
\newcommand{\Bavcig}{\avbc_0}																	
\newcommand{\BopC}{\mathbb{C}}																
\newcommand{\BinC}{\mathsf{R}_{out}}									
\newcommand{\BinR}{\mathsf{R}_{in}}									
\newcommand{\BrCcompound}{\BopC^{\rstarC}\hspace{-0.1cm}(\Bcompound)}
\newcommand{\BrCcompoundP}{\BopC^{\rstarC}\hspace{-0.1cm}(\BcompoundP)}
\newcommand{\BrCav}{\BopC^{\rstarC}\hspace{-0.1cm}(\avbc)}
\newcommand{\BrCavig}{\BopC^{\rstarC}\hspace{-0.1cm}(\Bavcig)}
\newcommand{\BrICav}{\BinC^{\rstarC}}
\newcommand{\BrIRavig}{\mathsf{R}^{\rstarC}_{0,in}}
\newcommand{\BCrp}{\BopC(\Brp)}
\newcommand{\BCcompound}{\BopC(\Bcompound)}
\newcommand{\BCavc}{\BopC(\avbc)}
\newcommand{\BCavcig}{\BopC(\Bavcig)}
\newcommand{\BICrp}{\mathsf{C}(\Brp)}
\newcommand{\BICrpS}{\inC(\tBset^{q^*})}
\newcommand{\BICcompound}{\BinC(\Bcompound)}
\newcommand{\BIRcompound}{\BinR(\Bcompound)}
\newcommand{\BIRavc}{\BinR^{\rstarC}}
\newcommand{\LambdaOig}{\widetilde{\Lambda}_0} 			
\newcommand{\LambdaO}{\widetilde{\Lambda}}						
\newcommand{\apLSpaceS}{\overline{\pSpace}_\Lambda(\Sset)}			
\newcommand{\pLSpaceS}{\overline{\pSpace}_\Lambda(\Sset)}			
\newcommand{\pLSpaceSn}{\pSpace^n_\Lambda(\Sset^n)}		
\newcommand{\rpLSpaceU}{\pSpace_{\plimit,\Lambda,\encs\;}^{\rstarC}\hspace{-0.1cm}(\Uset)}		  
\newcommand{\pLSpaceU}{\pSpace_{\plimit,
\Lambda,\encs}(\Uset)}		    
\newcommand{\opCli}{\mathbb{C}_{\plimit,\Lambda}}																
\newcommand{\inCli}{\mathsf{C}_{\plimit,\Lambda\,}}															 	
\newcommand{\LcompoundP}{\Wset^{\apLSpaceS}} 
\newcommand{\LCcompound}{\opCli(\compound)}
\newcommand{\LCcompoundP}{\opCli(\LcompoundP)}
\newcommand{\LrCcompoundP}{\opCli^{\rstarC}\hspace{-0.1cm}(\LcompoundP)}
\newcommand{\LCavc}{\opCli\,(\avc)}
\newcommand{\LCavcig}{\opCli(\avcig)}
\newcommand{\LICavc}{\inR_{low,\,\plimit,
\Lambda}(\avc)}
\newcommand{\LICavcig}{\inCli(\avcig)}
\newcommand{\LrCcompound}{\opC_{\plimit,\Lambda \,}^{\rstarC}\hspace{-0.1cm}(\compound)}
\newcommand{\LrCav}{\opCli^{\rstarC}\hspace{-0.1cm}\,(\avc)}
\newcommand{\LrCavig}{\opCli^{\rstarC}\hspace{-0.1cm}(\avcig)}
\newcommand{\LrICav}{\inR^{\rstarC}_{low,\plimit,\Lambda \,}\hspace{-0.1cm}(\avc)}
\newcommand{\LrIRav}{\inR^{\rstarC}_{up,\plimit,\Lambda \,}\hspace{-0.1cm}(\avc)}
\newcommand{\LrICavig}{\inCli^{\rstarC}\hspace{-0.1cm}(\avcig)}
\newcommand{\opCia}{\overline{\mathbb{C}}_{\plimit,\Lambda}}															
\newcommand{\opClia}{\overline{\mathbb{C}}_{\plimit,\Lambda}}																
\newcommand{\ALCcompound}{\opCia(\compound)}
\newcommand{\ALCavc}{\opClia\,(\avc)}
\newcommand{\ALrCcompound}{\opCia^{\rstarC}\hspace{-0.1cm}(\compound)}
\newcommand{\ALrCav}{\opClia^{\rstarC}\hspace{-0.1cm}\,(\avc)}
\begin{document}

\maketitle


{
\hypersetup{
    linkcolor={blue!50!black}
}
\pagestyle{plain}
\tableofcontents
}
\newpage
\pagestyle{fancy}
\begin{abstract} 
 In this work, we study two models of arbitrarily varying channels, when causal side information is available at the encoder in a causal manner.
%
First, we study the arbitrarily varying channel (AVC) with input and state constraints, when the encoder has state information in a causal manner.  
Lower and upper bounds on the random code capacity are developed. 
A lower bound on the deterministic code capacity is established in the case of a message-averaged input constraint.  
%
In the setting where a state constraint is imposed on the jammer, while the user is under no constraints,
 the random code bounds coincide, and the random code capacity is determined. 
Furthermore, for this scenario, 
a generalized non-symmetrizability condition is stated, under which 
 the deterministic code capacity coincides with the random code capacity.

A second model considered in our work is the arbitrarily varying degraded broadcast channel with causal side information at the encoder (without constraints).
We 
 establish inner and outer bounds on both the random code capacity region and the deterministic code capacity region. 
 The capacity region is then determined for a class of channels satisfying a condition on the mutual informations between the strategy variables and the channel outputs.
 As an example, we show that the condition holds for the arbitrarily varying binary symmetric broadcast channel, and we find the corresponding capacity region.

\end{abstract}
%


\newpage
\chapter*{\vspace{-2cm}
Introduction}
\addcontentsline{toc}{chapter}{Introduction}
\pagestyle{fancy}

In practice, the statistics of a communication system are not necessarily known in exact, and they may even change over time.
The arbitrarily varying channel (AVC) is an appropriate model to describe such a situation, as introduced by Blackwell \etal \cite{BBT:60p}.
 Among the motivations for this field of research is the adversarial communication model, 
 where  a \emph{jammer} selects a sequence of channel states in an attempt to disrupt communication. 

Considering the AVC without SI, 
Blackwell \etal  determined  the random code channel capacity  \cite{BBT:60p}, \ie the capacity achieved by stochastic-encoder stochastic-decoder coding schemes with common randomness. It was also demonstrated in  \cite{BBT:60p}  that the random code capacity is not necessarily  achievable using deterministic codes. 
  A well-known result by Ahlswede \cite{Ahlswede:78p} is the dichotomy property presented by the AVC in the absence of state information. Namely, without SI, the deterministic code capacity either equals the random code capacity or else, it is zero.

Subsequently, Ericson \cite{Ericson:85p} and Csisz{\'{a}}r and Narayan \cite{CsiszarNarayan:88p}
 have established a simple single-letter condition, namely non-symmetrizability, which is both necessary and sufficient for the capacity to be positive in the case of an AVC without state information.  
The derivation of sufficiency, in \cite{CsiszarNarayan:88p}, is independent of Ahlswede's work and is based on a subtle decoding rule, analyzed through the method of types. 

Csisz{\'{a}}r and Narayan also determined the random code capacity \cite{CsiszarNarayan:88p1} and the deterministic code capacity \cite{CsiszarNarayan:88p} of the AVC, when input and state constraints are imposed on the user and the jammer, respectively.  In \cite{CsiszarNarayan:88p}, they show that dichotomy in the notion of \cite{Ahlswede:78p} does not hold when state constraints are imposed on the jammer. That is, the deterministic code capacity can be lower than the random code capacity, and yet non-zero.

Vast research has been conducted on other AVC models as well. Recently, the arbitrarily varying wiretap channel
has been extensively studied,  as \eg in 
\cite{MBL:09c,BocheShaefer:13p,ACD:13b,BocheShaeferPoor:14c,NotzelWieseBoche:16p,GoldfeldCuffPermuter:16a}.
The multiple user scenario was first studied by Jahn \cite{Jahn:81p}, who presented an inner bound on the capacity region of the arbitrarily varying broadcast channel.  
 More recent results on the arbitrarily  varying broadcast channel are derived \eg in  \cite{WinshtokSteinberg:06c,HofBross:06p}.

Additional models of interest involve SI available at the encoder.
%
%
In \cite{Ahlswede:86p}, Ahlswede addressed the AVC with non-causal SI available at the encoder, also referred to as the arbitrarily varying Gel'fand-Pinsker model \cite{GelfandPinsker:80p}.  The analysis relies on a technique that Ahlswede developed, which is referred to as Ahlswede's Robustification Technique \cite{Ahlswede:80p,Ahlswede:86p}.
This technique was then used in  \cite{WinshtokSteinberg:06c}, to establish
 the capacity region of the  arbitrarily varying degraded broadcast channel with non-causal SI at the encoder.
 The AVC with causal SI is addressed in the book by Csisz{\'a}r and K{\"o}rner \cite{CsiszarKorner:82b}, while their approach is independent of Ahlswede's work.
 A straightforward application of Ahlswede's Robustification Technique 
 fails to comply with the causality requirement. 

 In this work, we study two models,
analyzed using a modified version of Ahlswede's
Robustification and Elimination Techniques \cite{Ahlswede:78p,Ahlswede:79p,Ahlswede:80p,
Ahlswede:86p}. 
In particular, we adjust Ahlswede's Robustification Technique, previously used for the case of  non-causal SI, such that it would be applicable in the case of causal SI. 

The first model considered in this work is the AVC with input and state constraints when causal SI is available at the encoder. 
We find lower and upper bounds on the random code capacity. Furthermore we find a lower bound on the deterministic code capacity, for an input constraint that is averaged over the messages.
For the case where a state constraint is imposed on the jammer, while the user is under no constraints,
the random code bounds coincide, and the random code capacity is determined.
 In this scenario, 
a generalized non-symmetrizability condition is stated, under which 
 the deterministic code capacity coincides with the random code capacity.  

The second model considered in this work is the arbitrarily varying degraded broadcast channel with causal SI at the encoder (without constraints). 
Inner and outer bounds on the random code capacity region and the deterministic code capacity region are established. 
Specifically,  Jahn's inner bound \cite{Jahn:81p} and the dichotomy property are extended to the case where causal SI is available. 
 Furthermore, we find an outer bound, and  conditions on the broadcast channel under which the inner and outer bounds coincide and the capacity region is determined.  As an example, we show that the condition holds for the arbitrarily varying binary symmetric broadcast channel, and we find the corresponding capacity region.

The manuscript is divided into two main parts. 
 In Chapter~\ref{chap:avcC}, we treat the AVC with causal SI in the presence of input and state constraints.
In Chapter~\ref{chap:AVDBC}, we treat the arbitrarily varying degraded broadcast channel with causal SI
 (without constraints). 

\vspace{-1cm}
\chapter{Causal Side Information and Constraints}
\label{chap:avcC}
In this chapter, we address the arbitrarily varying channel 
 with causal side information available at the encoder, under input and state constraints. 

\section{Definitions and Previous Results}
\label{sec:notation}
\subsection{Notation}
We use the following notation conventions throughout. 
Calligraphic letters $\Xset,\Sset,\Yset,...$ are used for finite sets.
Lowercase letters $x,s,y,\ldots$  stand for constants and values of random variables, and uppercase letters $X,S,Y,\ldots$ stand for random variables.  
 The distribution of a random variable $X$ is specified by a probability mass function (pmf) 
	$P_X(x)=p(x)$ over a finite set $\Xset$. Let $\pSpace(\Xset)$ denote the set of all pmfs over $\Xset$.
		
 We use $x^j=(x_1,x_{2},\ldots,x_j)$ to denote  a constant sequence, with $j\geq 1$. 
For a pair of integers $i$ and $j$, $1\leq i\leq j$, we define the discrete interval $[i:j]=\{i,i+1,\ldots,j \}$.  
 A random sequence $X^n$ and its distribution $P_{X^n}(x^n)=p^n(x^n)$ are defined accordingly.

	\subsection{Channel Description}
	\label{subsec:channels}
 A state-dependent discrete memoryless channel (DMC) 
$(\Xset\times\Sset,\channel,\Yset)$ consists of  finite input, state and output alphabets $\Xset$, $\Sset$, $\Yset$, respectively, 
 and a collection of conditional pmfs $p(y|x,s)$ over $\Yset$. The channel is memoryless without feedback, and therefore   $p(y^n|x^n,s^n)= \prod_{i=1}^n \channel(y_i|x_i,s_i)$. 
The AVC is a DMC $\channel$  with a state sequence of unknown distribution,  not necessarily independent nor stationary. That is, $S^n\sim \qn(s^n)$ with an unknown joint pmf $\qn(s^n)$ over $\Sset^n$. In particular, $\qn(s^n)$ can give mass $1$ to some state sequence $s^n$.
For state-dependent channels with causal SI, the channel input at time $i\in[1:n]$ may depend on the sequence of past and present states $s^i$. 
The AVC with causal SI is denoted by $\avc=\{\channel\}$.

The compound channel is used as a tool in the analysis. 
 Different models of compound channels are described in the literature. Here,  
 the compound channel is  a DMC with a discrete memoryless state, where the state distribution $q(s)$ is not known in exact, but rather belongs to a family of distributions $\Qset$, with $\Qset\subseteq \pSpace(\Sset)$. That is, the state sequence $S^n$ is independent and identically distributed (i.i.d.) according to $ q(s)$, for some pmf $q\in\Qset$. 
We note that this differs from the classical definition of the compound channel, as in \cite{CsiszarKorner:82b}, where the state is fixed throughout the transmission.
The compound channel with causal SI is denoted by $\compound$.

\subsection{Coding}
We introduce some preliminary definitions, starting with the definitions of a deterministic code and a random code for the AVC $\avc$ with  causal SI. Note that in general,
 the term `a code', unless mentioned otherwise, refers to a deterministic code.

\begin{definition}[Code] 
\label{def:capacity}
A $(2^{nR},n)$ code for the AVC $\avc$ with causal SI consists of the following;   
a message set $[1:2^{nR}]$, 
where it is assumed throughout that $2^{nR}$ is an integer,
a set of $n$ encoding functions 
$\enci:  [1:2^{nR}]\times \Sset^i \rightarrow \Xset$, for  
$ i\in [1:n]$,  
 and a decoding function
$
\dec: \Yset^n\rightarrow [1:2^{nR}]  
$. 

At time $i\in [1:n]$, given a message $m\in [1:2^{nR}]$ and a sequence $s^i$,
 the encoder transmits $x_i=\enci(m,s^i)$. The codeword is then given by 
\bieee
x^n= \encn(m,s^n) \triangleq \left(
\enc_1(m,s_1),\enc_2(m,s^2),\ldots,\enc_n(m,s^n)   \right) \;.
\eieee
The decoder  receives the channel output $y^n$, and finds an estimate of the message $\hm=g(y^n)$.  
We denote the code by $\code=\left(\encn(\cdot,\cdot),\dec(\cdot) \right)$.
\end{definition}

%

We proceed now to coding schemes 
when using stochastic-encoder stochastic-decoder pairs with common randomness.
The codes formed by these pairs are referred to as random codes, a.k.a. correlated codes \cite{Ahlswede:86p}. 

\begin{definition}[Random code]
\label{def:corrC} 
A $(2^{nR},n)$ random code for the AVC $\avc$ consists of a collection of 
$(2^{nR},n)$ codes $\{\code_{\gamma}=(\encn_\gamma,\dec_\gamma)\}_{\gamma\in\Gamma}$, along with a probability distribution $\mu(\gamma)$ over the code collection $\Gamma$. 
We denote such a code by $\gcode=(\mu,\Gamma,\{\code_{\gamma}\}_{\gamma\in\Gamma})$.

\end{definition}

Next, we write the definition of Shannon strategy coding with causal SI  \cite{Shannon:58p}. 
 Though, we  use a  different formulation, as \eg in \cite{ElGamalKim:11b}
 (see \cite[Remark 7.8]{ElGamalKim:11b}). 
\begin{definition}[Shannon strategy code]  \cite{Shannon:58p}
\label{def:StratCode}
A $(2^{nR},n)$ Shannon strategy code for the AVC $\avc$ with causal SI is a $(2^{nR},n)$  
 code 
with an encoder that is composed of  
an encoding strategy sequence
$
u^n: [1:2^{nR}] \rightarrow \Uset^n 
$, 
 an encoding function 
 $\encs:\Uset\times\Sset\rightarrow \Xset$,  and 
a decoding function
$
\dec: \Yset^n\rightarrow [1:2^{nR}] 
$. The codeword is then given by 
\bieee
\label{eq:StratEnc}
x^n=\encs^n(u^n(m),s^n)\triangleq \big[\, \encs(u_i(m),s_i) \,\big]_{i=1}^n \;.
\eieee
We denote such a code by $\code=\left(u^n(\cdot),\encs(\cdot,\cdot),\dec(\cdot) \right)$.
\end{definition}
The definitions above apply to the compound channel $\compound$ as well.
\subsection{Input and  State Constraints} 
Next, we consider input and state constraints.
Let $\cost:\Xset\rightarrow [0,\infty)$ and $l:\Sset\rightarrow [0,\infty)$ be some given bounded functions, and define
	\bieee
	\cost^n(x^n)&=&\frac{1}{n} \sum_{i=1}^n \cost(x_i) \;,
	\label{eq:LInConstraintStrict} \\
	l^n(s^n)&=&\frac{1}{n} \sum_{i=1}^n l(s_i) \;.
	\eieee
Let $\plimit>0$ and $\Lambda>0$. Below, we specify input constraint $\plimit$ and state constraint $\Lambda$, corresponding to  the functions
$\cost^n(x^n)$ and $l^n(s^n)$, respectively, for the AVC and the compound channel with causal SI.

		We may assume without loss of generality that $0\leq \plimit\leq \cost_{max}$ and
 $0\leq\Lambda\leq l_{max}$, where $\cost_{max}=\max_{x\in\Xset} \cost(x)$ and $l_{max}=\max_{s\in\Sset} l(s)$. It is also assumed that for some $x_0\in\Xset$ and $s_0\in\Sset$, $\cost(x_0)=l(s_0)=0$.		
	%
	%
	\subsubsection{State Constraints}
	State constraints are imposed on the compound channel $\compound$ and the AVC $\avc$ with causal SI, as specified below.
	Given some $\Lambda>0$, define 
	a set of constrained single-letter state distributions, 
\begin{align}
\pLSpaceS&\triangleq 
\{ q(s)\in\pSpace(\Sset) \,:\; \E_q\, l(S)\leq\Lambda \} \;,
\label{eq:aplspaceS}
\intertext{and 
a set of constrained $n$-fold state distributions, }
\pLSpaceSn
&\triangleq\{ q^n(s^n) \in\pSpace^n(\Sset^n) \,:\; q^n(s^n)=0 \;\text{ if $l^n(s^n)>\Lambda$}\, \} \;.
\end{align}
The set $\pLSpaceS$ represents a state constraint \emph{on average}, whereas the set $\pLSpaceSn$ represents a state constraint held \emph{almost surely}.

We say that a compound channel $\compound$ with causal SI is under a state constraint $\Lambda$, if the set $\Qset$ of state distributions is limited to 
\begin{align}
&\Qset \subseteq \pLSpaceS \;.
\intertext{ 	
%
As for the AVC $\avc$ with causal SI, 
 it is now assumed that $l^n(S^n)\leq\Lambda$ \withprob $1$, \ie
}
\label{eq:StateCn}
&q^n(s^n)\in\pLSpaceSn \;.
\end{align}
	
\subsubsection{Input Constraints}
 Consider the AVC $\avc$ with causal SI, under an input constraint as specified below. 
 Attention should  be drawn to the fact that, when SI is available and the channel input depends on the state sequence $S^n$,  the input cost depends on the jammer's strategy $q^n(s^n)$ as well.


We consider two types of input constraints. 
We say that the AVC $\avc$ with causal SI is under \emph{per message input constraint} $\plimit$, if
\begin{subequations}
\label{eq:inputC}
\bieee
&&
 \sum_{s^n\in\Sset^n}q^n(s^n)\cost^n (\enc^n(m,s^n))\leq \plimit \;, \IEEEnonumber\\ 
&&\text{for all $m\in [1:2^{nR}]$ and $q^n(s^n)\in\pLSpaceSn$
} \;.\quad
\label{eq:inputCstrict}
\eieee
As for the second type,  
we say that the AVC $\avc$ with causal SI is under
 \emph{average input constraint} $\plimit$, if
\bieee
&& \frac{1}{2^{nR}}\sum_{m=1}^{2^{nR}}
 \sum_{s^n\in\Sset^n}q^n(s^n)\cost^n (\enc^n(m,s^n))\leq \plimit \;, \IEEEnonumber\\ 
&&\text{for all $q^n(s^n)\in\pLSpaceSn$
} \;.\quad
\label{eq:inputCaverage}
\eieee
\end{subequations}
Input constraint on the compound channel $\compound$ with causal SI is defined in a similar manner, 
where  (\ref{eq:inputC}) is taken with respect to i.i.d. state distributions $q^n(s^n)=\prod_{i=1}^n q(s_i)$, with $q\in\Qset$.

\subsection{Capacity under Constraints}
We move to the definition of  an achievable rate and the capacity of the AVC $\avc$ with causal SI, under input and state constraints.
Deterministic codes and random codes over the AVC $\avc$ with causal SI are defined as in 
Definition~\ref{def:capacity} and Definition~\ref{def:corrC},
 respectively, with the additional constraint 
 (\ref{eq:inputCstrict}) or (\ref{eq:inputCaverage}) 
 on the codebook.

 Define the conditional probability of error of a code $\code$ given a state sequence $s^n\in\Sset^n$ by  
\begin{subequations}
\begin{align}
\label{eq:cerr}
&\cerr(\code)\triangleq 
\frac{1}{2^{ nR }}\sum_{m=1}^{2^ {nR}}
\sum_{y^n:\dec(y^n)\neq m} \nchannel(y^n|\encn(m,s^n),s^n) \;,
\end{align}
where $\nchannel(y^n|x^n,s^n)=\prod_{i=1}^n \channel(y_i|x_i,s_i)$. 
Now, define the average probability of error of $\code$ for some distribution $\qn(s^n)\in\pSpace^n(\Sset^n)$, 
\bieee
\err(\qn,\code)\triangleq 
\sum_{s^n\in\Sset^n} \qn(s^n)\cdot\cerr(\code) \;.
\eieee
\end{subequations}

\begin{definition}[
Achievable rate and capacity under constraints]
\label{def:Lcapacity}
A code $\code=(\encn,\dec)$ is a  called a
$(2^{nR},n,\eps)$ code for the AVC $\avc$, under per message  input constraint $\plimit$ and  state constraint $\Lambda$, 
 when (\ref{eq:inputCstrict})  is satisfied 
 and 
\begin{align}
\label{eq:Lerr}
& \err(q^n,\code) 
\leq \eps \;,\quad
\text{for all $q^n\in\pLSpaceSn$} \;.
\end{align}

  We say that a rate $R$ is achievable under per message
	input constraint $\plimit$ and state constraint $\Lambda$,
	if for every $\eps>0$ and sufficiently large $n$, there exists a  $(2^{nR},n,\eps)$ code for the AVC
	$\avc$ under per message
	input constraint $\plimit$ and state constraint $\Lambda$. The operational capacity is defined as the supremum of all achievable rates, 
	and it is denoted by $\LCavc$. 
 We use the term `capacity' referring to this operational
meaning, and in some places we call it the deterministic code capacity in order to emphasize that achievability is measured with respect to  deterministic codes.

Analogously to the deterministic case,  a $(2^{nR},n,\eps)$ random code $\gcode=$ $(\mu,\Gamma,$ $\{\code_{\gamma}\}_{\gamma\in\Gamma})$
 for the AVC $\avc$, under per message
 input constraint $\plimit$ and state constraint $\Lambda$, satisfies the requirements
\begin{subequations}
\label{eq:LrcodeReq}
\begin{align}
&
\sum_{\gamma\in\Gamma}\mu(\gamma) \left[
 \sum_{s^n\in\Sset^n} q^n(s^n)  \cost^n(\enc_\gamma^n(m,s^n))
\right]
  \leq \plimit 
\,,\; 
\text{for all $m\in [1:2^{nR}]$ 
 \,,\;
 $q^n\in\pLSpaceSn$}\;,  \label{eq:codeInputCr}
\intertext{and} 
&\err(q^n,\gcode)\triangleq \sum_{\gamma\in\Gamma} \mu(\gamma) \left[ \sum_{s\in\Sset} q^n(s^n)\cdot \cerr(\code_\gamma) 
\right]
\leq \eps \;,\quad\text{for all $q^n\in\pLSpaceSn$} \;.
\label{eq:Lrerr}				
\end{align}
\end{subequations}
The capacity achieved by random codes is then denoted by $\LrCav$, and it 
 is referred to as the \emph{random code capacity}.
\end{definition}


The definitions above are naturally extended to the compound channel under per message
 input constraint $\plimit$ and state constraint $\Lambda$, by relaxing the requirements (\ref{eq:inputCstrict}), 
(\ref{eq:Lerr}) and (\ref{eq:LrcodeReq}) to i.i.d. state distributions $q\in\Qset$. 
The respective deterministic code capacity and random code capacity,  
 $\LCcompound$ and $\LrCcompound$, are defined accordingly.
%
Furthermore, similar definitions apply to the average input constraint, taking an average over the messages, as in (\ref{eq:inputCaverage}). 
 Hence, the deterministic code capacities $\ALCavc$, $\ALCcompound$ and the random code capacities $\ALrCav$, $\ALrCcompound$ are defined accordingly. 

\subsection{In the Absence of Side Information } 
\label{sec:LNOsi}
In this subsection, we briefly review known results for the case where the state is not known to the encoder or the decoder, \ie SI is not available. For the sake of brevity, we skip the compound channel. Then, consider an AVC without SI, which we denote by $\avcig$.

\subsubsection{Without Constraints}
We begin with the case where there are no constraints, \ie
$\plimit=\cost_{max}$ and 
$\Lambda=l_{max}$.  
Then, the subscript `$\plimit,\Lambda$' in the capacity notation is not necessary, and thus omitted.

We cite the random code capacity theorem of the AVC without SI, free of constraints, which was first introduced  by Blackwell \etal \cite{BBT:60p}. Let 
\bieee
\label{eq:minimax}
\inC^{\rstarC}\hspace{-0.1cm}(\avcig)\triangleq \max_{p(x)}\min_{q(s)} I_q(X;Y)=\min_{q(s)}\max_{p(x)} I_q(X;Y) \;.
\eieee
\begin{theorem} \cite{BBT:60p} 
\label{theo:avcC0R}
The random code capacity of an AVC $\avcig$ without SI, free of constraints,  is  given by 
\bieee
\rCavig = \inC^{\rstarC}\hspace{-0.1cm}(\avcig) \;.
\eieee
\end{theorem}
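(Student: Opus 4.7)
My plan is to prove the equality in two halves (converse and achievability), relying on the compound channel interpretation for the upper bound and on random coding with a robust decoder for the lower bound, and then invoke the classical minimax identity to justify the two equivalent single-letter expressions in \eqref{eq:minimax}.

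For the converse, I would observe that any $(2^{nR},n,\eps)$ random code for the AVC $\avcig$ is in particular a random code for the compound channel in which the jammer is restricted to i.i.d.\ state distributions $q^n(s^n)=\prod_{i=1}^n q(s_i)$, $q\in\pSpace(\Sset)$. For each such fixed $q$, the state-averaged channel $W_q(y|x)=\sum_{s}q(s)\channel(y|x,s)$ is a memoryless DMC, whose random-code capacity collapses to its (ordinary) Shannon capacity $\max_{p(x)} I_q(X;Y)$. Averaging the error criterion \eqref{eq:Lrerr} over $\mu$ and applying Fano's inequality to the induced code on $W_q$ gives $R\leq \max_{p(x)} I_q(X;Y)+o(1)$. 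Since $q$ was arbitrary, $R\leq \min_{q(s)}\max_{p(x)} I_q(X;Y)$, which is the min--max form of $\inC^{\rstarC}\hspace{-0.1cm}(\avcig)$.

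For achievability, I would pick $p^{*}(x)$ attaining $\max_{p(x)}\min_{q(s)} I_q(X;Y)$, and draw a codebook $\{X^n(m)\}_{m=1}^{2^{nR}}$ i.i.d.\ according to $p^{*}$, with an independent uniform index on $\Gamma$ supplying the common randomness. For decoding I would use a \emph{universal}, state-oblivious rule (the maximum-empirical-mutual-information decoder of Csisz\'ar--K\"orner, which picks the $\hm$ maximizing $I(\emp_{x^n(\hm),y^n})$ based on joint types). The key step is to show that for \emph{every} state sequence $s^n$ with empirical type $\hq$, the average error probability over the random codebook is exponentially small whenever $R<I_{\hq}(X;Y)$, using standard method-of-types bounds on the probability that a wrong codeword is jointly typical with $y^n$. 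Since $I_{\hq}(X;Y)\geq \min_{q(s)} I_q(X;Y)=\inC^{\rstarC}\hspace{-0.1cm}(\avcig)$ for every type $\hq$, the bound is uniform over $s^n$ and satisfies \eqref{eq:Lrerr} even in the worst case $q^n(s^n)=\delta_{s^n}$. This yields $\rCavig\geq \max_{p(x)}\min_{q(s)} I_q(X;Y)$.

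Finally, the equality of the two extrema in \eqref{eq:minimax} is the classical minimax identity on the simplex of input distributions and state distributions: $I_q(X;Y)$ is concave in $p(x)$ for fixed $q(s)$ and convex in $q(s)$ for fixed $p(x)$ (the latter because, with $p(x)$ fixed, $Y$ is a linear mixture of the outputs of the channels $\channel(y|x,\cdot)$, so $H(Y)-H(Y|X,S)$ is convex in $q$), and both $\pSpace(\Xset)$ and $\pSpace(\Sset)$ are compact convex, so Sion's minimax theorem applies.

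The main obstacle is the achievability, specifically the uniformity over \emph{all} state sequences $s^n$: the codebook is fixed (up to the common randomness) before $s^n$ is chosen, so the decoder must be universal, and one cannot lean on the LLN for $S^n$. The cleanest way around this is to work entirely with joint types of $(x^n,s^n,y^n)$ so that the error exponent depends on $s^n$ only through $\hq$, giving the uniform bound via $\min_{q}I_q(X;Y)$.
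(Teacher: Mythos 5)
Note first that the paper does not prove Theorem~\ref{theo:avcC0R}: it is stated as a cited classical result of Blackwell \emph{et al.}, so there is no in-paper proof to compare against. Judged on its own, your argument is essentially the standard proof of this theorem and is sound. The converse (restrict the jammer to i.i.d.\ $q^n=\prod_i q(s_i)$, observe that for fixed $q$ the averaged channel $W_q(y|x)=\sum_s q(s)\channel(y|x,s)$ is a DMC for which common randomness does not help, apply Fano, then minimize over $q$) is exactly right. The achievability via an i.i.d.\ codebook under $p^*$ and a universal MMI decoder, with the error controlled through the joint type of $(x^n,s^n,y^n)$ so that the bound depends on $s^n$ only via its type $\hq$ and hence is uniform over all state sequences, is the correct way to handle the worst-case (point-mass) $q^n$; you rightly identify this uniformity as the crux. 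It is worth noting that this is a genuinely different route from the machinery the paper builds for its new results: there, achievability for AVCs is obtained by first coding for the compound channel (Lemma~\ref{lemm:ALCompoundPC}) and then applying Ahlswede's Robustification Technique with random permutations (Lemma~\ref{lemm:LRT}) to upgrade an i.i.d.-state guarantee to an arbitrary-$s^n$ guarantee; your MMI-decoder argument achieves the same uniformity directly, at the price of a more delicate type-based error analysis but without needing the permutation-based common randomness.

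One small imprecision: in justifying convexity of $q\mapsto I_q(X;Y)$ for fixed $p(x)$ you write the decomposition $H(Y)-H(Y|X,S)$, which is $I(X,S;Y)$ rather than $I_q(X;Y)=H(Y)-H(Y|X)$ with $H(Y|X)$ computed under $W_q$. The claim itself is correct, but the clean justification is the classical fact that $I(p,W)$ is convex in the channel $W$ for fixed input $p$ (a consequence of the log-sum inequality), combined with the affinity of $q\mapsto W_q$; together with concavity in $p$ and compact convex domains, Sion's theorem then gives the equality of the two extrema in (\ref{eq:minimax}) as you state.
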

We note that the expression in (\ref{eq:minimax}) has a game-theoretic minimax interpretation \cite{BBT:60p,BMM:85p,HST:89p,McEliece:83b}. 
%
Now, a well-known result by Ahlswede \cite{Ahlswede:78p} says that 
the deterministic code capacity $\Cavcig$ is characterized by the following dichotomy.  
\begin{theorem}[Ahlswede's Dichotomy] \cite{Ahlswede:78p} 
\label{theo:avcC0}
The capacity of an AVC $\avcig$ without SI, free of  constraints, either coincides with the random code capacity or else, it is zero.
That is, 
$
\Cavcig = \rCavig  
$ or else, 
$
 \Cavcig=0 
$. 
\end{theorem}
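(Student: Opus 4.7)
The plan is to prove the non-trivial direction: assuming $\Cavcig > 0$, show that $\Cavcig = \rCavig$. The reverse inequality $\Cavcig \leq \rCavig$ is immediate, since every deterministic code is a random code with distribution $\mu$ concentrated on a single realization. The argument follows Ahlswede's two-step elimination technique: first shrink the support of an optimal random code to polynomial size, then derandomize by prepending a short deterministic prefix that conveys the realization index.

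Step 1 (Random-code reduction via Hoeffding). Fix $R < \rCavig$ and, by Theorem~\ref{theo:avcC0R}, take a $(2^{nR},n,\varepsilon_n)$ random code $\gcode$ for $\avcig$ with $\varepsilon_n \triangleq n^{-1/4}$. Draw $K \triangleq n^{2}$ indices $\gamma_1,\ldots,\gamma_K$ i.i.d.\ from $\mu$ and form the reduced random code $\gcode'$ that places uniform weight on the realizations $\code_{\gamma_1},\ldots,\code_{\gamma_K}$. For every fixed $s^{n}\in\Sset^{n}$, the quantities $\cerr(\code_{\gamma_1}),\ldots,\cerr(\code_{\gamma_K})$ are i.i.d.\ $[0,1]$-valued with mean at most $\varepsilon_n$, so Hoeffding's inequality yields
\[
\Pr\Bigl\{\tfrac{1}{K}\sum_{k=1}^{K}\cerr(\code_{\gamma_k})>2\varepsilon_n\Bigr\}
\;\leq\; \exp(-2K\varepsilon_n^{2})\;=\;\exp(-2n^{3/2}).
\]
Since $|\Sset|^{n}\exp(-2n^{3/2})\to 0$, a union bound produces a realization of $\gcode'$ with $\tfrac{1}{K}\sum_{k=1}^{K}\cerr(\code_{\gamma_k}) \le 2\varepsilon_n$ for \emph{every} $s^{n}\in\Sset^n$; equivalently, $\err(q^{n},\gcode') \le 2\varepsilon_n$ for every $q^{n}\in\pSpace^n(\Sset^n)$. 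Fix such a realization.

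Step 2 (Prefix-based derandomization). Since $\Cavcig>0$, choose $R_{0}\in(0,\Cavcig)$ and a deterministic $(2^{n_1 R_0},n_1,\delta_n)$ prefix code for $\avcig$ of length $n_1\triangleq\lceil 2(\log n)/R_0\rceil$ with $\delta_n\to 0$; this suffices to encode any $k\in[1:K]$ since $\log K = 2\log n$. Split each message $m$ into $(k,m')$ with $k\in[1:K]$ and $m'\in[1:2^{nR}/K]$, and define a blocklength-$n$ deterministic code by transmitting $k$ via the prefix code over the first $n_1$ channel uses and applying $\code_{\gamma_k}$ (rescaled to blocklength $n-n_1$, which changes the rate negligibly) over the remainder. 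The decoder recovers $\hat k$ from $y^{n_1}$ and then decodes $\hat m'$ with $\dec_{\gamma_{\hat k}}$. Because $m$ is uniform, so is $k\in[1:K]$, hence the $k$-averaged inner error equals $\tfrac{1}{K}\sum_{k=1}^{K}\cerr(\code_{\gamma_k}) \le 2\varepsilon_n$ for every tail state sequence; a union bound gives overall error at most $\delta_n+2\varepsilon_n \to 0$ uniformly in $s^n$. The rate loss $n_1/n = O(\log n/n) \to 0$ implies that every $R < \rCavig$ is achievable, yielding $\Cavcig \ge \rCavig$.

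The main obstacle is the simultaneous-over-$s^n$ requirement in Step 1: the reduced random code must be good against \emph{every} deterministic state sequence, not only against a fixed adversarial strategy. The Hoeffding exponent $K\varepsilon_n^{2}$ must therefore dominate the combinatorial penalty $n\log|\Sset|$ coming from the union bound, which forces $K$ to be polynomial and $\varepsilon_n$ to decay only polynomially; the choices $K=n^{2}$ and $\varepsilon_n=n^{-1/4}$ are convenient but by no means unique. A more conceptual point is that the positivity assumption $\Cavcig>0$ is used only to ship the logarithmically-long index $k$, which is precisely why the dichotomy is sharp rather than graded.
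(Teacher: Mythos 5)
Your proof is correct and follows essentially the same route as the paper: Theorem~\ref{theo:avcC0} is only cited from Ahlswede here, but the paper's own dichotomy arguments (Lemma~\ref{lemm:LcorrSizeC}, Lemma~\ref{lemm:BcorrSizeC}, Theorem~\ref{theo:BcorrTOdetC}, and part 2 of Theorem~\ref{theo:ALavcCstateC}) use precisely your two-step elimination technique---a concentration bound (Bernstein's trick there, Hoeffding here) to shrink the random code to $n^2$ realizations uniformly over all $s^n$, followed by a short positive-rate deterministic prefix carrying the realization index. The one detail to tidy in Step~2 is the message bookkeeping: partitioning $[1:2^{nR}]$ into $(k,m')$ makes the $m'$-average run over a strict subset of the inner code's message set (so it is not controlled by $\cerr(\code_{\gamma_k})$); instead treat $k$ as an additional message component on top of the inner code's full message set, so the concatenated code's average error is exactly $\delta_n+\tfrac{1}{K}\sum_{k}\cerr(\code_{\gamma_k})$ and the rate loss remains $O(\log n/n)$.
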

A necessary and sufficient condition for a positive capacity was 
 established by Ericson \cite{Ericson:85p} and Csisz{\'{a}}r and Narayan   \cite{CsiszarNarayan:88p}, in terms of the following definition.
\begin{definition}
\label{def:symmetrizable}
 A state-dependent DMC $\channel$ is said to be \emph{symmetrizable} if for some conditional distribution $J(s|x)$,
\begin{align}
\label{eq:symmetrizable}
\sum_{s\in\Sset} \channel(y|x_1,s)J(s|x_2)=\sum_{s\in\Sset} \channel&(y|x_2,s)J(s|x_1) \,,\; \nonumber\\
&\forall\, x_1,x_2\in\Xset \,,\; y\in\Yset \;.
\end{align}
Equivalently, the channel $\widetilde{W}(y|x_1,x_2)$ $=$ $
\sum_{s\in\Sset} \channel(y|x_1,s)J(s|x_2)$ is symmetric, \ie $\widetilde{W}(y|x_1,x_2)=\widetilde{W}(y|x_2,x_1)$, for all $x_1,x_2\in\Xset$ and $y\in\Yset$. We say that such a  $J:\Xset\rightarrow\Sset$ symmetrizes $\channel$. 
  We say that the AVC $\avcig$ 
 is symmetrizable if the corresponding state-dependent DMC $\channel$ is symmetrizable.  
\end{definition}
 %
\begin{theorem} \cite{Ericson:85p,CsiszarNarayan:88p}
\label{theo:symm0}
An AVC $\avcig$ without SI, free of constraints, has a positive capacity $\Cavcig>0$ if and only if it is \emph{not} symmetrizable. 
\end{theorem}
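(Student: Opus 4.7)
The plan is to prove the two directions separately: the \emph{only if} direction by a short symmetrization argument against deterministic codes, and the \emph{if} direction by a random coding argument with a specially tailored decoder.

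For the \emph{only if} direction, suppose $\channel$ is symmetrized by some $J(s|x)$, and let $\code$ be any deterministic code with at least two messages. Fix $m_1\neq m_2$ and consider the mixed jammer strategy
\begin{align*}
q^n(s^n)=\tfrac12\prod_{i=1}^n J(s_i|x_i(m_1))+\tfrac12\prod_{i=1}^n J(s_i|x_i(m_2)).
\end{align*}
Iterating (\ref{eq:symmetrizable}) coordinatewise shows that, under $q^n$, the distribution of $Y^n$ is identical whether the transmitted message is $m_1$ or $m_2$. Hence any decoder errs with probability at least $1/2$ on one of these two messages, precluding $R>0$, so $\Cavcig=0$.

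For the \emph{if} direction, I would follow the approach of Csisz{\'a}r and Narayan. The starting observation is that non-symmetrizability already rules out symmetrizations where $J(s|x)=q(s)$ is constant in $x$, so no $q$ makes $\sum_s q(s)\channel(y|x,s)$ independent of $x$; by compactness, the uniform input pmf then satisfies $\min_{q\in\pSpace(\Sset)} I_q(X;Y)>0$. Fixing such an input pmf $p$, generate $2^{nR}$ codewords i.i.d.\ from $p^n$ and use the following tailored decoder: declare $\hat{m}$ if there exists $s^n$ making $(x^n(\hat{m}),s^n,y^n)$ jointly typical with $p(x)q(s)\channel(y|x,s)$ for some $q$, and there is no competing pair $(m',\tilde{s}^n)$ for which the analogous joint type would be consistent with a symmetrization of $\channel$. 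Analyzing the random codebook via the method of types, non-symmetrizability yields an exponentially decaying upper bound on the probability of such a confusion event, uniformly over all state sequences $s^n$; together with the standard typicality analysis and an expurgation step this produces a deterministic code with vanishing maximal error at some $R>0$, so $\Cavcig>0$, and dichotomy (Theorem \ref{theo:avcC0}) upgrades this to $\Cavcig=\rCavig$.

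The main obstacle lies in the achievability proof, and specifically in the decoder construction. A plain joint-typicality decoder fails because a symmetrizing-like jammer can make two codewords equally plausible given $y^n$. The technical heart of the proof is thus (i) formulating a decoding rule that exploits non-symmetrizability as a tie-breaker, and (ii) bounding the resulting confusion probability uniformly in $q^n(s^n)$ via a union bound over the polynomially many joint types. This uniform method-of-types estimate is the delicate step in the Csisz{\'a}r--Narayan argument.
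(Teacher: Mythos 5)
The paper does not prove this theorem; it is quoted from Ericson and Csisz\'ar--Narayan, so your sketch can only be judged on its own terms. Your achievability direction is a fair outline of the Csisz\'ar--Narayan argument: non-symmetrizability rules out constant symmetrizers $J(s|x)=q(s)$, giving $\min_{q}I_q(X;Y)>0$ for a full-support input (this is exactly Lemma~\ref{lemm:RPencsPos}), and the rest rides on the type-based decoder with a non-symmetrizability tie-breaker. You correctly identify that the decoder definition and the uniform confusion bound are the substance of that proof, but they remain gestured at rather than carried out.

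The converse direction, however, contains a concrete error. Write $\widetilde W(y|x,x')=\sum_{s}\channel(y|x,s)J(s|x')$ and $A_{ab}(y^n)=\prod_{i=1}^n\widetilde W\big(y_i\,\big|\,x_i(m_a),x_i(m_b)\big)$. Under your mixture $q^n$, the output law is $\tfrac12 A_{11}+\tfrac12 A_{12}$ when $m_1$ is sent and $\tfrac12 A_{21}+\tfrac12 A_{22}$ when $m_2$ is sent; symmetrizability gives $A_{12}=A_{21}$ but says nothing about $A_{11}$ versus $A_{22}$, so the two laws are \emph{not} identical in general. What survives is that the sub-scenarios ``send $m_1$, jam with $J(\cdot|x^n(m_2))$'' and ``send $m_2$, jam with $J(\cdot|x^n(m_1))$'' are indistinguishable, which forces the error probabilities on these two messages to satisfy $e(m_1)+e(m_2)\ge\tfrac12$ under $q^n$. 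That suffices against a \emph{maximal}-error criterion, but the capacity here is defined through the message-averaged error (\ref{eq:cerr}), so your strategy only yields $\err\ge 2^{-nR}/2$, which vanishes with $n$ and does not preclude $R>0$. The standard repair is to let the jammer mix uniformly over \emph{all} codewords, $q^n(s^n)=2^{-nR}\sum_{m'}\prod_{i=1}^n J(s_i|x_i(m'))$; pairing the scenarios $(m,m')$ and $(m',m)$ over all $m\neq m'$ then gives $\err\ge\frac{2^{nR}-1}{2\cdot 2^{nR}}$, a constant, which completes the converse.
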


\subsubsection{Under Constraints}
%
Csisz{\'{a}}r and Narayan addressed the AVC $\avcig$ without SI under constraints in
 \cite{CsiszarNarayan:88p1} and \cite{CsiszarNarayan:88p}. The focus here is on the case of per message input constraint, although their results apply to the average case as well.
Let
\bieee
\LrICavig &\triangleq& 
 \min_{q(s)\in\pLSpaceS}
\; \max_{p(x)\,:\; \E\, \cost(X)\leq\plimit} I_q(X;Y) \;,
\label{eq:Lminimax}
\eieee   
where $\pLSpaceS$ is  defined in (\ref{eq:aplspaceS}).

\begin{theorem} \cite{CsiszarNarayan:88p1}
\label{theo:LavcC0R}
The random code capacity of an AVC $\avcig$ without SI, under per message
 input constraint $\plimit$ and state constraint $\Lambda$,  is  given by 
\bieee
\LrCavig = \LrICavig \;.
\eieee
\end{theorem}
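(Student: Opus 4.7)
The plan is to prove Theorem~\ref{theo:LavcC0R} in two directions: a converse upper-bounding any achievable rate by $\LrICavig$, and an achievability constructing random codes whose rate approaches this value from below. The unifying observation is that, under random codes, the AVC with state constraint $\Lambda$ is essentially equivalent to a compound channel whose state-distribution family is $\pLSpaceS$, so the problem reduces to a minimax saddle-point over input and state distributions.

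For the converse, I would fix an arbitrary $q^{*}\in\pLSpaceS$ with $\E_{q^{*}}\, l(S)<\Lambda$ strictly, and consider the i.i.d.\ state law $q^{*n}(s^n)=\prod_{i=1}^{n} q^{*}(s_i)$. By the weak law of large numbers, $\Pr(l^n(S^n)>\Lambda)\to 0$, so $q^{*n}$ conditioned on $\{l^n(S^n)\leq\Lambda\}$ belongs to $\pLSpaceSn$ and differs from $q^{*n}$ in total variation by $o(1)$. Under this state law the AVC acts as a DMC with transition $W_{q^{*}}(y|x)=\sum_s \channel(y|x,s)\,q^{*}(s)$. Applying Fano's inequality to the random code, single-letterizing $I(M;Y^n)$ in the standard way, and converting the per-message input constraint into a single-letter constraint $\E\,\cost(X)\leq\plimit$ via a uniform time-sharing variable yields $R\leq\max_{p(x):\,\E\,\cost(X)\leq\plimit} I_{q^{*}}(X;Y)+\eps_n$ with $\eps_n\to 0$. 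Since $q^{*}$ is arbitrary in the interior of $\pLSpaceS$ and the inner maximum is continuous in $q$, minimizing gives $R\leq\LrICavig$.

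For the achievability, I would first invoke Sion's minimax theorem to write $\LrICavig = \max_{p(x):\,\E\,\cost(X)\leq\plimit}\,\min_{q\in\pLSpaceS} I_q(X;Y)$; the hypotheses hold since $\pLSpaceS$ and the input-constraint set are compact and convex, $I_q(X;Y)$ is concave in $p$ for fixed $q$, and for fixed $p$ it is convex in $q$ because $W_q$ depends affinely on $q$. Let $p^{*}$ achieve the outer maximum. I would then draw a random codebook with $X^n(m)$ i.i.d.\ $\sim p^{*}$, expurgating codewords that violate $\cost^n(x^n)\leq\plimit+\delta$ (a $1-o(1)$ fraction survive by concentration). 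Decoding uses a universal maximum-mutual-information rule, so no knowledge of the state sequence is needed at the decoder. Since any $s^n$ with $l^n(s^n)\leq\Lambda$ has empirical type in $\pLSpaceS$, the induced conditional channel behaves like the DMC indexed by that type, and a method-of-types argument bounds the ensemble-averaged error uniformly over such $s^n$ whenever $R<\min_{q\in\pLSpaceS} I_q(X;Y)$ evaluated at $X\sim p^{*}$. Common randomness then integrates this bound across the code ensemble to meet the random-code error criterion in \eqref{eq:Lrerr}.

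The principal obstacle is establishing universality of the decoder over the continuous family $\pLSpaceS$: the same decoding rule must drive the error probability to zero simultaneously for every jamming sequence meeting the state constraint, not only for a single known $q$. This is typically handled by MMI decoding together with a discretization of $\pLSpaceS$ of polynomial cardinality in $n$, absorbed by a union bound via the method of types. A secondary difficulty is reconciling the expurgation step, the per-message input-cost bookkeeping, and the uniformity over state types simultaneously; once these ingredients are aligned and the minimax identity is in hand, the converse and achievability match at $\LrICavig$.
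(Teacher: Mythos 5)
This theorem is quoted from Csisz\'{a}r and Narayan \cite{CsiszarNarayan:88p1}; the paper itself gives no proof of it, so there is nothing internal to compare against line by line. Your proposal is nonetheless correct in outline and is essentially the standard argument for this result, and it also mirrors the techniques this paper uses for its own causal-SI analogues: the converse by fixing an interior $q^*\in\pLSpaceS$, conditioning the i.i.d.\ law on the constraint event $\{l^n(S^n)\leq\Lambda\}$ (which costs only $o(1)$ in total variation), and invoking the single-channel converse, is exactly the device used in the upper-bound parts of Lemma~\ref{lemm:ALCompoundPC} and Theorem~\ref{theo:ALavcCr}; and the achievability via an i.i.d.\ codebook, the observation that any admissible $s^n$ has empirical type in $\pLSpaceS$, and a union bound over the polynomially many state types with a universal (MMI or per-type typicality) decoder is the same mechanism as in the proof of Lemma~\ref{lemm:ALCompoundPC}. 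The two points you should make explicit in a full write-up are (i) the Sion minimax swap that reconciles your achievable $\max_p\min_q$ expression with the $\min_q\max_p$ form of $\LrICavig$ (valid since $I_q(X;Y)$ is concave in $p$ and, through the affine dependence of $W_q$ on $q$, convex in $q$ over compact convex sets), and (ii) the routine $\delta$-bookkeeping so that expurgation yields cost exactly $\plimit$ rather than $\plimit+\delta$; neither is a gap.
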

 %

As for the deterministic code capacity, dichotomy in the 
classical notion of \cite{Ahlswede:78p}
 no longer holds when a state constraint $\Lambda<l_{max}$ is imposed on the jammer  \cite{CsiszarNarayan:88p}.
 That is, the capacity of the AVC $\avcig$ can be strictly lower than the random code capacity, and yet non-zero.

For every $p\in\pSpace(\Xset)$ with $\E \cost(X)\leq\plimit$, let
\bieee
\LambdaOig(p)=\min\, \sum_{x\in\Xset}\sum_{s\in\Sset} p(x)J(s|x)l(s) \;,
\eieee
where the minimization is over all conditional distributions $J(s|x)$ that symmetrize $\channel$ (see Definition~\ref{def:symmetrizable}). We use the convention that a minimum value over an empty set is $+\infty$. 
Assume that $\max\limits_{p(x)\,:\; \E \cost(X)\leq\plimit} \LambdaOig(p)\neq \Lambda$. 

Then, define $\LICavcig$ as follows, 
\begin{subequations}
\label{eq:LICavcig}
\begin{align}
\LICavcig&\triangleq 0 \,,\;\text{if $\max_{p(x)\,:\; \E \cost(X)\leq\plimit} \LambdaOig(p)< \Lambda$}\;,
\intertext{and}
\LICavcig&\triangleq \max_{p(x)\,:\; \E\,\cost(X)\leq\plimit \,,\; \LambdaOig(p)\geq \Lambda} \,\min_{q(s)\,:\;\E_q\, l(S)\leq\Lambda} I_q(X;Y)>0 \,,\; 
\nonumber\\
&\text{if $\max_{p(x)\,:\; \E \cost(X)\leq\plimit} \LambdaOig(p)> \Lambda$}\;.
\end{align}
\end{subequations}
\begin{theorem}\cite{CsiszarNarayan:88p1}
\label{theo:LavcC0stateC}
The capacity of an AVC $\avcig$ without SI, under per message
 input constraint $\plimit$ and state constraint $\Lambda$, is given by
\bieee
\LCavcig=\LICavcig \,,\;\text{if $\max_{p(x)\,:\; \E \cost(X)\leq\plimit} \LambdaOig(p)\neq \Lambda$}\;.
\eieee
In particular, if $\avcig$ is non-symmetrizable,
$
\LCavcig=\LrICavig 
$. 
\end{theorem}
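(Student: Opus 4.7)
The plan is to establish the theorem by combining a converse based on symmetrizability with a constructive achievability via the method of types, treating the two regimes of the hypothesis separately. The upper bound $\LCavcig \leq \LrCavig = \LrICavig$ is automatic since any deterministic code is a degenerate random code (Theorem~\ref{theo:LavcC0R}), so the task is to match this from below when $\max_p \LambdaOig(p) > \Lambda$ and, symmetrically, to show that the capacity collapses to zero when $\max_p \LambdaOig(p) < \Lambda$.

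For the degenerate regime $\max_{p:\E\cost(X)\leq\plimit}\LambdaOig(p)<\Lambda$, I would adapt Ericson's symmetrizability argument with cost tracking. Given any code $\code$ satisfying \eqref{eq:inputCstrict}, the average empirical type $p$ of the codewords obeys $\E_p\cost(X)\leq\plimit$, and by hypothesis there exists a symmetrizer $J(\cdot|\cdot)$ of $\channel$ with expected state cost strictly below $\Lambda$. The jammer then draws $M'$ uniformly from $[1:2^{nR}]$ and generates $S_i\sim J(\cdot|x_i(M'))$ independently across $i$. By the law of large numbers, $l^n(S^n)\leq\Lambda$ with high probability, so the induced $q^n\in\pLSpaceSn$. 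The symmetrizing identity \eqref{eq:symmetrizable} then renders $(M,M')$ exchangeable in the decoder's view, forcing the average error probability bounded away from zero whenever $R>0$; hence $\LCavcig=0$.

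For the non-degenerate regime $\max_p\LambdaOig(p)>\Lambda$, fix $p$ attaining the outer maximum in the definition of $\LICavcig$, so that $\LambdaOig(p)>\Lambda$ strictly. Generate a random codebook of rate $R$ slightly below $\min_{q\in\pLSpaceS}I_q(X;Y)$, with codewords drawn uniformly from a single constant-composition class of type $p$; this enforces the input constraint codeword-by-codeword. Use a universal decoder of the minimum-conditional-entropy type, restricted so that the decoded message $m$ is consistent with some $s^n$ satisfying $l^n(s^n)\leq\Lambda$ and such that no other codeword explains $y^n$ with smaller empirical conditional entropy $H(Y|X)$. The crucial step is to show that pairwise confusion is doubly-exponentially rare: for every $s^n$ with $l^n(s^n)\leq\Lambda$ and distinct $m,m'$, the standard packing/covering machinery for constant-composition codes bounds the probability that $x^n(m')$ explains $y^n$ at least as well as $x^n(m)$, except when the joint type of $(x^n(m),x^n(m'),s^n)$ is ``symmetric'' in a certain sense. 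This excluded configuration is precisely what $\LambdaOig(p)>\Lambda$ rules out.

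A final expurgation step converts the random construction into a deterministic $(2^{nR},n,\eps)$ code meeting \eqref{eq:inputCstrict}. The ``in particular'' clause follows because non-symmetrizability of $\avcig$ leaves the feasibility set in the definition of $\LambdaOig(p)$ empty, so by the stated convention $\LambdaOig(p)=+\infty>\Lambda$ for every admissible $p$, placing us automatically in the achievable regime and yielding $\LCavcig=\LrICavig$. The principal obstacle is the universal decoder analysis: the error must vanish for every admissible $s^n$ simultaneously, not merely in expectation against a fixed $q^n$. This requires the sharp quantitative form of non-symmetrizability encoded in $\LambdaOig(p)>\Lambda$, combined with a deviation bound robust enough to absorb a polynomial union bound over joint types --- precisely the content of Csisz\'ar and Narayan's delicate ``$\alpha$-decoder'' argument in \cite{CsiszarNarayan:88p}.
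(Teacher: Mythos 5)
This theorem is not proved in the paper at all --- it is quoted verbatim from Csisz\'ar and Narayan \cite{CsiszarNarayan:88p1,CsiszarNarayan:88p} as background, so your sketch can only be measured against their argument. Your achievability outline (constant-composition codebook of a type $p$ with $\LambdaOig(p)>\Lambda$, the minimum-conditional-entropy/``$\alpha$-decoder'' with a union bound over joint types, and expurgation) is indeed the right shape, and your treatment of the degenerate regime and of the ``in particular'' clause is essentially correct, modulo one technicality: generating $S_i\sim J(\cdot|x_i(M'))$ i.i.d.\ only gives $l^n(S^n)\leq\Lambda$ \emph{with high probability}, whereas membership in $\pLSpaceSn$ requires it almost surely, so you must pass to the conditional law given $\{l^n(S^n)\leq\Lambda\}$ (a small total-variation correction, legitimate because $\LambdaOig(p)<\Lambda$ strictly).

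The genuine gap is in the converse for the positive regime. You assert that the upper bound $\LCavcig\leq\LrCavig=\LrICavig$ is ``automatic'' and that the task is ``to match this from below.'' But the theorem claims $\LCavcig=\LICavcig$, and $\LICavcig$ is in general \emph{strictly smaller} than $\LrICavig$: the maximization in (\ref{eq:LICavcig}) is restricted to types with $\LambdaOig(p)\geq\Lambda$ and the max and min appear in the opposite order. This strict gap is precisely the phenomenon the surrounding text emphasizes (dichotomy fails under a state constraint; the deterministic capacity can be positive yet below the random code capacity). Your construction achieves only $\LICavcig$, so with the upper bound $\LrICavig$ you would be left with $\LICavcig\leq\LCavcig\leq\LrICavig$ and no determination of the capacity whenever the two differ. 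What is missing is the second half of Csisz\'ar and Narayan's converse: for any reliable code, the dominant codeword type $p$ must satisfy $\LambdaOig(p)\geq\Lambda$ --- otherwise the jammer runs the same symmetrizing attack you describe for the degenerate regime, tailored to that type, within the state budget --- and for such a surviving type the rate is bounded by $\min_{q\in\pLSpaceS}I_q(X;Y)$ via the standard memoryless converse, yielding exactly $\LICavcig$. Without this per-type elimination step the proof does not establish the stated equality.
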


\subsection{In The Presence of Side Information}
In this subsection, we briefly review known results for the case where the state is known to the encoder, and no constraints are imposed.
The compound channel and the AVC with non-causal SI, free of constraints, were addressed by Ahlswede in \cite{Ahlswede:86p}.

The AVC with causal SI, free of constraints, was addressed in the problem set of the book by Csisz{\'a}r and K{\"o}rner
\cite[Problem 12.18, part (b)]{CsiszarKorner:82b}. The corresponding results are stated below.
Let
\bieee 
\label{eq:cvCIcoro}
\inC^{\rstarC}\hspace{-0.1cm}(\avc) \triangleq
\min_{q\in\pSpace(\Sset)} \max_{p(u),\encs(u,s)}  I_q(U;Y)   \;,
\eieee 
subject to $X=\encs(U,S)$, where $U$ is an auxiliary random variable, independent of $S$, and the maximization is over the pmf $p(u)$ and the set of all functions $\encs:\Uset\times\Sset\rightarrow\Xset$. 
\begin{theorem} \cite{CsiszarKorner:82b}
\label{theo:avcCr}
The random code capacity of the AVC $\avc$ with causal SI available at the encoder, free of constraints, is given by 
\bieee
\label{eq:avcC} 
\rCav&=& \inC^{\rstarC}\hspace{-0.1cm}(\avc)  \;.
\eieee
\end{theorem}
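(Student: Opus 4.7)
The plan is to reduce the AVC with causal SI to an AVC without SI via Shannon strategies (Definition~\ref{def:StratCode}) and invoke Theorem~\ref{theo:avcC0R} for achievability; for the converse, restrict the jammer to i.i.d.\ states and use Shannon's single-letter capacity formula for a DMC with causal SI at the encoder. A minimax exchange then matches the two bounds.

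For achievability, fix $p(u)\in\pSpace(\Uset)$ and $\encs:\Uset\times\Sset\to\Xset$, and define the auxiliary state-dependent DMC $\Uchannel(y\mid u,s)\triangleq \channel(y\mid \encs(u,s),s)$, viewed as an AVC $\Uavc=\{\Uchannel\}$ \emph{without} SI, with input alphabet $\Uset$ and state alphabet $\Sset$. By Theorem~\ref{theo:avcC0R} applied to $\Uavc$, for every $R<\max_{p(u)}\min_{q(s)}I_q(U;Y)$ there exists a random code $\{(u^n_\gamma(\cdot),\dec_\gamma(\cdot))\}_{\gamma\in\Gamma}$ over $\Uset^n$ that is reliable against every $q^n\in\pSpace^n(\Sset^n)$. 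Setting $X_i=\encs(U_i,S_i)$ at the encoder converts this into a random Shannon-strategy code for the original AVC $\avc$ with causal SI, with the same error probability. Taking the supremum over $(p(u),\encs)$ yields
\begin{equation}
\rCav \;\ge\; \sup_{p(u),\,\encs}\;\min_{q(s)}\; I_q(U;Y).
\end{equation}
For the converse, let $\gcode$ be a $(2^{nR},n,\eps)$ random code for $\avc$ and fix any $q\in\pSpace(\Sset)$. Since $q^n(s^n)=\prod_{i=1}^n q(s_i)$ is an admissible jammer strategy, the channel averaged over the common randomness reduces to the memoryless DMC $\channel$ with i.i.d.\ state $S\sim q$ known causally at the encoder. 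Shannon's classical capacity formula for this situation, together with Fano's inequality applied inside the expectation over $\gamma$, gives $R\le \max_{p(u),\,\encs} I_q(U;Y)+\eta_n$ with $\eta_n\to 0$. Minimizing over $q$,
\begin{equation}
\rCav \;\le\; \min_{q(s)}\;\max_{p(u),\,\encs}\; I_q(U;Y).
\end{equation}

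The two bounds are reconciled via Sion's minimax theorem. The set $\pSpace(\Sset)$ is compact and convex, and the pairs $(p(u),\encs)$ embed as stochastic kernels $P_{U,X\mid S}(u,x\mid s)=p(u)\,\dsOne\{x=\encs(u,s)\}$; enlarging the user's feasible set to arbitrary stochastic kernels $P_{X\mid U,S}$ does not change the value, since the deterministic $\encs$'s realize the relevant extreme points, and it restores the convex–concave structure of $I_q(U;Y)$ (concave in the kernel for each fixed $q$, continuous and linear in $q$ for each fixed kernel). I expect the main obstacle to be carrying out this minimax exchange carefully — specifically, verifying the requisite concavity and continuity when the pmf $p(u)$ and the function $\encs$ are optimized \emph{jointly}, and confirming that after the exchange the supremum is attained (or at least well-approximated) by a deterministic $\encs$ on a finite $\Uset$, so that the lower bound matches the expression $\inC^{\rstarC}\hspace{-0.1cm}(\avc)$ defined in the statement. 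Once the exchange is justified, both bounds collapse to $\inC^{\rstarC}\hspace{-0.1cm}(\avc)$ and the theorem follows.
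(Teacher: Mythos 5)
Your overall architecture is sound, and it is genuinely different from the route this manuscript takes. The paper does not prove Theorem~\ref{theo:avcCr} directly (it is quoted from Csisz\'ar--K\"orner); the closest in-house argument is Theorem~\ref{theo:ALavcCr}, specialized to $\plimit=\cost_{max}$, $\Lambda=l_{max}$. There, achievability goes through the compound channel (Lemma~\ref{lemm:ALCompoundPC}: random Shannon-strategy codebooks with joint-typicality decoding over a polynomial net of state types) followed by Ahlswede's Robustification Technique, with the random permutations applied to the \emph{strategy sequence} $u^n$ so as not to violate causality. You instead view $\Uavc=\{\Uchannel\}$ as an AVC without SI and invoke Theorem~\ref{theo:avcC0R}; the reduction is valid, since composing a random code for $\Uavc$ with $x_i=\encs(u_i,s_i)$ preserves the per-$q^n$ error probability exactly. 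Your route is more elementary for the unconstrained case, but it leans on the full strength of the Blackwell--Breiman--Thomasian theorem and does not extend to the constrained setting that the paper actually needs (nor does it produce the compound-channel/robustification machinery reused later for the deterministic-code results). Your converse (restrict the jammer to i.i.d.\ states, apply Shannon's causal-SI formula, minimize over $q$) coincides with the paper's.

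The one genuine gap is in your minimax step. The claim that $I_q(U;Y)$ is ``concave in the kernel'' $P_{U,X\mid S}$ is false: mutual information is concave in the input law for a \emph{fixed} channel and convex in the channel for a \emph{fixed} input law, but under your parametrization both the marginal of $U$ and the induced channel $P^q_{Y\mid U}$ vary with the kernel, so joint concavity fails and Sion's theorem does not apply as stated. The standard repair is to canonicalize the strategy alphabet: take $\Uset_0$ to be the set of all maps $\Sset\to\Xset$ with $\encs_0(u,s)=u(s)$. Any pair $(p',\encs')$ on another alphabet induces, via $u'\mapsto\encs'(u',\cdot)$, a pmf $p$ on $\Uset_0$ with the same value of $I_q(U;Y)$, because $Y$ depends on $U'$ only through the induced map; hence the joint maximization over $(p,\encs)$ equals a maximization over $p\in\pSpace(\Uset_0)$ alone. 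With $\encs$ thus frozen, $P^q_{Y\mid U}$ is determined by $q$ (affinely), so $I_q(U;Y)$ is concave in $p$ and convex in $q$ on compact convex domains, and the exchange $\max_{p}\min_q=\min_q\max_p$ follows from the standard minimax theorem, matching your two bounds to $\inC^{\rstarC}\hspace{-0.1cm}(\avc)$. With that substitution your proof is complete.
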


\begin{theorem}\cite{CsiszarKorner:82b}
\label{theo:corrTOdetC}
The capacity of an AVC $\avc$ with causal SI at the encoder, free of constraints,  either coincides with the random code capacity or else, it is zero.
That is, 
$
 \Cavc = \rCav 
$ 
or else, 
$
\Cavc=0 
$. 
\end{theorem}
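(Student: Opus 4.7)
My plan is to adapt Ahlswede's Elimination Technique \cite{Ahlswede:78p,Ahlswede:86p} to the causal-SI setting, with Shannon strategy codes (Definition~\ref{def:StratCode}) playing the role of ordinary codes throughout. The inequality $\Cavc \leq \rCav$ is immediate from the definitions (a deterministic code is a degenerate random code), so I focus on proving that $\Cavc > 0$ implies $\Cavc \geq \rCav$.

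\textbf{Random-code reduction.} Fix $R < \rCav$ and $\eps > 0$. By Theorem~\ref{theo:avcCr}, there exists a $(2^{nR},n,\eps/2)$ random code $\gcode = (\mu,\Gamma,\{\code_\gamma\})$ with $\sum_\gamma \mu(\gamma)\cerr(\code_\gamma) \leq \eps/2$ for every $s^n \in \Sset^n$. Draw $N = n^2$ i.i.d.\ realizations $\gamma_1,\ldots,\gamma_N$ of $\mu$. For each fixed $s^n$, Hoeffding's inequality yields $\Pr\bigl(\tfrac{1}{N}\sum_{j=1}^N \cerr(\code_{\gamma_j}) > \eps\bigr) \leq e^{-cN}$ for some $c>0$; a union bound over the $|\Sset|^n$ possible state sequences shows that for all sufficiently large $n$ a single realization $\{\code_1,\ldots,\code_N\}$ works simultaneously for every $s^n$, so uniform randomization over $[1:N]$ retains worst-case error at most $\eps$.

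\textbf{Key transmission and concatenation.} Since $\Cavc > 0$, there exists a causal deterministic code $\code^{(0)}$ over $\avc$ at some fixed rate $R_0 > 0$ with vanishing worst-case error. With blocklength $n_0 = \lceil 2\log n / R_0 \rceil$, the code $\code^{(0)}$ can carry the $\lceil \log N \rceil \leq 2\log n$ bits needed to specify the key $\gamma$. Relabel the message set as $[1:N] \times [1:2^{nR}/N]$; on message $(\gamma,m')$ the encoder transmits $\gamma$ via $\code^{(0)}$ in the first $n_0$ channel uses and transmits $m'$ via $\code_\gamma$ in the next $n$ uses. The decoder recovers $\widehat{\gamma}$ from the preamble and applies $\dec_{\widehat{\gamma}}$ to the second block. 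Under the average-error criterion (\ref{eq:cerr}), the key $\gamma$ is uniform in the induced distribution over messages, so the main-block error equals $\frac{1}{N}\sum_\gamma \cerr(\code_\gamma) \leq \eps$; adding the preamble error, the overall error is at most $2\eps$, while the overall rate $(nR)/(n+n_0)$ tends to $R$. Letting $R \uparrow \rCav$ completes the argument.

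\textbf{Main obstacle.} The principal care point is preserving causality throughout. Both $\code_\gamma$ (a Shannon strategy code supplied by Theorem~\ref{theo:avcCr}) and the preamble $\code^{(0)}$ are causal by construction, so the concatenated scheme remains causal. The more delicate step is the elimination bound itself: one must verify that the Hoeffding concentration is uniform in $s^n$ rather than only in its type, which here follows immediately since $\cerr(\code_\gamma)$ is defined pointwise in $s^n$ and the union bound ranges over all sequences in $\Sset^n$. Beyond this adaptation, the argument parallels the classical dichotomy proof, so I expect no deeper technical difficulty beyond carefully substituting Shannon strategies for channel inputs in the usual bookkeeping.
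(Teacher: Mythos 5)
Your approach---Ahlswede's Elimination Technique adapted to causal SI---is exactly the route the paper takes for the analogous dichotomy results it proves in full: a polynomial reduction of the code collection (as in Lemma~\ref{lemm:LcorrSizeC} and Lemma~\ref{lemm:BcorrSizeC}, where the paper uses Bernstein's trick in place of your Hoeffding bound, with the same union bound over all $s^n\in\Sset^n$), followed by transmission of the key over a negligible-length prefix enabled by $\Cavc>0$, and concatenation (see the proofs of part 2 of Theorem~\ref{theo:ALavcCstateC} and of Theorem~\ref{theo:BcorrTOdetC}). Your observation that causality survives because both the prefix code and each $\code_\gamma$ are themselves causal codes is also the point the paper emphasizes.

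One step as written does not deliver the claimed bound: you shrink the main-block message set to $[1:2^{nR}/N]$, but $\cerr(\code_\gamma)$ in (\ref{eq:cerr}) is an average over the \emph{full} set $[1:2^{nR}]$, and the average of $P_{e|m,s^n}^{(n)}$ over a fraction $1/N$ of the messages is only controlled by $N\cdot \cerr(\code_\gamma)=n^2\,\cerr(\code_\gamma)$, which is useless; the codes $\code_\gamma$ could in principle concentrate their errors on the retained messages. The standard fix, which is what the paper implicitly does, is to keep the full product message set $[1:N]\times[1:2^{nR}]$, so that the main-block average error is exactly $\frac{1}{N}\sum_{\gamma}\cerr(\code_\gamma)\leq\eps$ for every $s^n$, while the rate $(nR+\log N)/(n+n_0)$ still tends to $R$. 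With that one-line correction the argument is complete.
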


This completes our review of previous work, where SI and constraints were considered in separate. Next, we give our results, concerning the combined setting, where SI is available and constraints are imposed.

\section{Results}
\subsection{The Compound Channel with Causal SI}   
\label{sec:Lcompound}
We present a lower bound on the capacity of the compound channel with causal SI, under per message 
 input  constraint $\plimit$, taking the set of state distributions to be $\Qset=\pLSpaceS$.
 For a given mapping $\encs:\Uset\times\Sset\rightarrow\Xset$,
let
\begin{align}
& \rpLSpaceU\triangleq
\bigg\{
p\in\pSpace(\Uset) \,:\; \E_q\, \cost(\encs(U,S))
\leq \plimit \,,\;\text{for all $q\in\apLSpaceS$}
\bigg\} \;, 
\label{eq:rpLSpaceU}
\end{align}
where $(U,S)\sim p(u)\cdot q(s)$.
Then, define
\begin{align}
& \LrICav \triangleq
 \min_{q(s)\in\apLSpaceS} \;
\max_{
\substack{
\encs:\Uset\times\Sset\rightarrow\Xset
\;, \\ p(u)\in \rpLSpaceU }} I_q(U;Y)   \;,
\label{eq:ALcvCIcoro}
\intertext{and}
& \LrIRav \triangleq
 \min_{q(s)\in\apLSpaceS} \;  \max_{
\substack{
\encs:\Uset\times\Sset\rightarrow\Xset
\;, \\ p(u) \,:\; \E_q\, \cost(\encs(U,S))\leq\plimit  }
} I_q(U;Y)   \;.
\label{eq:ALcvRIcoro}
\end{align}
Observe that $\LrICav\leq\LrIRav$, since the maximization constraint in (\ref{eq:ALcvCIcoro}) is taken for all $q\in\pLSpaceS$ (see (\ref{eq:rpLSpaceU})), while the maximization constraint in (\ref{eq:ALcvRIcoro}) is taken for a particular $q\in\pLSpaceS$.

\begin{lemma}
\label{lemm:ALCompoundPC}
Let  $\LcompoundP$ be a compound channel with causal SI available at the encoder, under per message
 input constraint $\plimit$ and state constraint $\Lambda$.
The random code capacity and the deterministic code capacity of 
 $\LcompoundP$ 
are bounded by
\begin{subequations}
\begin{align}
\LCcompoundP&\geq \LrICav \;, \\
\LrCcompoundP&\leq\LrIRav
\;.
\end{align}
\end{subequations}
Furthermore, if 
 $R<\LrICav$, then for some $a>0$ and sufficiently large $n$, there exists a $(2^{nR},n,e^{-a n})$ Shannon strategy code
 over $\LcompoundP$, 
under per message input  constraint $\plimit$. 
\end{lemma}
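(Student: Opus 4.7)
The plan is to establish the two bounds separately, with Shannon strategies as the common machinery.

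\textbf{Achievability.} Fix $\delta>0$ and choose $\encs^{*}:\Uset\times\Sset\to\Xset$ together with a pmf $p^{*}\in\rpLSpaceU$ that attain the value $\LrICav-\delta/2$ in (\ref{eq:ALcvCIcoro}); under the joint law induced by $(U,S)\sim p^{*}(u)q(s)$ with $X=\encs^{*}(U,S)$, this means $\min_{q\in\apLSpaceS} I_q(U;Y)\geq \LrICav-\delta/2$. I would draw a random Shannon strategy codebook by sampling $U^n(m)$ i.i.d.\ from $p^{*}$, independently across messages, and transmit $X_i=\encs^{*}(U_i(m),S_i)$. The first task is to enforce the per-message input constraint (\ref{eq:inputCstrict}): by (\ref{eq:rpLSpaceU}), $\E_q \cost(\encs^{*}(U,S))\leq\plimit$ for \emph{every} $q\in\apLSpaceS$, so whenever the empirical type of $u^n(m)$ is close to $p^{*}$, the left side of (\ref{eq:inputCstrict}) is close to this expectation. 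A Chernoff-style concentration argument, together with expurgation of the vanishing fraction of messages with atypical type and a minor perturbation of $p^{*}$ to buy slack, delivers the constraint uniformly in $q\in\apLSpaceS$.

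For the decoder, since $q$ is unknown I would use a universal rule---for example, a maximum-mutual-information decoder based on the method of types---which yields an error exponent bounded away from zero for every $q\in\apLSpaceS$ whenever $R<\min_q I_q(U;Y)$. Selecting a realization of the codebook that simultaneously meets the input constraint and has $\cerr(\code)\leq e^{-an}$ for all $q\in\apLSpaceS$ establishes the Shannon-strategy claim and, \emph{a fortiori}, the bound $\LCcompoundP\geq\LrICav-\delta$; since $\delta$ is arbitrary, the achievability bound follows.

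\textbf{Converse.} Consider a sequence of $(2^{nR},n,\eps_n)$ random codes with $\eps_n\to 0$. Fix any $q\in\apLSpaceS$ and let $S^n$ be i.i.d.\ $\sim q$. Any deterministic code $\code_\gamma$ in the collection can be cast as a Shannon strategy code, where the strategy at time $i$ is the function $s\mapsto \enci(M,S^{i-1},s)$; let $U_i$ be this function-valued random variable and set $\encs(u,s)=u(s)$, so that $X_i=\encs(U_i,S_i)$ and $U_i$ is independent of $S_i$. Fano's inequality followed by standard single-letter manipulations yields
\bieee
nR &\leq& I(M;Y^n)+n\eps_n' \;\leq\; \sum_{i=1}^n I(U_i;Y_i)+n\eps_n' \;,
\eieee
and introducing a uniform time-sharing $T$ with $U=(T,U_T)$, $Y=Y_T$, $S=S_T$ gives $R\leq I_q(U;Y)+\eps_n'$. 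Averaging (\ref{eq:codeInputCr}) over $m$ and $\gamma$ shows that the induced $p(u)$ satisfies $\E_q\cost(\encs(U,S))\leq\plimit$, so $R$ is at most the inner maximum in (\ref{eq:ALcvRIcoro}) at this $q$. Taking the minimum over $q\in\apLSpaceS$ gives $R\leq\LrIRav$.

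The main obstacle is the asymmetry between the two single-letter expressions, caused by the causality-constraint interplay: the achievability codebook is fixed before $q$ is revealed and must satisfy the input constraint uniformly over $q\in\apLSpaceS$, forcing $p$ into the restrictive set $\rpLSpaceU$, while the converse uses only the weaker constraint at the specific $q$ under attack. Designing a universal decoder that handles every $q\in\apLSpaceS$ simultaneously under this uniform input constraint is the delicate part, but is handled cleanly by the method of types once the Shannon strategy structure is in place.
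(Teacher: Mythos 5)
Your proposal is correct and follows essentially the same route as the paper's proof, so I will only flag the points of divergence. On the achievability side both arguments draw an i.i.d.\ Shannon-strategy codebook from a $p^{*}\in\rpLSpaceU$ and decode universally via the method of types; the differences are implementation details. You enforce the per-message cost constraint by expurgating codewords of atypical type, whereas the paper keeps all codewords and has the encoder transmit an idle zero-cost sequence whenever the constraint check (\ref{eq:QdirectState}) fails (an event of exponentially small probability, so it costs nothing in the error analysis); and you use an MMI decoder where the paper uses joint typicality against the polynomial net $\tQ$ of state types $\delta_1$-close to $\pLSpaceS$ --- both are standard universal decoders and both deliver the claimed $e^{-an}$ bound. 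For the converse, you rederive the single-letter bound from scratch via Fano's inequality with function-valued strategy variables $U_i=\enci(M,S^{i-1},\cdot)$ and time-sharing, whereas the paper argues by contradiction: a random code beating $\LrIRav$ would, when run over the i.i.d.\ channel with the minimizing $q^*$, exceed Shannon's causal-SI capacity $\inC_{\plimit}(\avc^{q^*})$. Your Fano argument is precisely the proof of the result the paper cites, so the substance is identical; yours is more self-contained, the paper's is shorter. One shared and equally implicit step: both proofs assume a single pair $(\encs^{*},p^{*})$ satisfies $\min_{q\in\pLSpaceS} I_q(U;Y)\geq\LrICav-\delta$, i.e.\ that the min-max in (\ref{eq:ALcvCIcoro}) admits a saddle point (which holds once $\Uset$ is taken large enough that mixing over the functions $\encs$ is absorbed into $p(u)$); neither write-up spells this out.
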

The proof of Lemma~\ref{lemm:ALCompoundPC} is given in Appendix~\ref{subsec:ALCompoundPC}. It can further be shown that if 
$\LCcompoundP>0$, then $\LrCcompoundP=\LCcompoundP=\LrIRav$. However, this will not be needed here.

\subsection{The AVC with Causal SI}
\subsubsection{Random Code Capacity}
We give lower and upper bounds on the random code capacity of 
 the AVC $\avc$ with causal SI under input and state constraints. 

We begin with a lemma, which 
 is a restatement of Ahlswede's Robustification Technique (RT) \cite{Ahlswede:86p} with some modification. 
\begin{lemma}[Ahlswede's RT] \cite{Ahlswede:86p}
\label{lemm:LRT}
Let $h:\Sset^n\rightarrow [0,1]$ be a given function. If, for some fixed $\alpha_n\in(0,1)$, and for all 
$ \qn(s^n)=\prod_{i=1}^n q(s_i)$, with 
$q\in\apLSpaceS$, 
\bieee
\label{eq:RTcondCs}
\sum_{s^n\in\Sset^n} \qn(s^n)h(s^n)\leq \alpha_n \;,
\eieee
then,
\bieee
\label{eq:RTresCs}
\frac{1}{n!} \sum_{\pi\in\Pi_n} h(\pi s^n)\leq \beta_n \;,\quad\text{for all $s^n\in\Sset^n$ such that $l^n(s^n)\leq\Lambda$} \;,
\eieee
where $\Pi_n$ is the set of all $n$-tuple permutations $\pi:\Sset^n\rightarrow\Sset^n$, and 
$\beta_n=(n+1)^{|\Sset|}\cdot\alpha_n$. 
\end{lemma}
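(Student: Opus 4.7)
The plan is to prove the lemma by the standard method-of-types argument underlying Ahlswede's original RT, with a single twist to accommodate the state constraint. The observation that unlocks the whole argument is that the permutation-orbit average $\tfrac{1}{n!}\sum_{\pi\in\Pi_n} h(\pi s^n)$ depends only on the empirical type of $s^n$, and that $l^n(s^n)\leq\Lambda$ is equivalent to saying that the empirical type of $s^n$ lies in $\apLSpaceS$. Thus the constraint on the right-hand sequence $s^n$ passes through as a legitimate choice of $q$ for the hypothesis.

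I would proceed as follows. Fix any $s^n\in\Sset^n$ with $l^n(s^n)\leq\Lambda$, and let $\hP$ be its type. Since $l^n(s^n)=\sum_{s}\hP(s)l(s)=\E_{\hP}l(S)\leq\Lambda$, we have $\hP\in\apLSpaceS$, so the hypothesis (\ref{eq:RTcondCs}) may be invoked with $q=\hP$, giving
\[
\sum_{\ts^n\in\Sset^n} \hP^n(\ts^n)\,h(\ts^n)\leq\alpha_n.
\]
Restrict the sum to the type class $\Tset(\hP)=\{\ts^n\in\Sset^n:\,\text{type of }\ts^n \text{ is }\hP\}$. On this class $\hP^n(\cdot)$ takes a common value $p_0$, so
\[
p_0\sum_{\ts^n\in\Tset(\hP)} h(\ts^n)\leq\alpha_n.
\]
Apply the standard type-counting bound $p_0\cdot|\Tset(\hP)|=\hP^n(\Tset(\hP))\geq (n+1)^{-|\Sset|}$ (which follows from the fact that there are at most $(n+1)^{|\Sset|}$ types, and under $\hP^n$ the class $\Tset(\hP)$ is the most probable one). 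This gives
\[
\frac{1}{|\Tset(\hP)|}\sum_{\ts^n\in\Tset(\hP)} h(\ts^n)\leq (n+1)^{|\Sset|}\alpha_n=\beta_n.
\]

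It only remains to identify the permutation average with this type-class average. The orbit $\{\pi s^n:\pi\in\Pi_n\}$ (with multiplicity) covers $\Tset(\hP)$ uniformly: the stabilizer of $s^n$ in $\Pi_n$ has size $\prod_{s\in\Sset} N(s|s^n)!$, so each element of $\Tset(\hP)$ is hit by exactly that many permutations. Dividing by $n!$ yields $\tfrac{1}{n!}\sum_\pi h(\pi s^n)=\tfrac{1}{|\Tset(\hP)|}\sum_{\ts^n\in\Tset(\hP)} h(\ts^n)\leq\beta_n$, which is (\ref{eq:RTresCs}).

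There is no real obstacle here; the only subtlety, and the sole point at which the modified statement differs from Ahlswede's original formulation, is the opening step, where we must verify that the constraint $l^n(s^n)\leq\Lambda$ on the sequence translates into membership of its type $\hP$ in the constrained family $\apLSpaceS$ that appears in the hypothesis. Once that equivalence is noted, the proof is purely combinatorial and identical to the unconstrained RT.
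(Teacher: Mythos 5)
Your proposal is correct and follows essentially the same route as the paper's proof: both hinge on observing that $l^n(s^n)\leq\Lambda$ means the type of $s^n$ lies in $\apLSpaceS$ so the hypothesis can be invoked with $q$ equal to that type, then restricting to the type class, using $\hP^n(\Tset(\hP))\geq(n+1)^{-|\Sset|}$, and identifying the permutation-orbit average with the type-class average. The only difference is cosmetic ordering — the paper symmetrizes over permutations before restricting to the type class, while you restrict first and invoke the orbit-stabilizer count afterwards.
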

Originally, Ahlswede's RT is stated so that (\ref{eq:RTcondCs}) holds for any $q(s)\in\pSpace(\Sset)$, without state constraint (see \cite{Ahlswede:86p}), but the claim holds also when state constraints are imposed, as here. For completeness, we give the proof of 
Lemma~\ref{lemm:LRT} in Appendix~\ref{app:LRT}.

\begin{theorem}
\label{theo:ALavcCr}
Let $\avc$ be an AVC with causal SI available at the encoder, under per message 
 input  constraint $\plimit$ and state constraint $\Lambda$. Then, 
\begin{enumerate}[1)]
\item
the random code capacity of $\avc$  is bounded by 
\begin{align}
\LrICav\leq \LrCav\leq \LrIRav \;,
\end{align}
where $\LrICav$ and $\LrIRav$ are given by (\ref{eq:ALcvCIcoro}) and  (\ref{eq:ALcvRIcoro}), respectively. 
\item
For $\plimit=\cost_{max}$, \ie when free of input constraints, the random code capacity of $\avc$ is given by
\begin{align} 
 \LrCav=\LrICav= \LrIRav \;.
\end{align}
\end{enumerate}
\end{theorem}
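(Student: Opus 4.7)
The plan is to derive both bounds by combining Lemma~\ref{lemm:ALCompoundPC} with Ahlswede's Robustification Technique (Lemma~\ref{lemm:LRT}), which bridges from i.i.d.\ states of unknown marginal (the compound model) to arbitrary state sequences obeying only the almost-sure constraint $l^n(s^n)\leq\Lambda$ (the AVC model); part~2 will then follow from part~1 once $\LrICav$ and $\LrIRav$ are seen to collapse when the input constraint is vacuous.

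For the achievability $\LrICav\leq\LrCav$, fix $R<\LrICav$. By Lemma~\ref{lemm:ALCompoundPC}, for all large $n$ there exists a Shannon strategy code $\code=(u^n(\cdot),\encs,\dec)$ over $\LcompoundP$ with $2^{nR}$ messages, per-message cost $\leq\plimit$, and $\sum_{s^n}q^n(s^n)\cerr(\code)\leq e^{-an}$ for every i.i.d.\ $q^n=\prod_i q(s_i)$ with $q\in\pLSpaceS$. Setting $h(s^n):=\cerr(\code)$ and applying Lemma~\ref{lemm:LRT} with $\alpha_n=e^{-an}$, one obtains $\frac{1}{n!}\sum_{\pi\in\Pi_n}h(\pi s^n)\leq\beta_n:=(n+1)^{|\Sset|}e^{-an}$ whenever $l^n(s^n)\leq\Lambda$. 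I would realize this on the AVC via the random code indexed by a uniform $\pi\in\Pi_n$, using permuted strategies $u_i^\pi(m):=u_{\pi(i)}(m)$, the \emph{same} function $\encs$, and decoder $g_\pi(y^n):=\dec(\pi^{-1}y^n)$; permuting only the strategy sequence is crucial, since each $u_{\pi(i)}(m)$ depends solely on $m$, so $\encs(u_i^\pi(m),s_i)$ remains causal. A direct relabeling argument shows $\cerr(\code_\pi)=h(\pi^{-1}s^n)$, and the RT bound then gives error $\leq\beta_n$ for every $q^n\in\pLSpaceSn$. For the input constraint, the permutation-averaged cost equals $\frac{1}{n^2}\sum_{i,j}\cost(\encs(u_j(m),s_i))=\E_{\hat q_{s^n}}\bigl[\tfrac{1}{n}\sum_j\cost(\encs(u_j(m),S))\bigr]$, where $\hat q_{s^n}$ is the type of $s^n$; since $l^n(s^n)\leq\Lambda$ forces $\hat q_{s^n}\in\pLSpaceS$, the original compound-channel per-message constraint bounds this by $\plimit$, giving a valid $(2^{nR},n,\beta_n)$ random code for $\avc$.

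For the converse $\LrCav\leq\LrIRav$, observe that a random code valid against every $q^n\in\pLSpaceSn$ remains valid, up to an LLN-negligible correction, against every i.i.d.\ $q^n$ with $q\in\pLSpaceS$ and $\E_q l(S)<\Lambda$, because the event $\{l^n(S^n)>\Lambda\}$ has vanishing probability. A standard Fano-type converse for channels with causal SI then bounds the rate by $\max_{p(u),\encs\,:\,\E_q\cost(\encs(U,S))\leq\plimit}I_q(U;Y)$ for every such $q$, and minimizing over $q$ and passing $\E_q l(S)\uparrow\Lambda$ by continuity yields $\LrCav\leq\LrIRav$. Part~2 is then immediate: when $\plimit=\cost_{\max}$, the constraint $\E_q\cost(\encs(U,S))\leq\plimit$ is vacuous for every $q\in\pLSpaceS$, so $\rpLSpaceU=\pSpace(\Uset)$ and the maxima in (\ref{eq:ALcvCIcoro}) and (\ref{eq:ALcvRIcoro}) coincide, forcing $\LrICav=\LrIRav$ and hence equality throughout. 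The chief obstacle is verifying that the RT interacts correctly with causal encoding---this is precisely why the Shannon strategy formulation of Lemma~\ref{lemm:ALCompoundPC} is indispensable: it isolates the message-dependent randomness into $u^n(m)$, which may be permuted offline, while the state-dependent operation $\encs(\cdot,s_i)$ is left in place.
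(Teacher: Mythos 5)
Your proposal is correct and follows essentially the same route as the paper: achievability via Lemma~\ref{lemm:ALCompoundPC} followed by Ahlswede's RT applied to the conditional error probability, realized as a uniformly random permutation acting on the strategy sequence $u^n(m)$ and the output (which is exactly the paper's construction (\ref{eq:LCpi}) and preserves causality for the same reason you give); converse by reducing the AVC random code to a code for the i.i.d.\ channel with $q\in\pLSpaceS$ strictly inside the state constraint, handling the almost-sure constraint by the law of large numbers, and invoking Shannon's causal-SI converse; part~2 by collapsing the two expressions when $\plimit=\cost_{max}$.

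The one step where you genuinely depart from the paper is the verification of the input constraint for the permuted random code. The paper applies the RT a \emph{second} time, to the overshoot probability $h_m(s^n)=\Pr\{\cost^n(\encs^n(U^n(m),s^n))>\plimit-\delta\}$, which forces it to start from the tightened constraint $\plimit-2\delta$ and to run a large-deviations argument before splitting the expectation in (\ref{eq:LrandInC}). You instead observe that, because the cost is additive, the permutation average of the per-message cost at a fixed $s^n$ is \emph{exactly} $\frac{1}{n^2}\sum_{i,j}\cost(\encs(u_j(m),s_i))$, i.e., the i.i.d.\ average under the empirical type $\hP_{s^n}$, which lies in $\pLSpaceS$ whenever $l^n(s^n)\leq\Lambda$; the compound-channel per-message constraint then gives the bound $\plimit$ with no slack and no concentration argument. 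This is a clean and correct shortcut (it is in effect the RT specialized to an additive function, where the inequality becomes an identity and the $(n+1)^{|\Sset|}$ factor disappears). Two minor points of rigor you should still supply: in the converse, before invoking the single-letter converse for a fixed $q$ you must check that the AVC input constraint transfers to the i.i.d.\ setting (the paper's computation (\ref{eq:LconvIC2}), which conditions on $l^n(S^n)\leq\Lambda$ and uses that $\cost$ is bounded), and in the achievability you should note that the compound code's idle-symbol fallback is permutation-invariant so it does not disturb either the cost identity or the error analysis.
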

 Theorem~\ref{theo:ALavcCr} is proved in Appendix~\ref{app:ALavcCr}.
We further note that the result above holds when the input constraint is averaged over the message set as well.
The following lemma is the counterpart of a result from \cite{Ahlswede:78p}, stating that a polynomial size of the code collection $\{\code_\gamma\}$ is sufficient. This result is a key observation in Ahlswede's Elimination Technique (ET), presented in \cite{Ahlswede:78p}, where it is used as a basis for the deterministic code analysis. Here, it will 
 be used to determine a condition under which the deterministic code capacity is identical to the random code capacity of the AVC with causal SI under a state constraint.    
\begin{lemma} 
\label{lemm:LcorrSizeC}  
Let $R<\LrCav$. Consider a given  
 $(2^{nR},n,\eps_n)$ random 
code $\code^\Gamma=(\mu,\Gamma,\{\code_\gamma\}_{\gamma\in\Gamma})$
 for the AVC $\avc$ with causal SI, under per message input constraint $\plimit$ and state constraint $\Lambda$,
%
where $\lim_{n\rightarrow\infty} \eps_n=0$. 
Then, for every $\delta>0$, $0<\alpha<1$, and sufficiently large $n$, there exists a $(2^{nR},n)$ random 
 code $(\mu^*,\Gamma^*,\{\code_{\gamma}
\}_{\gamma\in\Gamma^*})$ 
such that for all $m\in [1:2^{nR}]$ and $q^n\in\pLSpaceSn$,
\bieee
&&\sum_{\gamma\in{\Gamma^*}} \mu^*(\gamma) \sum_{s^n\in\Sset^n} q^n(s^n) \cost^n(\enc_\gamma^n(m,s^n)) \leq \plimit+\alpha \;, \\
&&\err(q^n,\code^{\Gamma^*})\leq \delta \;,
\eieee
 with the following properties:
\begin{enumerate}
 \item 
The size of the code collection is bounded by
$
 |\Gamma^*|\leq n^2 
$. 
\item
\label{item:Lsubset}
 The code collection is a subset of the original code collection, \ie 
$
\Gamma^*\subseteq \Gamma 
$. 
\item
 The distribution $\mu^*$ 
 is uniform, \ie 
$
\mu^*(\gamma)=\frac{1}{|\Gamma^*|} 
$ 
for $\gamma\in\Gamma^*$. 
%
%
\end{enumerate} 
\end{lemma}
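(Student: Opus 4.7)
The plan is to apply Ahlswede's Elimination Technique adapted to the present setting with causal SI, a per message input constraint, and an almost-sure state constraint. I would draw $N=n^2$ codes $\gamma_1,\ldots,\gamma_N$ i.i.d.\ from $\mu$ and show that with positive probability the uniform distribution $\mu^*$ on these samples simultaneously satisfies the relaxed error requirement and the relaxed input constraint, uniformly over all $q^n\in\pLSpaceSn$ and all $m\in[1:2^{nR}]$. Setting $\Gamma^{*}=\{\gamma_1^{*},\ldots,\gamma_N^{*}\}$ and $\mu^*$ uniform on $\Gamma^{*}$ then directly yields properties (1)--(3).

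First I would reduce both hypotheses on the given random code to pointwise statements. Since $\pLSpaceSn$ contains every point mass at $s^n$ with $l^n(s^n)\leq\Lambda$, the assumed bounds on $\err(q^n,\code^\Gamma)$ and on the expected input cost are equivalent to
\begin{align*}
\sum_{\gamma\in\Gamma}\mu(\gamma)\, P_{e|s^n}^{(n)}(\code_\gamma) &\leq \eps_n, \\
\sum_{\gamma\in\Gamma}\mu(\gamma)\, \cost^n(\enc_\gamma^n(m,s^n)) &\leq \plimit,
\end{align*}
for every $m\in[1:2^{nR}]$ and every $s^n\in\Sset^n$ with $l^n(s^n)\leq\Lambda$. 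Once the analogous pointwise bounds (with $\eps_n$ replaced by $\delta$ and $\plimit$ by $\plimit+\alpha$) are established for the subsampled uniform code, linearity in $q^n$ immediately gives the required statements for every $q^n\in\pLSpaceSn$.

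Next I would invoke concentration. The random variables $P_{e|s^n}^{(n)}(\code_{\gamma_i})\in[0,1]$ are i.i.d.\ with mean at most $\eps_n$, and for each $(m,s^n)$ the variables $\cost^n(\enc_{\gamma_i}^n(m,s^n))\in[0,\cost_{max}]$ are i.i.d.\ with mean at most $\plimit$. Hoeffding's inequality gives, for each feasible $s^n$,
$$
\Pr\!\left(\frac{1}{N}\sum_{i=1}^N P_{e|s^n}^{(n)}(\code_{\gamma_i}) > \delta\right)\leq \exp\bigl(-2N(\delta-\eps_n)^2\bigr),
$$
and for each pair $(m,s^n)$,
$$
\Pr\!\left(\frac{1}{N}\sum_{i=1}^N \cost^n(\enc_{\gamma_i}^n(m,s^n)) > \plimit+\alpha\right)\leq \exp\bigl(-2N\alpha^2/\cost_{max}^2\bigr).
$$
For $n$ large enough, $\eps_n<\delta/2$, so both exponents are of order $n^2$.

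Finally, a union bound over the at most $|\Sset|^n$ feasible state sequences and the $2^{nR}$ messages yields a total failure probability of at most $(1+2^{nR})|\Sset|^n\cdot\exp(-\Omega(n^2))$, which tends to zero since $N=n^2$ dominates every $2^{O(n)}$ union bound. Hence for all sufficiently large $n$ there exists a realization $(\gamma_1^{*},\ldots,\gamma_N^{*})\in\Gamma^{N}$ for which the pointwise bounds hold simultaneously, establishing the lemma. The main (modest) obstacle is really bookkeeping: handling the error and the input-cost deviations in parallel under a single union bound, verifying that Hoeffding applies to \emph{each} message separately rather than only on average, and choosing $N=n^2$ precisely so that the $\exp(-\Omega(n^2))$ tail dominates the $2^{nR}|\Sset|^n$ counting factor.
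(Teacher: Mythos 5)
Your proposal is correct and follows essentially the same route as the paper's proof: i.i.d.\ subsampling of $n^2$ codes from $\mu$, reduction of both hypotheses to pointwise (kernel) bounds over feasible $s^n$, exponential concentration of the empirical averages, and a union bound over the at most $2^{nR}|\Sset|^n$ message--state pairs, which the $\exp(-\Omega(n^2))$ tails dominate. The only difference is cosmetic: you use Hoeffding for both the cost and the error, whereas the paper uses a Sanov-type large-deviations bound for the cost and Bernstein's trick for the error; your choice is valid since the cost variables are bounded by $\cost_{max}$ and the error variables lie in $[0,1]$ with mean $\eps_n<\delta/2$ for large $n$.
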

The proof of Lemma~\ref{lemm:LcorrSizeC} is given in Appendix~\ref{app:LET}.

\subsubsection{Deterministic Code Capacity}
Here, we consider the AVC with causal SI, under \emph{average} input constraint $\plimit$ and
 state constraint $\Lambda$. 
 We establish a lower bound on the capacity for this setting, and we find a condition under which the deterministic code capacity coincides with the random code capacity for the setting where the jammer is under a state constraint while the user if free of constraints.  
 For every encoding mapping $\encs(u,s)$, define an AVC $\Uavc=\{\xichannel\}$ without SI specified by
$
\xichannel(y|u,s)=\channel(y|\xi(u,s),s) 
$. 

Given a function $\encs:\Uset\times\Sset\rightarrow\Xset$ and a distribution $p\in\pSpace(\Uset)$, define
\bieee
\LambdaO(p,\encs)=\min\, \sum_{u\in\Uset}\sum_{s\in\Sset} p(u)J(s|u)l(s) \;,
\eieee
where the minimization is over all conditional distributions $J(s|u)$ that symmetrize $\xichannel$ 
(see Definition~\ref{def:symmetrizable}). Assume that  
\bieee
\label{eq:assumpC}
\max_{p(u)} \LambdaO(p,\encs)\neq \Lambda \,,\;\text{for all $\encs:\Uset\times\Sset\rightarrow\Xset$}\;.
\eieee
For every $\encs(u,s)$, define the following set. If $\Uchannel$ is symmetrizable, define
\begin{subequations}
\label{eq:plspaceU}
\begin{IEEEeqnarray}{rll}
\pLSpaceU \triangleq 
&\bigg\{    
 p\in\pSpace(\Uset) \,:\;& 
\E_q\, \cost(\encs(U,S))
\leq \plimit \,,\;\text{for all $q\in\apLSpaceS$, and} \IEEEnonumber\\&&
\text{for every $J(s|u)$ that symmetrizes $\Uchannel$}, \IEEEnonumber\\&&
 \sum_{u\in\Uset}\;\sum_{s\in\Sset}\, p(u)  J(s|u) l(s)> \Lambda
\bigg\}\;,
\label{eq:plspaceU1}
\intertext{  
and if $\Uchannel$ is non-symmetrizable, 
}
\pLSpaceU \triangleq
&\bigg\{    
 p\in\pSpace(\Uset) \,:\;& \E_q\, \cost(\encs(U,S))
\leq \plimit \,,\;\text{for all $q\in\apLSpaceS$}
\bigg\}\;, 
\end{IEEEeqnarray}
where $(U,S)\sim p(u)\cdot q(s)$.

The intuition behind the definition of $\pLSpaceU$ above  
 can be explained as follows. 
For a symmetrizable $\Uchannel$, the set defined in (\ref{eq:plspaceU1}) consists of distributions $p(u)$ 
 such that every jamming strategy $J(s|u)$, which symmetrizes $\Uchannel$, 
 violates the state constraint.
 That is, $\pLSpaceU$ consists of distributions for which the jammer is prohibited from using symmetrizing state strategies.
\end{subequations}

Then, let
\begin{align}
\label{eq:ICavceq}
&\LICavc\triangleq\,
\min_{
\text{\footnotesize{$
q(s)\in\pLSpaceS  
$}}
}\,
 \max_{
\substack{
\text{\footnotesize{$
\encs:\Uset\times\Sset\rightarrow\Xset
$}}
\,,\\
\text{\footnotesize{$
p(u)\in\pLSpaceU
$}}
  }}
\,  I_q(U;Y) \;.
\end{align}
Observe that 
 $\LICavc\leq\LrICav$ 
(\cf (\ref{eq:ALcvCIcoro}) 
 and (\ref{eq:ICavceq})). 


\begin{theorem}
\label{theo:ALavcCstateC}
Let $\avc$ be an AVC with causal SI, under average input constraint $\plimit$ and
 state constraint $\Lambda$. 
Suppose that (\ref{eq:assumpC}) holds. Then, 
\begin{enumerate}[{1)}]
\item
 the capacity of $\avc$ is lower bounded by 
\bieee 
\ALCavc \geq \LICavc \;.
\eieee

\item
For $\plimit=\cost_{max}$, \ie when free of input constraints,
if there exists a function $\encs:\Uset\times\Sset\rightarrow\Xset$, such that 
$\Uchannel$ 
 is non-symmetrizable,
 the deterministic code capacity is identical to the random code capacity, \ie
$
\LCavc =\LrCav>0 
$, 
 and it is given by 
\bieee 
\LCavc =\LrICav=\LrIRav \;. 
\eieee
\end{enumerate}
\end{theorem}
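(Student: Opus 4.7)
My plan is to prove both parts of Theorem~\ref{theo:ALavcCstateC} by reducing to the Csisz\'ar--Narayan machinery for AVCs without SI and, for Part~2, derandomizing a random code via Ahlswede's Elimination Technique (Lemma~\ref{lemm:LcorrSizeC}). For Part~1, I would fix an encoding function $\encs:\Uset\times\Sset\rightarrow\Xset$ and $p\in\pLSpaceU$ that nearly attain the inner maximum in~(\ref{eq:ICavceq}), and study the induced AVC without SI $\Uavc=\{\xichannel\}$ defined by $\xichannel(y|u,s)=\channel(y|\encs(u,s),s)$. The set $\pLSpaceU$ is engineered precisely so that Theorem~\ref{theo:LavcC0stateC} applies to $\Uavc$: in the symmetrizable case, $p\in\pLSpaceU$ forces every symmetrizing $J(s|u)$ to violate the state budget, i.e., $\LambdaO(p,\encs)>\Lambda$; in the non-symmetrizable case, $\LambdaO(p,\encs)=+\infty>\Lambda$ by convention. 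In either case Theorem~\ref{theo:LavcC0stateC} supplies a deterministic code for $\Uavc$ under state constraint $\Lambda$ of rate arbitrarily close to $\min_{q\in\pLSpaceS}I_q(U;Y)$; translating this code into a Shannon strategy code for $\avc$ by transmitting $x_i=\encs(u_i,s_i)$ preserves the rate, and maximizing over $(\encs,p)$ yields $\ALCavc\geq\LICavc$.

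The subtle point in Part~1 is the \emph{average} input constraint~(\ref{eq:inputCaverage}). For $p\in\pLSpaceU$ one has $\E_q\cost(\encs(U,S))\leq\plimit$ uniformly over $q\in\pLSpaceS$, so a Hoeffding-style concentration combined with a union bound over a polynomial-size (type-class) approximation of the jammer's feasible strategies $\pLSpaceSn$ shows that, except for an exponentially small fraction of messages, the realized cost does not exceed $\plimit+\alpha$ under any feasible $q^n$. A mild expurgation and averaging over the message set then meets the constraint with slackness $\alpha\to 0$. The error analysis itself reduces to the Csisz\'ar--Narayan analysis on $\Uavc$ verbatim, since the decoder observes only $Y^n$ and the encoder's operation has been absorbed into the fixed mapping $\encs$. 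The main technical point here is verifying that the non-symmetrizability condition guaranteed by $p\in\pLSpaceU$ is precisely the quantity driving the error exponent of the decoder's maximum-mutual-information typicality rule.

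For Part~2, $\plimit=\cost_{max}$ removes the input constraint. By hypothesis there exists $\encs$ such that the induced $\Uchannel$ is non-symmetrizable, so Theorem~\ref{theo:symm0} applies to $\Uavc$ without state constraint and yields a deterministic code of some positive rate $R_0>0$. Independently, Theorem~\ref{theo:ALavcCr}(2) supplies a random code for $\avc$ of rate arbitrarily close to $\LrICav=\LrIRav$ with exponentially small error, and Lemma~\ref{lemm:LcorrSizeC} reduces its collection to a uniform distribution over at most $n^2$ codes. I would prepend a length-$n_1=\lceil(2\log n)/R_0\rceil$ Shannon strategy prefix, built from the $\Uavc$-code and the mapping $\encs$, to communicate the chosen index $\gamma\in\Gamma^{*}$; the decoder first recovers $\gamma$ from the prefix outputs and then applies $\dec_\gamma$ to the remaining $n-n_1$ outputs. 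Since $n_1=O(\log n)$, the rate loss is $o(1)$, and with the trivial bound $\LCavc\leq\LrCav$ this yields $\LCavc=\LrCav=\LrICav=\LrIRav$.

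The main obstacle will be the state-budget bookkeeping in this prefix-and-main construction: the adversary chooses $s^n$ subject only to $l^n(s^n)\leq\Lambda$ over the entire block and may concentrate its cost on either sub-block. Since $l$ is bounded and $n_1=o(n)$, however, any concentration on the prefix leaves the main block with average state cost at most $\Lambda/(1-n_1/n)=\Lambda+o(1)$, so the main random code need only be designed for a slightly relaxed constraint $\Lambda+\delta$, invoking continuity of $\LrICav$ in $\Lambda$ and letting $\delta\to 0$; the prefix code meanwhile operates under no state constraint and is unaffected by such concentration. Once these reductions are set up, the combined construction gives a deterministic $(2^{nR},n,\eps_n)$ code for $\avc$ with $R$ arbitrarily close to $\LrICav$ and $\eps_n\to 0$, closing the argument.
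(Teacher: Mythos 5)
Your overall architecture coincides with the paper's: Part 1 is proved by fixing $\encs$ and $p\in\pLSpaceU$, passing to the induced AVC $\Uavc$ without SI, and invoking the Csisz\'ar--Narayan theory (the set $\pLSpaceU$ indeed forces $\LambdaO(p,\encs)>\Lambda$ in the symmetrizable case, so their positive-rate result applies); Part 2 is the Elimination Technique exactly as in the paper, with your state-budget remark for the $o(n)$ prefix being, if anything, more careful than the paper's treatment. The one genuine gap is your handling of the average input constraint in Part 1. The claim that ``except for an exponentially small fraction of messages, the realized cost does not exceed $\plimit+\alpha$ under any feasible $q^n$'' does not hold per message: the constraint set $\pLSpaceSn$ contains point masses on every $s^n$ with $l^n(s^n)\leq\Lambda$, and since the jammer knows the codebook it can choose $s_i$ as a function of $u_i(m)$, creating correlation between $U$ and $S$ that the product-form bound $\E_{p\times q}\,\cost(\encs(U,S))\leq\plimit$ does not control. (Take $\cost(\encs(u,s))=\dsOne\{u=s\}$ with $l\equiv 0$: the expected cost under any product distribution can be well below $1$, yet setting $s_i=u_i(m)$ drives the per-message cost to $1$ for \emph{every} message.) Consequently the expurgation step also fails: the set of ``bad'' messages depends on the adversarial $s^n$, so no single expurgated sub-codebook satisfies a per-message bound uniformly over $s^n$; and the union bound cannot be taken over polynomially many state types, because the cost depends on the \emph{joint} type of $(u^n(m),s^n)$, which the jammer can steer.

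The fix --- and what the paper actually does --- is to exploit that the constraint is averaged over messages \emph{before} the worst case over $s^n$ is taken: for each \emph{fixed} $s^n$ with $l^n(s^n)\leq\Lambda$, the quantities $Z_m(s^n)=\cost^n(\encs^n(U^n(m),s^n))$ are i.i.d.\ over $m$ with mean at most $\plimit$ (here $U^n(m)\perp s^n$, so the product-form bound does apply), hence $\frac{1}{2^{nR}}\sum_m Z_m(s^n)$ exceeds $\plimit$ with probability at most $2^{-2^{nR}\dF_0}$ --- doubly exponentially small --- which survives the union bound over all $|\Sset|^n$ feasible state sequences. The paper additionally builds a deterministic fallback into the encoder (transmit the zero-cost idle symbol whenever the codebook violates the message-averaged constraint for some feasible $\tsn$), so that the constraint holds with probability one while the fallback event contributes only a doubly exponentially small term to the error probability. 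Your Part 2 needs no repair.
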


The proof of Theorem~\ref{theo:ALavcCstateC} is given in Appendix~\ref{app:ALavcCstateC}.

\begin{center}
\begin{figure*}[tb]
        \centering
        \includegraphics[scale=0.4,trim={0 0 0 1cm},clip]
				{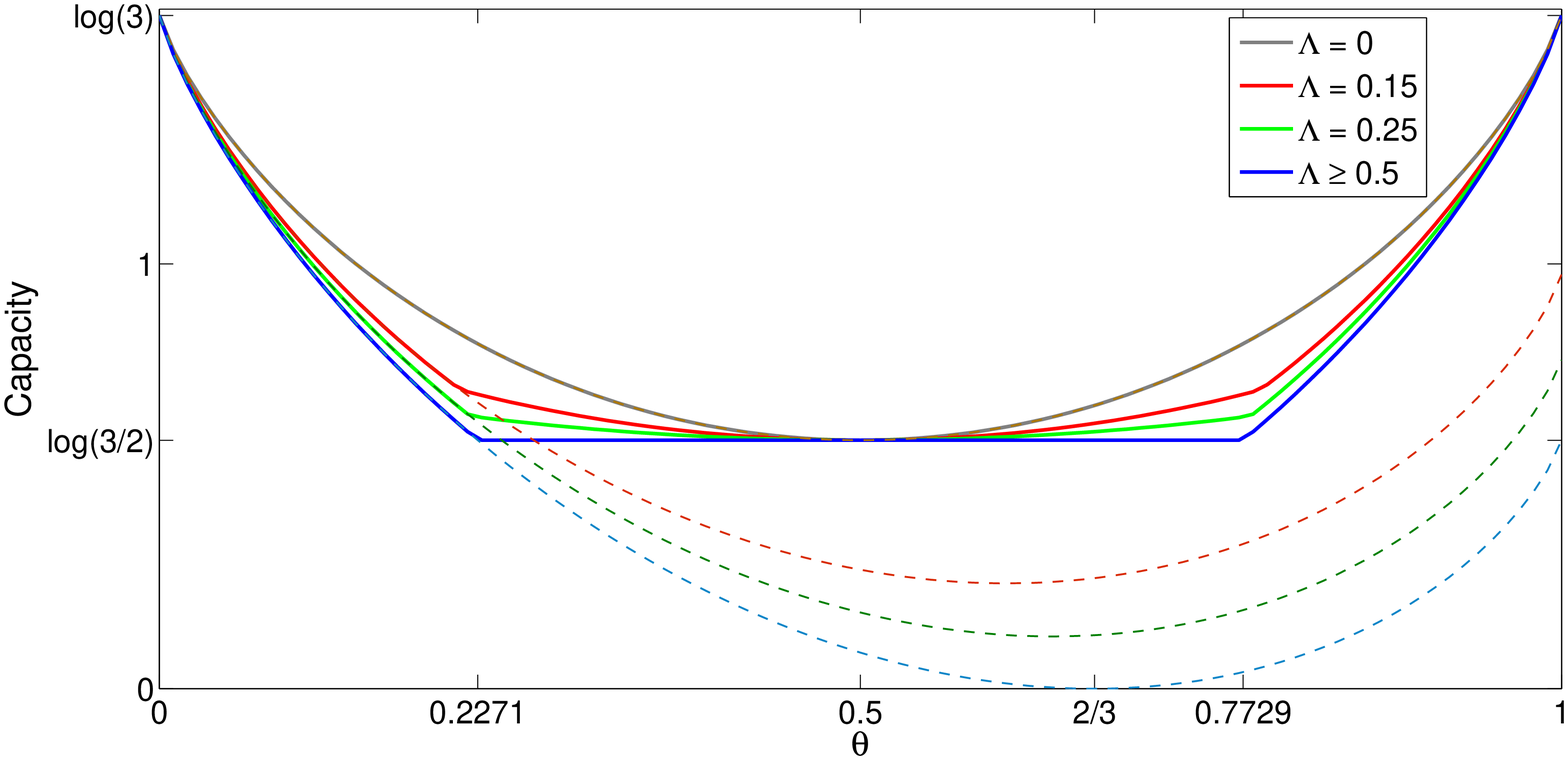}
        
\caption{The capacity of 
the arbitrarily varying noisy-typewriter channel as a function of the transition parameter $\theta$. 
The dashed lines correspond to the capacity $\LCavcig$ of the AVC without SI, and the solid lines correspond to the capacity $\LCavc$ of the AVC with causal SI. 
Each line corresponds to a state constraint $\Lambda=0,0.15,0.25$ and $\Lambda\geq 0.5$, from top to bottom. As the state constraint $\Lambda$ increases, the capacity decreases.
  }
\label{fig:AVNTC}
\end{figure*}
\end{center}

\section{Example}
To illustrate our results, we consider the following example of an AVC with causal SI, under a state constraint.
\begin{example}
\label{example:LAVNTC}
Consider an arbitrarily varying noisy-typewriter channel, defined by
\begin{subequations}
\label{eq:LNT}
\bieee
Y=X+Z \mod 3 \;,
\eieee
where $\Xset=\Zset=\Yset=\{0,1,2\}$. The additive noise is defined by 
$
Z=K\cdot S 
$,
with $S\in\{1,2\}$,
 and
\bieee
K\sim \Bern(\theta) \,,\; \theta>0 \;.
\eieee
\end{subequations}
Thus, $S$ chooses among two noisy-typewriter DMCs \cite{CoverThomas:06b}. The channel is under a state constraint $\Lambda$, with
\bieee
l(s)=\begin{cases}
0 &\text{if $s=1$}\;,\\
1 &\text{if $s=2$}\;.
\end{cases}
\eieee 

We have the following results. 
The capacity of the arbitrarily varying noisy-typewriter channel $\avcig$ without SI, under a state constraint $\Lambda$,  is given by 
\bieee
\label{eq:Lex2Cavcig}
\opC_{\cost_{max},\Lambda}(\avcig)
=
\begin{cases}
\log 3 -h(\theta)-\theta 	h(\Lambda)							&\text{if $0\leq \Lambda\leq \frac{1}{2}$}									\;,\\
\log 3 -h(\theta)-\theta 													&\text{if $\Lambda\geq\frac{1}{2}$}									\;,
\end{cases} \qquad
\eieee
for all $\theta>0$. 
The capacity of the arbitrarily varying noisy-typewriter $\avc$ with causal SI, under a state constraint $\Lambda$, is given by 
\begin{align}
\label{eq:Lex2Cavc}
& \opC_{\cost_{max},\Lambda}(\avc) 
 \nonumber\\
=&\begin{cases}
\log 3-\min\left(\, h(\theta)+\theta h(\Lambda) \,,\; h(\theta)+(1-\theta)h(\Lambda) \,,\; h(\theta*\Lambda)  \right) 
&\text{if $0\leq \Lambda< \frac{1}{2}$}\;,\\
\log 3-\min\left(\, h(\theta)+\theta  \,,\; h(\theta)+(1-\theta) \,,\; 1  \right) 
&\text{if $ \Lambda\geq \frac{1}{2}$}\;.
\end{cases}\qquad
\end{align}
The proof of these results is given in Appendix~\ref{app:LAVNTC}.
%
Figure~\ref{fig:AVNTC} depicts the capacity of the arbitrarily varying noisy-typewriter channel, as a function of the parameter $\theta$. The dashed lines correspond to the case where there is no SI, and the solid lines correspond to the case where causal SI is available at the encoder. 
 Since $\channel$ is symmetrizable if and only $\theta=\frac{2}{3}$,  the capacity without SI and without constraints is zero only for this value. This is equivalent to a modulo-additive DMC $Y=X+Z \mod 3$ where the noise $Z$ is uniform.
On the other hand, with causal SI, the capacity is symmetric around $\theta=\frac{1}{2}$, which resembles the behavior of a BSC, as $K\sim\text{Bernoulli}(\theta)$. 
Choosing the encoding function $\encs(u,s)=u\cdot s \mod 3$, with $\Uset=\{0,1,2\}$, we find that the DMC $\Uchannel$ is non-symmetrizable for all $\theta>0$, thus the capacity of the arbitrarily varying noisy-typewriter with causal SI is positive. Furthermore, 
the capacity is bounded by $\opC_{\cost_{max},\Lambda=1}(\avc)\leq\opC_{\cost_{max},\Lambda}(\avc)\leq\opC_{\cost_{max},\Lambda=0}(\avc)$, where
\begin{align}
\label{eq:Ex1lowL}
&\opC_{\cost_{max},\Lambda=1}(\avc)\geq \log 3 -1=\log\Big( \frac{|\Xset|}{2} \Big) \;,\\
\label{eq:Ex1upL}
&\opC_{\cost_{max},\Lambda=0}(\avc)=\log 3-h(\theta) \;,
\end{align}
by (\ref{eq:Lex2Cavc}).
The lower bound $\log\big( \frac{|\Xset|}{2} \big) $  is the capacity of the standard noisy-typewriter DMC, with $\theta=\frac{1}{2}$.
The upper bound (\ref{eq:Ex1upL}) is the capacity  when the state is known to both the encoder and the receiver.

\end{example}

In conclusion of this chapter,
 we have established lower and upper bounds on 
 the random code capacity, 
for the single-user AVC with causal SI at the encoder, under  input   and  state constraints.
 We have then established a lower bound on the deterministic code capacity, for the  AVC with causal SI at the encoder, under a state constraint and free of input constraint. For this case, we have also  
 stated a condition under which the deterministic code capacity coincides with the random code capacity. The next chapter deals with a multiple-user scenario.

\chapter{The Arbitrarily Varying Degraded Broadcast Channel}
\label{chap:AVDBC}
In this chapter, we address the arbitrarily varying degraded broadcast channel with causal SI available at the encoder. It is assumed that there are no constraints.
	 
\section{Definitions and Previous Results}
\label{sec:Bnotation}
	\subsection{Channel Description}
	\label{subsec:BCs}
A state-dependent discrete memoryless broadcast channel 
$(\Xset\times\Sset,\bc,\Yset_1,\Yset_2)$ consists of a finite input alphabet $\Xset$, two finite output alphabets $\Yset_1$ and $\Yset_2$, a finite state 
alphabet $\Sset$, and a collection of conditional pmfs $p(y_1,y_2|x,s)$ over $\Yset_1\times
\Yset_2$. The channel is memoryless without feedback, and therefore  
$
p(y_1^n,y_2^n|x^n,s^n)= \prod_{i=1}^n \bc(y_{1,i},y_{2,i}|x_i,s_i) 
$. 
The marginals $\sbc$ and $\wbc$ correspond to user 1 and user 2, respectively. 
  For state-dependent broadcast channels with causal SI, the channel input at time $i\in[1:n]$ may depend on the sequence of past and present states $s^i$. 

 Throughout this chapter, we assume that $\bc$ is a \emph{degraded broadcast channel} (DBC). Following the definitions by \cite{Steinberg:05p}, 
a state-dependent broadcast channel $\bc$ is said to be physically degraded if it can be expressed as 
\bieee
\bc(y_1,y_2|x,s)=W_{Y_1|X,S}(y_1|x,s)\cdot 
p(y_2|y_1) \;,
\eieee
 \ie $(X,S)\Cbar Y_1 \Cbar Y_2$ form a Markov chain. 
User 1 is then referred to as the \emph{stronger} user, whereas user 2 is referred to as the \emph{weaker} user. 
More generally, a broadcast channel is said to be stochastically degraded if  
 $\wbc(y_2|x,s)=\sum_{y_1\in\Yset_1}W_{Y_1|X,S}(y_1|x,s)\cdot 
\tp(y_2|y_1)$
for some conditional distribution $\tp(y_2|y_1)$. We note that the definition of degradedness in \cite
{Jahn:81p} is equivalent to the definition above when SI is not available, as assumed in \cite{Jahn:81p}. 
Our results apply to both the physically degraded and the stochastically degraded broadcast channels. 
Thus, for our purposes,
 there is no need to distinguish between the two, and we simply say that the broadcast channel is degraded.

The \emph{arbitrarily varying degraded broadcast channel} (AVDBC) is a discrete memoryless DBC $\bc$  with a state sequence of unknown distribution,  not necessarily independent nor stationary. That is, $S^n\sim \qn(s^n)$ with an unknown joint pmf $\qn(s^n)$ over $\Sset^n$.
In particular, $\qn(s^n)$ can give mass $1$ to some state sequence $s^n$. 
We denote the AVDBC with causal SI by $\avbc=\{\bc\}$.

To analyze the AVDBC with causal SI, we consider the
 \emph{compound  degraded broadcast channel}. 
Different models of a compound  DBC have been considered in the literature, as \eg in 
\cite{WLSSV:09p} and \cite{BPS:14a}.
Here, we define 
the compound  DBC as a discrete memoryless DBC with a discrete memoryless state, where the state distribution $q(s)$ is not known in exact, but rather belongs to a family of distributions $\Qset$, with $\Qset\subseteq \pSpace(\Sset)$. That is,  $S^n\sim\prod_{i=1}^n q(s_i)$, with an unknown pmf $q\in\Qset$ over $\Sset$.
 We denote the compound DBC with causal SI by $\Bcompound$. 

\subsection{Coding}
\label{subsec:coding}
We introduce some preliminary definitions, starting with the definitions of a deterministic code and a random code for the AVDBC $\avbc$ with 
 causal SI. Note that in general,
 the term `a code', unless mentioned otherwise, refers to a deterministic code.  

\begin{definition}[A code, an achievable rate pair and capacity region]
\label{def:Bcapacity}
A $(2^{nR_1},2^{nR_2},n)$ code for the AVDBC $\avbc$ with causal SI consists of the following;   
two  message sets $[1:2^{nR_1}]$ and $[1:2^{nR_2}]$,  where it is assumed throughout that $2^{nR_1}$ and $2^{nR_2}$ are integers,
a set of $n$ encoding functions 
$\enci:  [1:2^{nR_1}]\times[1:2^{nR_2}]\times \Sset^i \rightarrow \Xset$,   
$ i\in [1:n]$, 
 and two  decoding functions,
$
\dec_1: \Yset_1^n\rightarrow [1:2^{nR_1}]   
$
and 
$
\dec_2: \Yset_2^n\rightarrow [1:2^{nR_2}]   
$. 

At time $i\in [1:n]$, given a pair of messages $m_1\in [1:2^{nR_1}]$ and $m_2\in[1:2^{nR_2}]$ and a sequence $s^i$,
 the encoder transmits $x_i=\enci(m_1,m_2,s^i)$. The codeword is then given by 
\bieee
x^n= \encn(m_1,m_2,s^n) \triangleq \left(
\enc_1(m_1,m_2,s_1),\enc_2(m_1,m_2,s^2),\ldots,\enc_n(m_1,m_2,s^n)   \right) \;.
\eieee
Decoder $1$ receives the channel output $y_1^n$, and finds an estimate of the first message $\hm_1=g_1(y_1^n)$. Similarly, decoder 2   estimates  the second message with $\hm_2=g_2(y_2^n)$. 
We denote the code by $\code=\left(\encn(\cdot,\cdot,\cdot),\dec_1(\cdot),\dec_2(\cdot) \right)$.

 Define the conditional probability of error of $\code$ given a state sequence $s^n\in\Sset^n$ by  
\begin{align}
\label{eq:Bcerr}
\cerr(\code)
=&\frac{1}{2^{ n(R_1+R_2 )}}\sum_{m_1=1}^{2^ {nR_1}}\sum_{m_2=1}^{2^ {nR_2}}
\sum_{
\Dset(m_1,m_2)^c} 
 \nBC(y_1^n,y_2^n|\encn(m_1,m_2,s^n),s^n) \;,
\end{align}
where 
\bieee
\Dset(m_1,m_2)\triangleq
\big\{\, (y_1^n,y_2^n)\in\Yset_1^n\times\Yset_2^n:  \left(\dec_1(y_1^n),\dec_2(y_2^n) \right)= (m_1,m_2) \,\big\} \;.
\eieee
Now, define the average probability of error of $\code$ for some distribution $\qn(s^n)\in\pSpace^n(\Sset^n)$, 
\bieee
\err(\qn,\code)
=\sum_{s^n\in\Sset^n} \qn(s^n)\cdot\cerr(\code) \;.
\eieee
We say that $\code$ is a 
$(2^{nR_1},2^{nR_2},n,\eps)$ code for the AVDBC $\avbc$ if it further satisfies 
\bieee
\label{eq:Berr}
 \err(\qn,\code)\leq \eps \;,\quad\text{for all $\qn(s^n)\in\pSpace^n(\Sset^n)\,$.} 
\eieee 

  We say that a rate pair $(R_1,R_2)$ is achievable
if for every $\eps>0$ and sufficiently large $n$, there exists a  $(2^{nR_1},2^{nR_2},n,\eps)$ code. The operational capacity region is defined as the closure of the set of achievable rate pairs and it is denoted by $\BCavc$. 
 We use the term `capacity region' referring to this operational
meaning, and in some places we call it the deterministic code capacity region in order to emphasize that achievability is measured with respect to  deterministic codes.  
\end{definition} 

We proceed now to define the parallel quantities when using stochastic-encoder stochastic-decoders triplets with common randomness.
The codes formed by these triplets are referred to as random codes. 

\begin{definition}[Random code]
\label{def:BcorrC} 
A $(2^{nR_1},2^{nR_2},n)$ random code for the AVDBC $\avbc$ consists of a collection of 
$(2^{nR_1},2^{nR_2},n)$ codes $\{\code_{\gamma}=(\encn_\gamma,\dec_{1,\gamma},\dec_{2,\gamma})\}_{\gamma\in\Gamma}$, along with a probability distribution $\mu(\gamma)$ over the code collection $\Gamma$. 
We denote such a code by $\gcode=(\mu,\Gamma,\{\code_{\gamma}\}_{\gamma\in\Gamma})$.

Analogously to the deterministic case,  a $(2^{nR_1},2^{nR_2},n,\eps)$ random code has the additional requirement
\bieee
 \err(\qn,\gcode)=\sum_{\gamma\in\Gamma} \mu(\gamma)\sum_{s^n\in\Sset^n} \qn(s^n) \cerr(\code_\gamma)\leq \eps \,,\;\text{for all $\qn(s^n)\in\pSpace^n(\Sset^n)$} && \;. \qquad
\eieee
The capacity region achieved by random codes is denoted by $\BrCav$, and it 
 is referred to as the random code capacity region.
\end{definition}
 
%

Next, we write the definition of superposition coding \cite{Bergmans:73p} using  Shannon strategies \cite{Shannon:58p}. See also \cite{Steinberg:05p}, and the discussion after Theorem 4 therein. Here, we refer to such codes as Shannon strategy codes.
\begin{definition}[Shannon strategy codes] 
\label{def:BStratCode}
A $(2^{nR_1},2^{nR_2},n)$ Shannon strategy code for the AVDBC $\avbc$ with causal SI is a $(2^{nR_1},2^{nR_2},n)$ 
 code 
with an encoder that is composed of two strategy sequences
\bieee
u_1^n &:& [1:2^{nR_1}]\times [1:2^{nR_2}] \rightarrow \Uset_1^n \;, \\
u_2^n &:& [1:2^{nR_2}] \rightarrow \Uset_2^n \;,
\eieee
 and an encoding function $\encs(u_1,u_2,s)$, where $\encs:\Uset_1\times\Uset_2\times\Sset\rightarrow \Xset$,  
as well as a pair of decoding functions
$
\dec_1: \Yset_1^n\rightarrow [1:2^{nR_1}] 
$ and 
$
\dec_2: \Yset_2^n\rightarrow [1:2^{nR_2}] 
$. The codeword is then given by 
\bieee
\label{eq:BStratEnc}
x^n=\encs^n(u_1^n(m_1,m_2),u_2^n(m_2),s^n)\triangleq \big[\, \encs(u_{1,i}^n(m_1,m_2),u_{2,i}^n(m_2),s_i) \,\big]_{i=1}^n \;.
\eieee
We denote the code by $\code=\left(u_1^n,u_2^n,\encs,\dec_1,\dec_2 \right)$.
\end{definition}


\subsection{In the Absence of Side Information -- Inner Bound } 
\label{sec:BNOsi}
In this subsection, we briefly review known results for the case where the state is not known to the encoder or the decoder, \ie SI is not available. 

Consider a given AVDBC without SI, which we denote by $\Bavcig$.  
Let
\bieee
\label{eq:BrCavc0def}
\BrIRavig\triangleq
\bigcup_{p(x,u)} \bigcap_{q(s)}    
\left\{
\begin{array}{lll}
(R_1,R_2) \,:\; & R_2 &\leq   I_q(U;Y_2) \;, \\
								& R_1 &\leq   I_q(X;Y_1|U)  
\end{array}
\right\} 
\eieee  
In \cite[Theorem~2]{Jahn:81p}, Jahn introduced an inner bound for 
 the arbitrarily varying \emph{general} broadcast channel. In our case, where the broadcast channel is assumed to be degraded, Jahn's inner bound reduces to the following. 

\begin{theorem}[Jahn's Inner Bound] \cite{Jahn:81p}
\label{Btheo:avcC0R}
Let $\Bavcig$ be an AVDBC  without SI. Then, $\BrIRavig$ is an achievable rate region using random codes over $\Bavcig$, \ie
\bieee
\BrCavig \supseteq \BrIRavig \;.
\eieee
\end{theorem}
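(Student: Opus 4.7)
The plan is a two-stage argument: first establish achievability of $\BrIRavig$ for the compound DBC $\Bcompound$ with $\Qset=\pSpace(\Sset)$ via superposition coding, and then lift the result to the AVDBC $\Bavcig$ by invoking Ahlswede's Robustification Technique (the broadcast analogue of Lemma~\ref{lemm:LRT}, applied with no state constraint).

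Fix a joint distribution $p(u,x)$ and a rate pair $(R_1,R_2)$ in the interior of the region. First I would draw cloud centers $U^n(m_2) \sim \prod_{i=1}^n p_U(u_i)$ independently for $m_2\in[1:2^{nR_2}]$, and then satellite codewords $X^n(m_1,m_2) \sim \prod_{i=1}^n p_{X|U}(x_i\,|\,U_i(m_2))$ independently for $m_1\in[1:2^{nR_1}]$. The weaker user~2 performs joint-typicality decoding on $(u^n,y_2^n)$ and the stronger user~1 performs it on $(u^n,x^n,y_1^n)$. For any fixed $q\in\pSpace(\Sset)$ with $S^n$ i.i.d.\ $\sim\prod q(s_i)$, the standard analysis of superposition coding gives an exponentially decaying expected error probability whenever $R_2<I_q(U;Y_2)$ and $R_1<I_q(X;Y_1|U)$.

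To obtain a single random codebook whose error probability is small for \emph{every} $q\in\pSpace(\Sset)$ simultaneously, I would replace the continuous family $\pSpace(\Sset)$ by the polynomial-size set of empirical types $\tQ\subseteq \pSpace(\Sset)$, exploit the uniform continuity of $q \mapsto I_q(U;Y_2)$ and $q \mapsto I_q(X;Y_1|U)$, and apply a union bound over $\tQ$. Derandomization then yields a deterministic code $\code^*$ with $\err(\prod q(s_i),\code^*)\leq \eps_n\to 0$ uniformly in $q\in\pSpace(\Sset)$.

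For the lift, set $h(s^n)\triangleq \cerr(\code^*)$, which satisfies $h(s^n)\in[0,1]$ and $\sum_{s^n}\prod q(s_i)\,h(s^n)\leq \eps_n$ for every $q\in\pSpace(\Sset)$. Applying Ahlswede's RT (with $\Lambda=l_{\max}$, i.e.\ without state constraints) gives
\[
\frac{1}{n!}\sum_{\pi\in\Pi_n} h(\pi s^n) \leq (n+1)^{|\Sset|}\,\eps_n \triangleq \beta_n,\qquad \forall\, s^n\in\Sset^n.
\]
Now define the random code $\gcode=(\mu^*,\Pi_n,\{\code^*_\pi\}_{\pi\in\Pi_n})$, where $\mu^*$ is uniform on $\Pi_n$ and $\code^*_\pi$ is obtained from $\code^*$ by permuting the time indices of its encoding function and of both decoding functions by $\pi$. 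By the memoryless symmetry of $\bc$, permuting the code by $\pi$ is equivalent to permuting the state sequence by $\pi^{-1}$, so $\err(\qn,\gcode)\leq \beta_n$ for every $\qn\in\pSpace^n(\Sset^n)$. Since $\beta_n\to 0$, this establishes $\BrCavig\supseteq \BrIRavig$.

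The main obstacle is the uniformity step in the compound analysis: producing one deterministic code $\code^*$ that works against \emph{all} $q\in\pSpace(\Sset)$ rather than a single i.i.d.\ state law. This is handled by the $\tQ$-discretization together with continuity of mutual information, and is necessary before RT can be invoked. The RT step itself is routine once one checks that the time-permutation preserves the superposition structure (both cloud centers and satellites are permuted coherently, and both decoders are permuted in lockstep), so that decoding error at each user is unchanged under the simultaneous permutation of code and state sequence.
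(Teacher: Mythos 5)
Your argument is sound. The paper does not actually prove this theorem --- it is quoted from \cite{Jahn:81p} --- but your route (superposition coding for the compound DBC with $\Qset=\pSpace(\Sset)$, uniformity in $q$ via the polynomial type set $\tQ$, then Ahlswede's Robustification Technique over uniformly random time permutations of the code) is precisely the no-SI specialization of the strategy the paper uses for its causal-SI counterpart, namely Lemma~\ref{lemm:BcompoundLowerB} followed by the permutation argument in the proof of Theorem~\ref{theo:Bmain}. Without SI the causality complication that forces the paper to permute the strategy sequences $(u_1^n,u_2^n)$ rather than the codewords disappears, so your direct permutation of the codeword symbols and of both decoders in lockstep is valid.
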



Now we move to the deterministic code capacity region. 
\begin{theorem}[Ahlswede's Dichotomy] \cite{Jahn:81p}
\label{theo:BavcC0}
The capacity region of   an AVDBC $\Bavcig$  without SI either coincides with the random code capacity region or else, its interior is empty.
That is, 
$\BCavcig = \BrCavig$ 
 or else, 
$\interior{\BCavcig}=\emptyset$. 
\end{theorem}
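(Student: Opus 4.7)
The plan is to adapt Ahlswede's Elimination Technique (ET), as used in the single-user case (Lemma~\ref{lemm:LcorrSizeC}), to the broadcast setting. One direction is immediate: a deterministic code is a random code supported on a single realization, so $\BCavcig \subseteq \BrCavig$ always. It therefore suffices, assuming $\interior{\BCavcig}\neq \emptyset$, to prove the reverse inclusion $\BrCavig \subseteq \BCavcig$.

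First I would reduce to a polynomial-size random code. Fix a rate pair $(R_1,R_2)$ in the interior of $\BrCavig$, and for each large $n$ pick a $(2^{nR_1},2^{nR_2},n,\eps_n)$ random code $\gcode=(\mu,\Gamma,\{\code_\gamma\})$ with $\eps_n\to 0$. Arguing as in Lemma~\ref{lemm:LcorrSizeC}, draw $N=n^{2}$ indices $\gamma_1,\ldots,\gamma_N$ i.i.d.\ from $\mu$ and apply a Chernoff--Hoeffding bound to the $[0,1]$-valued random variables $\cerr(\code_{\gamma_i})$, which already dominate both decoders' errors. The resulting concentration is doubly exponential, so a union bound over the singly-exponentially many deterministic state sequences $s^n\in\Sset^n$ produces a realization $\Gamma^*$ of size at most $n^2$ whose uniform mixture $\gcode^*$ is a $(2^{nR_1},2^{nR_2},n,2\eps_n)$ random code, reliable against every $\qn\in\pSpace^n(\Sset^n)$.

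Next I would transmit the seed deterministically and concatenate. The seed $\gamma\in\Gamma^*$ is described by $\lceil 2\log n\rceil$ bits. Because $\interior{\BCavcig}\neq\emptyset$, there exist positive rates $R_1',R_2'>0$ and, for every $\delta>0$, a deterministic broadcast code over $\Bavcig$ that carries the same common message to both users at rate $\min(R_1',R_2')$ with vanishing error, uniformly in the state sequence. Use a prefix block of length $n_p=O(\log n)$ built from this code to broadcast $\gamma$ to both decoders, then switch to $\code_\gamma$ for the remaining $n$ symbols; each decoder first estimates $\hat\gamma$ from $y_j^{n_p}$, then applies $g_{j,\hat\gamma}$ to its remaining outputs. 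By the union bound the overall error is at most $\delta+2\eps_n$, uniformly in the adversary's strategy, and the effective rates $nR_j/(n+n_p)$ tend to $R_j$ since $n_p=o(n)$. Hence $(R_1,R_2)\in\BCavcig$ and $\BCavcig=\BrCavig$.

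The main obstacle lies in the elimination step: we must control both decoders' error events simultaneously under adversarial, possibly nonstationary states, while shrinking $|\Gamma|$ to $n^2$ and keeping the mixture uniform. The resolution is to apply the Chernoff bound to the single combined error functional $\cerr(\code_\gamma)$ and to note that the worst case for the random code is attained on deterministic state sequences, so only $|\Sset|^n$ events need to be union-bounded. As in the single-user ET, one first passes from the maximal to the average error so that the Chernoff step operates on i.i.d.\ $[0,1]$ variables; the average-to-maximal expurgation is then performed after the reduction, without affecting either rate or the polynomial code-collection size.
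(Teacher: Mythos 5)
Your proposal is correct and follows essentially the same route as the paper: the elimination step is exactly Lemma~\ref{lemm:BcorrSizeC} (Bernstein/Chernoff on the i.i.d.\ $[0,1]$-valued $\cerr(\code_{\gamma_j})$, union bound over the $|\Sset|^n$ kernels, $|\Gamma^*|=n^2$), and the concatenation of a short positive-rate prefix carrying $\gamma$ to both decoders is the argument of Appendix~\ref{app:BcorrTOdetC}. The only quibbles are cosmetic: the concentration is super-exponential ($e^{-cn^2}$) rather than doubly exponential, and no average-to-maximal expurgation is needed since the capacity region here is defined under the average error criterion.
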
 
By Theorem~\ref{Btheo:avcC0R} and Theorem~\ref{theo:BavcC0}, we have that $\BrIRavig$ is an achievable rate region, 
if the interior of the capacity region is non-empty. That is,
$
\BCavcig \supseteq \BrIRavig 
$, if 
$ \interior{\BCavcig}\neq\emptyset$. 
\begin{theorem} \cite{
Ericson:85p,CsiszarNarayan:88p,HofBross:06p}
\label{theo:Bsymm0}
For an AVDBC $\Bavcig$ without SI, the interior of the capacity region is non-empty, \ie  $\interior{\BCavcig}\neq\emptyset$, if and only if the marginal $\wbc$ is \emph{not} symmetrizable. 
\end{theorem}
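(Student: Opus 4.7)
The plan is to prove the equivalence in two directions, leveraging the dichotomy of Theorem~\ref{theo:BavcC0} together with the single-user symmetrization theory (Theorem~\ref{theo:symm0}).

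For the necessity direction ($\wbc$ symmetrizable implies $\interior{\BCavcig}=\emptyset$), I adapt the Csisz\'ar--Narayan lower bound to the broadcast setting. Fix a symmetrizing kernel $J(s|x)$ for $\wbc$ and, given any deterministic $(2^{nR_1},2^{nR_2},n,\eps)$ code, consider the jamming distribution $q^n$ obtained by sampling a fake message pair $(\tilde m_1,\tilde m_2)$ uniformly at random and independently of the true $(m_1,m_2)$, computing $\tilde x^n=\encn(\tilde m_1,\tilde m_2)$, and then drawing $S_i\sim J(\cdot\given \tilde x_i)$ independently across $i$. The resulting marginal over $\Sset^n$ is a valid element of $\pSpace^n(\Sset^n)$, so the code's uniform error bound applies. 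By symmetrizability, the conditional law of $y_2^n$ given $(x^n,\tilde x^n)$ factors as $\prod_i \sum_{s}\wbc(y_{2,i}\given x_i,s)J(s\given \tilde x_i)$ and is invariant under swapping $x_i\leftrightarrow\tilde x_i$, hence under swapping $(m_1,m_2)\leftrightarrow(\tilde m_1,\tilde m_2)$. Consequently $\prob{\dec_2(y_2^n)=m_2}=\prob{\dec_2(y_2^n)=\tilde m_2}$, and since decoder 2's error is upper bounded by the joint code error, both probabilities exceed $1-\eps$. A union bound gives $\prob{m_2=\tilde m_2}\geq 1-2\eps$, while $m_2,\tilde m_2$ are independent uniform on $[1:2^{nR_2}]$, forcing $2^{-nR_2}\geq 1-2\eps$ and hence $R_2\to 0$ as $\eps\to 0$. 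Thus $R_2=0$ for every achievable pair and the interior is empty.

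For the sufficiency direction, the first step is to transfer non-symmetrizability from the weaker user to the stronger user via the degradation structure: if $J(s|x)$ symmetrized $\sbc$, then
\begin{equation*}
\sum_s\wbc(y_2\given x_1,s)J(s\given x_2)
=\sum_{y_1}\tp(y_2\given y_1)\sum_s\sbc(y_1\given x_1,s)J(s\given x_2)
=\sum_s\wbc(y_2\given x_2,s)J(s\given x_1),
\end{equation*}
so $\wbc$ would also be symmetrizable. By the contrapositive, non-symmetrizability of $\wbc$ forces non-symmetrizability of $\sbc$, and Theorem~\ref{theo:symm0} yields strictly positive single-user deterministic capacities for both $\{\sbc\}$ and $\{\wbc\}$, say exceeding $R_1^\circ,R_2^\circ>0$. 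I would then time-share: take length-$n$ single-user codes $\code^{(1)},\code^{(2)}$ at these rates with worst-case error at most $\eps$, and concatenate them into a length-$2n$ BC code where the first block carries $m_1$ via $\code^{(1)}$ (with $\dec_1$ using only the first half of $y_1^n$) and the second block carries $m_2$ via $\code^{(2)}$ (with $\dec_2$ using only the second half of $y_2^n$). The single-user guarantees apply to each half of any adversarial $q^{2n}$, so a union bound on the two decoders' errors yields total error $\leq 2\eps$, making $(R_1^\circ/2,R_2^\circ/2)$ achievable with both coordinates strictly positive. Hence $\interior{\BCavcig}\neq\emptyset$.

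The main obstacle is the necessity direction, specifically legitimizing the averaged jamming strategy as a genuine $q^n\in\pSpace^n(\Sset^n)$ and tracking the error bound through from the joint decoder error down to $\dec_2$ alone. The presence of the second decoder and the extra message $m_1$ complicate the counting relative to the single-user case, and some care is needed to ensure that the symmetry-induced coupling between $m_2$ and $\tilde m_2$ genuinely forces $R_2\to 0$ rather than merely constraining a joint quantity.
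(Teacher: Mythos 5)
Your argument is correct, and it matches the standard route of the works the paper cites for this result (the paper itself states Theorem~\ref{theo:Bsymm0} without proof): necessity is Ericson's randomized ``fake codeword'' jamming strategy applied to decoder~2 --- the mixture $q^n(s^n)=2^{-n(R_1+R_2)}\sum_{\tilde m_1,\tilde m_2}\prod_i J\bigl(s_i\given \tilde x_i(\tilde m_1,\tilde m_2)\bigr)$ is indeed a legitimate element of $\pSpace^n(\Sset^n)$, and your symmetry/union-bound chain correctly forces $2^{-nR_2}\geq 1-2\eps$ and hence $R_2=0$ for every achievable pair. Sufficiency via the degradation identity (non-symmetrizability of $\wbc$ implies non-symmetrizability of $\sbc$), Theorem~\ref{theo:symm0} for each marginal, and time-sharing two single-user codes (each half of an adversarial $q^{2n}$ is handled by the corresponding marginal distribution, since the channel is memoryless) cleanly exhibits an interior point of $\BCavcig$ without needing the dichotomy of Theorem~\ref{theo:BavcC0}.
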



\section{Results}
We present our results on the compound DBC and the AVDBC with causal SI. 

\subsection{The Compound DBC with Causal SI}  
\label{sec:Bcompound}
We now consider the case where the encoder has access to the state sequence in a causal manner, \ie
 the encoder has $S^i$. 
%

\subsubsection{Inner Bound}
 First, we provide an achievable rate region for the compound  DBC with causal SI. Consider a given compound DBC $\Bcompound$ with causal SI. 
Let 
\bieee  
\label{eq:BIRcompound} 
\BIRcompound \triangleq\bigcup_{p(u_1,u_2),\,\encs(u_1,u_2,s)}\, \bigcap_{q(s)\in\Qset} 
\left\{
\begin{array}{lll}
(R_1,R_2) \,:\; & R_2 &\leq   I_q(U_2;Y_2) \;, \\
								& R_1 &\leq   I_q(U_1;Y_1|U_2)  
\end{array}
\right\} 
\eieee
subject to $X=\encs(U_1,U_2,S)$, where $U_1$ and $U_2$ are auxiliary random variables, independent of $S$, and the union is over the pmf $p(u_1,u_2)$ and the set of all functions $\encs:\Uset_1\times\Uset_2\times\Sset\rightarrow\Xset$.
This can also be expressed as
\bieee
\label{eq:BIRcompoundEQ}
\BIRcompound=\bigcup_{p(u_1,u_2),\,\encs(u_1,u_2,s)} 
\left\{
\begin{array}{lll}
(R_1,R_2) \,:\; & R_2 &\leq   \inf_{q\in\Qset} I_q(U_2;Y_2) \;, \\
								& R_1 &\leq   \inf_{q\in\Qset} I_q(U_1;Y_1|U_2)  
\end{array}
\right\}\;.
\eieee 


\begin{lemma}
\label{lemm:BcompoundLowerB}
Let $\Bcompound$ be a compound DBC with causal SI available at the encoder. Then, $\BIRcompound$ is an achievable rate region for $\Bcompound$, \ie 
\bieee
\BCcompound \supseteq \BIRcompound \;.
\eieee
Specifically, if $(R_1,R_2)\in\BIRcompound$, then for some $a>0$ and sufficiently large $n$, there exists a $(2^{nR_1},2^{nR_2},n,e^{-an})$ Shannon strategy code over the compound DBC $\Bcompound$ with causal SI.
\end{lemma}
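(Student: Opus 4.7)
My plan is to prove achievability via Shannon strategy superposition coding, combined with a universal decoding rule that handles the unknown state distribution $q\in\Qset$ simultaneously. The key observation is that once the encoder is fixed to the Shannon strategy form $x_i=\encs(u_{1,i},u_{2,i},s_i)$, the pair $(U_1,U_2)$ is independent of $S^n$, so the induced channel from the strategies to the outputs under state pmf $q$ is the memoryless channel
\[
V^{\encs}_q(y_1,y_2\mid u_1,u_2) \;=\; \sum_{s\in\Sset} q(s)\,\bc\bigl(y_1,y_2\,\big|\,\encs(u_1,u_2,s),s\bigr),
\]
and the degradedness of $\bc$ (together with independence of $S$ from $(U_1,U_2)$) is inherited by $V^{\encs}_q$ in the sense that $U_2 \Cbar U_1 \Cbar Y_1\Cbar Y_2$ under $V^{\encs}_q$. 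Thus the problem reduces to a \emph{compound} degraded broadcast channel in $(U_1,U_2;Y_1,Y_2)$ indexed by $q\in\Qset$, with no state at any terminal.

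The plan is to fix $p(u_1,u_2)$ and $\encs$, and use standard superposition codebook generation: draw $u_2^n(m_2)$ i.i.d.\ $\sim\prod_{i=1}^{n}p(u_{2,i})$ for $m_2\in[1{:}2^{nR_2}]$, and conditionally i.i.d.\ cloud centers $u_1^n(m_1,m_2)\sim\prod_{i=1}^{n}p(u_{1,i}\mid u_{2,i}(m_2))$ for each $m_2$. At time $i$, the encoder (which has causal access to $s^i$) transmits $x_i=\encs(u_{1,i}(m_1,m_2),u_{2,i}(m_2),s_i)$; this produces a bona fide Shannon strategy code in the sense of Definition~\ref{def:BStratCode}. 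Because $q$ is unknown, both decoders use a universal decoding rule; I would take the maximum mutual information (MMI) decoder from the method of types: decoder 2 searches for the unique $\hm_2$ such that the empirical mutual information $I(\hP_{u_2^n(\hm_2),y_2^n})$ is maximized (with a tie/threshold rule), and decoder 1, which is the stronger user, first decodes $\hm_2$ from $y_1^n$ by MMI and then decodes $\hm_1$ by maximizing the conditional empirical mutual information $I(\hP_{u_1^n(m_1,\hm_2);\,y_1^n\mid u_2^n(\hm_2)})$.

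The error analysis is then a standard type-counting argument, done uniformly in $q$. Partition the state sequences by their empirical type $\hq=\hP_{s^n}$; for each $\hq\in\pSpace(\Sset)$ and each codebook realization, the conditional error probability of the MMI decoder decays exponentially once $R_2<I_{\hq}(U_2;Y_2)$ (for decoder 2 and for the first stage of decoder 1, which is at least as good because $I_{\hq}(U_2;Y_1)\ge I_{\hq}(U_2;Y_2)$ by degradedness), and once $R_1<I_{\hq}(U_1;Y_1\mid U_2)$ (for the second stage of decoder 1). Since there are only polynomially many types, the exponential bound is uniform over all $\hq$, and hence, by taking expectation over $S^n\sim\prod q(s_i)$, uniform over all $q\in\Qset$. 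Averaging over the random codebook and picking a realization that meets all decoding exponents gives a deterministic Shannon strategy code attaining error $\le e^{-an}$.

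The main obstacle I anticipate is establishing the exponentially small error bound \emph{uniformly} over the entire family $\Qset$ (which may be infinite), rather than for a single $q$. The method of types circumvents this neatly because the error exponent depends on $\qn(s^n)$ only through its empirical type $\hq$, and there are at most $(n+1)^{|\Sset|}$ such types; so a single universal exponent valid for every $\hq\in\pSpace(\Sset)$ automatically upgrades to a bound valid for every $\qn(s^n)=\prod q(s_i)$ with $q\in\Qset$. A secondary care-point is verifying that superposition with Shannon strategies does not break the degradedness-induced chain $U_2\Cbar U_1\Cbar Y_1\Cbar Y_2$ under $V^{\encs}_q$; this follows because $(U_1,U_2)\perp S$, so conditioning on $U_1$ fixes the state-to-output kernel $\sum_s q(s)\bc(\cdot\mid \encs(U_1,U_2,s),s)$ and the physical/stochastic degradedness of $\bc$ is preserved.
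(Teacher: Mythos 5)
Your proposal is correct and follows essentially the same route as the paper's proof: identical superposition codebook generation with Shannon strategies $x_i=\encs(u_{1,i},u_{2,i},s_i)$, the same reduction to a compound DBC in $(U_1,U_2;Y_1,Y_2)$ indexed by $q$, and the same method-of-types argument exploiting the polynomial number of state types to get a uniform exponential bound over $\Qset$. The only difference is cosmetic — you decode by MMI (with successive decoding at the stronger receiver) whereas the paper uses joint typicality against a $\delta$-net of state types in $\Qset$ — and both universal decoders yield the region $\BIRcompound$ with the same exponent structure.
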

The proof of Lemma~\ref{lemm:BcompoundLowerB} is given in Appendix~\ref{app:BcompoundLowerB}.

\subsubsection{The Capacity Region}
\label{subsec:BcvC}
We determine the capacity region of the compound DBC $\Bcompound$ with causal SI available at the encoder. 
In addition, we give a condition, for which the inner bound in Lemma~\ref{lemm:BcompoundLowerB} coincides with the capacity region.
For every $q\in\Qset$, define
\bieee
\BICrp \triangleq 
\bigcup_{p(u_1,u_2),\,\encs(u_1,u_2,s)}
\left\{
\begin{array}{lll}
(R_1,R_2) \,:\; & R_2 &\leq   I_q(U_2;Y_2) \;, \\
								& R_1 &\leq   I_q(U_1;Y_1|U_2)  
\end{array}
\right\} \;,
\label{eq:BRPcausalIex}
\eieee
and let
\bieee  
\label{eq:BcvCI}
\BICcompound \triangleq  \bigcap_{q(s)\in\Qset} \BICrp && \;.
\eieee
 
Now, our condition is defined in terms of the following.
\begin{definition}
\label{def:Bcompoundachieve} 
We say that a function $\encs:\Uset_1\times\Uset_2\times\Sset\rightarrow\Xset$ and a set $\Dset\subseteq\pSpace(\Uset_1\times\Uset_2)$ achieve both $\BIRcompound$ and $\BICcompound$ if 
\begin{subequations}
\label{eq:Bcompoundachieve} 
\begin{align}  
\label{eq:BIRcompoundachieve} 
\BIRcompound =\bigcup_{p(u_1,u_2)\in\Dset}\, \bigcap_{q(s)\in\Qset} 
\left\{
\begin{array}{lll}
(R_1,R_2) \,:\; & R_2 &\leq   I_q(U_2;Y_2) \;, \\
								& R_1 &\leq   I_q(U_1;Y_1|U_2)  
\end{array}
\right\} \;,
\intertext{and}
\label{eq:BICcompoundachieve} 
\BICcompound = \bigcap_{q(s)\in\Qset}\, \bigcup_{p(u_1,u_2)\in\Dset} 
\left\{
\begin{array}{lll}
(R_1,R_2) \,:\; & R_2 &\leq   I_q(U_2;Y_2) \;, \\
								& R_1 &\leq   I_q(U_1;Y_1|U_2)  
\end{array}
\right\} \;,
\end{align}
\end{subequations}
subject to $X=\encs(U_1,U_2,S)$. That is, the unions in (\ref{eq:BIRcompound}) and (\ref{eq:BRPcausalIex})  can be restricted to the particular
function $\encs(u_1,u_2,s)$ and set of strategy distributions $\Dset$.
\end{definition}
Observe that by Definition~\ref{def:Bcompoundachieve},  given a function $\encs(u_1,u_2,s)$, if
a set $\Dset$ achieves both $\BIRcompound$ and $\BICcompound$, then every set $\Dset'$ with $\Dset\subseteq\Dset'\subseteq\pSpace(\Uset_1\times\Uset_2)$ achieves those regions, 
 and in particular,  $\Dset'=\pSpace(\Uset_1\times\Uset_2)$. Nevertheless, the condition defined below requires a certain property  that may hold for $\Dset$, but not for $\Dset'$. 

\begin{definition} 
\label{def:sCondQ}
Given a convex set $\Qset$ of state distributions, define the condition $\sCondQ$ by the following;
for some $\encs(u_1,u_2,s)$ and $\Dset$ that achieve both $\BIRcompound$ and $\BICcompound$,
 there exists $q^*\in\Qset$ which minimizes both $I_q(U_2;Y_2)$ and $I_q(U_1;Y_1|U_2)$, for all $p(u_1,u_2)\in\Dset$, 
\ie
\begin{IEEEeqnarray}{ll}
\sCondQ \,:\; &\text{For some $q^*\in\Qset$,}  \IEEEnonumber\\
 &q^*=\arg\min_{q\in\Qset} I_q(U_2;Y_2)=\arg\min_{q\in\Qset} I_q(U_1;Y_1|U_2) \;, \quad
\IEEEnonumber\\
& \forall p(u_1,u_2)\in\Dset 
 \;.
\label{eq:TQ}
\end{IEEEeqnarray}
\end{definition}

\begin{theorem}
\label{theo:BcvC} 
Let $\Bcompound$ be a compound DBC  with causal SI available at the encoder. Then,
\begin{enumerate}[{1)}]
\item
the capacity region of $\Bcompound$ follows 
\bieee 
\label{eq:BcvC}
\BCcompound= \BICcompound  \;,\;\,\text{if $\interior{\BCcompound}\neq\emptyset$}\;, 
\eieee
and it is identical to the corresponding random code capacity region, \ie  $\BrCcompound=\BCcompound$ if $\interior{\BCcompound}\neq\emptyset$.
\item
Suppose that $\Qset\subseteq\pSpace(\Sset)$ is a convex set of state distributions. If the condition $\sCondQ$ holds, the capacity region of  $\Bcompound$ is given by 
\bieee
\BCcompound=\BIRcompound=\BICcompound \;,
\eieee
and it is identical to the corresponding random code capacity region, \ie $\BrCcompound=\BCcompound$. 
\end{enumerate}
%
\end{theorem}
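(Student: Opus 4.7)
The plan is to prove both inclusions of Part~1 separately, and then obtain Part~2 as a corollary of the condition $\sCondQ$.

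For the converse in Part~1, I fix $q\in\Qset$ and observe that any $(2^{nR_1},2^{nR_2},n,\varepsilon)$ code for $\Bcompound$ is, in particular, a code for the single-distribution DBC obtained by taking $S^n\sim\prod_i q(s_i)$, whose capacity region is $\BICrp$ by Bergmans' superposition coding combined with Shannon strategies~\cite{Bergmans:73p,Shannon:58p,Steinberg:05p}. Intersecting over $q\in\Qset$ yields $\BCcompound\subseteq\BICcompound$, and the same bound holds for $\BrCcompound$ since common randomness does not enlarge the capacity region of a single memoryless channel.

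For the achievability in Part~1 I would proceed in two steps. First, establish the random-code inclusion $\BICcompound\subseteq\BrCcompound$ by building a superposition Shannon-strategy random code paired with a universal (maximum-mutual-information-style) decoder that operates without knowing $q$: the common randomness is used both to randomize the codebook and to randomize over a finite grid of candidate strategy distributions $(p,\encs)$ that covers $\Qset$, and the universal decoder at each user selects the codeword best matching its output under the empirical state type, yielding vanishing error uniformly in $q$. Second, convert the random code into a deterministic one via Ahlswede's elimination technique, adapted to the broadcast setting exactly as in the dichotomy proof of Theorem~\ref{theo:BavcC0}: the hypothesis $\interior{\BCcompound}\neq\emptyset$ supplies a positive-rate deterministic broadcast code that can convey to both users the $O(\log n)$ bits identifying the random-code realization at negligible rate cost.

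For Part~2, the condition $\sCondQ$ collapses $\BIRcompound$ and $\BICcompound$ into a single set. Picking $\encs$ and $\Dset$ that achieve both regions and the common minimizer $q^*\in\Qset$, the inner region at $q^*$ is contained in the inner region at every other $q\in\Qset$ (because $q^*$ simultaneously minimizes both mutual informations for every $p\in\Dset$), so both $\bigcup_{p\in\Dset}\bigcap_{q}$ and $\bigcap_{q}\bigcup_{p\in\Dset}$ reduce to $\bigcup_{p\in\Dset}\{R_2\leq I_{q^*}(U_2;Y_2),\,R_1\leq I_{q^*}(U_1;Y_1|U_2)\}$. Hence $\BIRcompound=\BICcompound$, and combining Lemma~\ref{lemm:BcompoundLowerB} (which gives $\BCcompound\supseteq\BIRcompound$ unconditionally) with the Part~1 converse yields the coincidence of all three regions \emph{without} the interior assumption.

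The main obstacle is the random-code achievability of $\BICcompound$ in step one of Part~1: unlike $\BIRcompound$, whose achievability rests on a single fixed $(p,\encs)$ working uniformly in $q$, achieving $\BICcompound$ forces the code to adapt $(p,\encs)$ to the unknown $q$, so one must produce a universal successive decoder for degraded broadcast channels whose error exponents are uniform both in $q\in\Qset$ and across the grid of candidate strategy distributions, while respecting the superposition structure.
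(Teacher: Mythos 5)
Your converse for Part 1 and your entire argument for Part 2 are correct and coincide with the paper's proof: the converse is exactly the reduction to the single-$q$ DBC of \cite{Steinberg:05p} followed by intersection over $\Qset$, and Part 2 is obtained, as in the paper, by showing that $\sCondQ$ collapses both $\bigcap_q\bigcup_p$ and $\bigcup_p\bigcap_q$ onto the region evaluated at the common minimizer $q^*$, so that Lemma~\ref{lemm:BcompoundLowerB} together with the converse squeezes $\BIRcompound\subseteq\BCcompound\subseteq\BrCcompound\subseteq\BICcompound$ into equalities without any interior assumption.

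The genuine gap is in your achievability of $\BICcompound$ in Part 1. Using the common randomness to draw $(p,\encs)$ from a finite grid covering $\Qset$ cannot work: for the actual state distribution $q$, only the grid points matched to $q$ support the target rates, while for a mismatched grid point $(p_j,\encs_j)$ the rate $R_2$ (say) exceeds $I_q(U_2;Y_2)$ evaluated under $p_j$, so the conditional error probability of that code realization over the channel $q$ is bounded away from zero no matter which decoder is used. Averaging over the grid therefore leaves a non-vanishing error for every $q$, and $\BICcompound\subseteq\BrCcompound$ is not established. A universal (MMI-type) decoder only cures the \emph{decoder's} ignorance of $q$ for a fixed codebook, which is precisely what already yields $\BIRcompound$ in Lemma~\ref{lemm:BcompoundLowerB}; it cannot make a codebook support rates above the mutual information it induces on the actual channel. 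The resource your construction never exploits is the encoder's causal access to $s^n$: in the paper's proof, once the message block is over the encoder knows the empirical type of $s^n$, and the hypothesis $\interior{\BCcompound}\neq\emptyset$ is used to convey that type to both receivers through an $o(n)$ codeword \emph{suffix}, after which each receiver performs typicality decoding matched to the (now essentially known) state distribution rather than to a worst-case $q$. This is also why no elimination step appears in the paper's compound-channel argument: since only polynomially many state types must be handled simultaneously, the random-coding argument directly produces a deterministic code, and Ahlswede's ET is reserved for the AVDBC in Theorem~\ref{theo:BcorrTOdetC}.
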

The proof of Theorem~\ref{theo:BcvC} is given in Appendix~\ref{app:BcvC}. 
\subsection{The AVDBC with Causal SI}

We give inner and outer bounds, on the random code capacity region and the deterministic code capacity region, for the 
AVDBC $\avbc$ with causal SI. We also provide conditions, for which the inner bound coincides with the outer bound.
\subsubsection{Random Code Inner and Outer Bounds}
Define 
\begin{align}
\label{eq:BIRcompoundP} 
&\BIRavc \triangleq\bigcup_{p(u_1,u_2),\,\encs(u_1,u_2,s)}\; \bigcap_{q(s)} \,
\left\{
\begin{array}{lll}
(R_1,R_2) \,:\; & R_2 &\leq   I_q(U_2;Y_2) \;, \\
								& R_1 &\leq   I_q(U_1;Y_1|U_2)  
\end{array}
\right\} \;,
\intertext{and}
\label{eq:BrICav}
&\BrICav\triangleq \bigcap_{q(s)}\;  \bigcup_{p(u_1,u_2),\,\encs(u_1,u_2,s)} \,
\left\{
\begin{array}{lll}
(R_1,R_2) \,:\; & R_2 &\leq   I_q(U_2;Y_2) \;, \\
								& R_1 &\leq   I_q(U_1;Y_1|U_2)  
\end{array}
\right\} \;.
\end{align}

Now, we define a condition in terms of the following.
\begin{definition}
\label{def:Bachieve} 
We say that a function $\encs:\Uset_1\times\Uset_2\times\Sset\rightarrow\Xset$ and a set $\Dset^{\rstarC}\subseteq\pSpace(\Uset_1\times\Uset_2)$ achieve both $\BIRavc$ and $\BrICav$ if 
\begin{subequations}
\label{eq:Bachieve} 
\begin{align}  
\label{eq:BIRachieve} 
\BIRavc =\bigcup_{p(u_1,u_2)\in\Dset^{\rstarC}}\, \bigcap_{q(s)} 
\left\{
\begin{array}{lll}
(R_1,R_2) \,:\; & R_2 &\leq   I_q(U_2;Y_2) \;, \\
								& R_1 &\leq   I_q(U_1;Y_1|U_2)  
\end{array}
\right\} \;,
\intertext{and}
\label{eq:BICachieve} 
\BrICav = \bigcap_{q(s)}\, \bigcup_{p(u_1,u_2)\in\Dset^{\rstarC}} 
\left\{
\begin{array}{lll}
(R_1,R_2) \,:\; & R_2 &\leq   I_q(U_2;Y_2) \;, \\
								& R_1 &\leq   I_q(U_1;Y_1|U_2)  
\end{array}
\right\} \;,
\end{align}
\end{subequations}
subject to $X=\encs(U_1,U_2,S)$. That is, the unions in (\ref{eq:BIRcompoundP}) and (\ref{eq:BrICav}) can be restricted to the particular function $\encs(u_1,u_2,s)$ and set of strategy distributions $\Dset^{\rstarC}$.
\end{definition}

\begin{definition} 
\label{def:sCond}
Define the condition $\sCond$ by the following;
for some $\encs(u_1,u_2,s)$ and $\Dset^{\rstarC}$ that achieve both $\BIRavc$ and $\BrICav$,
 there exists $q^*\in\pSpace(\Sset)$ which minimizes both $I_q(U_2;Y_2)$ and $I_q(U_1;Y_1|U_2)$, for all 
$p(u_1,u_2)\in\Dset^{\rstarC}$, 
\ie
\begin{IEEEeqnarray*}{ll}
\sCond \,:\; &\text{For some $q^*\in\pSpace(\Sset)$,}  \\ 
 &q^*=\arg\min_{q(s)} I_q(U_2;Y_2)=\arg\min_{q(s)} I_q(U_1;Y_1|U_2) 
\quad \forall p(u_1,u_2)\in\Dset^{\rstarC} 
\;.
\end{IEEEeqnarray*}
\end{definition}

\begin{theorem}
\label{theo:Bmain}
Let $\avbc$ be an AVDBC with causal SI available at the encoder. Then,  
\begin{enumerate}[{1)}]
\item
the random code capacity region of $\avbc$ is bounded by
\bieee
\BIRavc \subseteq \BrCav \subseteq \BrICav \;.
\eieee
\item 
If the condition $\sCond$ holds, the random code capacity region of  $\avbc$ is given by
	\bieee
	\label{eq:BrCavTight}
	\BrCav=\BIRavc=\BrICav \;.
	\eieee
\end{enumerate}
\end{theorem}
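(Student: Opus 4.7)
The plan is to prove the two containment bounds separately, using Lemma~\ref{lemm:BcompoundLowerB} together with Ahlswede's Robustification Technique (Lemma~\ref{lemm:LRT}) for the inner bound, and reducing to the state-dependent DBC with a single i.i.d.\ state distribution for the outer bound. Equality under $\sCond$ will then follow from a short set-theoretic manipulation.

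\textbf{Inner bound $\BIRavc \subseteq \BrCav$.} Fix a rate pair strictly inside $\BIRavc$ together with a maximising $p(u_1,u_2)$ and $\encs(u_1,u_2,s)$. Specialising Lemma~\ref{lemm:BcompoundLowerB} to the compound DBC with $\Qset = \pSpace(\Sset)$ yields, for some $a>0$ and all sufficiently large $n$, a deterministic Shannon strategy code $\code$ satisfying $\sum_{s^n} q^n(s^n) \cerr(\code) \leq e^{-an}$ for every i.i.d.\ product distribution $q^n(s^n) = \prod_{i=1}^{n} q(s_i)$. Applying Lemma~\ref{lemm:LRT} to $h(s^n) = \cerr(\code)$ in the unconstrained regime ($\Lambda = l_{\max}$) gives
\begin{equation*}
\frac{1}{n!} \sum_{\pi \in \Pi_n} h(\pi s^n) \leq (n+1)^{|\Sset|} e^{-an}
\qquad \text{for every } s^n \in \Sset^n .
\end{equation*}
For each $\pi \in \Pi_n$ let $\code_\pi$ be obtained from $\code$ by permuting both strategy sequences $u_1^n$, $u_2^n$ and both decoders by $\pi$; since Shannon strategy encoding is symbol-wise and the channel is memoryless, the error of $\code_\pi$ on $s^n$ equals that of $\code$ on $\pi^{-1} s^n$. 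The random code $\gcode$ supported uniformly on $\{\code_\pi : \pi \in \Pi_n\}$ therefore attains error at most $(n+1)^{|\Sset|} e^{-an}$ uniformly in $s^n$, and hence under every $\qn(s^n) \in \pSpace^n(\Sset^n)$, establishing $(R_1,R_2)\in\BrCav$.

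\textbf{Outer bound $\BrCav \subseteq \BrICav$.} Fix any $q \in \pSpace(\Sset)$; the memoryless distribution $q^n(s^n) = \prod q(s_i)$ is a legitimate adversarial strategy. Hence any $(2^{nR_1}, 2^{nR_2}, n, \eps_n)$ random code for $\avbc$ with $\eps_n \to 0$ is also reliable for the state-dependent DBC with causal SI at the encoder and i.i.d.\ state $q(s)$. The capacity region of the latter is $\BICrp$, the single-letter Shannon-strategy region established by Steinberg~\cite{Steinberg:05p} (whose converse, for a fixed $q$, we simply invoke). Since this containment must hold for every $q$, we obtain $(R_1,R_2) \in \bigcap_{q(s)} \BICrp = \BrICav$.

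\textbf{Equality under $\sCond$.} Let $\encs$ and $\Dset^{\rstarC}$ be as guaranteed by the condition. By Definition~\ref{def:Bachieve},
\begin{equation*}
\BIRavc = \bigcup_{p \in \Dset^{\rstarC}} \bigcap_{q(s)} \{(R_1,R_2) : R_2 \leq I_q(U_2;Y_2),\; R_1 \leq I_q(U_1;Y_1|U_2)\},
\end{equation*}
and an analogous expression holds for $\BrICav$ with the union and intersection swapped. The common minimiser $q^*$ supplied by $\sCond$ collapses the inner intersection in $\BIRavc$ to evaluation at $q^*$, and simultaneously shows that every rate pair in $\BrICav$ already lies in the $p\in\Dset^{\rstarC}$ rectangle at $q = q^*$. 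Both regions therefore reduce to $\bigcup_{p \in \Dset^{\rstarC}} \{R_2 \leq I_{q^*}(U_2;Y_2),\; R_1 \leq I_{q^*}(U_1;Y_1|U_2)\}$, and combined with part~1 this yields $\BrCav = \BIRavc = \BrICav$. The main obstacle I expect is the permutation bookkeeping in the inner bound: one must verify that jointly permuting the two strategy sequences and both decoders produces a Shannon strategy code whose error on $s^n$ truly matches the original error on $\pi^{-1}s^n$, which the symbol-wise Shannon strategy structure and channel memorylessness make natural but still requires care because two decoders and two output sequences must be permuted in a mutually consistent way.
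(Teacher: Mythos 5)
Your proposal is correct and follows essentially the same route as the paper: the inner bound via Lemma~\ref{lemm:BcompoundLowerB} with $\Qset=\pSpace(\Sset)$ plus the robustification/permutation argument applied to the strategy sequences and both decoders, the outer bound by reduction to the fixed-$q$ i.i.d.-state DBC and Steinberg's converse (the paper merely factors this through Theorem~\ref{theo:BcvC}), and equality under $\sCond$ via the common minimizer $q^*$ collapsing the union-of-intersections and intersection-of-unions to the same region. The permutation bookkeeping you flag is exactly what the paper verifies in its displayed chain of equalities, using memorylessness and the symbol-wise form $x_i=\encs(u_{1,i},u_{2,i},s_i)$.
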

The proof of Theorem~\ref{theo:Bmain} is given in Appendix~\ref{app:Bmain}. 

The following lemma is a restatement of a result from \cite{Ahlswede:78p}, stating that a polynomial size of the code collection $\{\code_\gamma\}$ is sufficient. This result is a key observation in Ahlswede's Elimination Technique (ET), presented in \cite{Ahlswede:78p}, and it is significant for the deterministic code analysis.   
\begin{lemma} 
\label{lemm:BcorrSizeC}  
Consider a given  
 $(2^{nR_1},2^{nR_2},n,\eps_n)$ random 
code $\code^\Gamma=(\mu,\Gamma,\{\code_\gamma\}_{\gamma\in\Gamma})$
 for the AVDBC $\avbc$, 
where $\lim_{n\rightarrow\infty} \eps_n=0$. 
Then, for every $0<\alpha<1$ and sufficiently large $n$, there exists a $(2^{nR_1},2^{nR_2},n,\alpha)$ random 
 code $(\mu^*,\Gamma^*,\{\code_{\gamma}
\}_{\gamma\in\Gamma^*})$ with the following properties:
\begin{enumerate}
 \item 
The size of the code collection is bounded by
 $
 |\Gamma^*|\leq n^2 
$. 
\item
\label{item:Bsubset}
 The code collection is a subset of the original code collection, \ie 
$
\Gamma^*\subseteq \Gamma 
$. 
\item
 The distribution $\mu^*$ 
 is uniform, \ie 
$
\mu^*(\gamma)=\frac{1}{|\Gamma^*|} 
$, 
for $\gamma\in\Gamma^*$. 
\end{enumerate} 
\end{lemma}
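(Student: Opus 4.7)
This lemma is a direct instance of Ahlswede's elimination technique \cite{Ahlswede:78p}; the broadcast structure enters only through the fact that $\cerr(\code_\gamma) \in [0,1]$. The plan is to sample codes i.i.d.\ according to $\mu$, apply Hoeffding's concentration inequality for each fixed state sequence, and then union-bound over the (finite but exponentially many) state sequences in $\Sset^n$.

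More explicitly, set $K \triangleq n^2$ and let $\gamma_1, \ldots, \gamma_K$ be i.i.d.\ draws from $\mu$. Take $\Gamma^*$ to be the support of these samples and $\mu^*$ the induced uniform distribution on $\Gamma^*$. Then $\Gamma^* \subseteq \Gamma$, $|\Gamma^*| \leq K = n^2$, and $\mu^*$ is uniform on $\Gamma^*$, giving properties~1--3 by construction. For the error bound, observe that the hypothesis $\err(\qn,\code^\Gamma) \leq \eps_n$ applied to state distributions concentrated at a single sequence yields $\sum_{\gamma \in \Gamma} \mu(\gamma)\,\cerr(\code_\gamma) \leq \eps_n$ for every $s^n \in \Sset^n$. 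Hence, for each fixed $s^n$, the random variables $\cerr(\code_{\gamma_j})$, $j = 1, \ldots, K$, are i.i.d.\ in $[0,1]$ with common mean at most $\eps_n$, and Hoeffding's inequality gives
\begin{equation*}
\Pr\!\left( \frac{1}{K}\sum_{j=1}^K \cerr(\code_{\gamma_j}) > \alpha \right) \;\leq\; \exp\!\bigl( -2K(\alpha - \eps_n)^2 \bigr).
\end{equation*}

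For sufficiently large $n$ we have $(\alpha - \eps_n)^2 \geq \alpha^2/2$, so with $K = n^2$ the bound becomes at most $\exp(-n^2\alpha^2)$. A union bound over the $|\Sset|^n$ possible state sequences then yields
\begin{equation*}
\Pr\!\left( \exists\, s^n \in \Sset^n : \cerr(\code^{\Gamma^*}) > \alpha \right) \;\leq\; |\Sset|^n \exp(-n^2\alpha^2),
\end{equation*}
which tends to zero. Thus, for all sufficiently large $n$, some realization of $(\gamma_1, \ldots, \gamma_K)$ satisfies $\cerr(\code^{\Gamma^*}) \leq \alpha$ uniformly in $s^n$, and in particular $\err(\qn, \code^{\Gamma^*}) \leq \alpha$ for every $\qn \in \pSpace^n(\Sset^n)$; fixing such a realization completes the construction. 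I anticipate no real obstacle; the only minor subtlety is that repeated samples in $\gamma_1, \ldots, \gamma_K$ may shrink $|\Gamma^*|$ below $K$ and produce a non-uniform empirical distribution, which is handled by a routine repairing step (e.g., padding with arbitrary distinct indices) that affects neither properties~1--3 nor the error bound.
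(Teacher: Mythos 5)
Your proposal is correct and follows essentially the same route as the paper's proof: sample $n^2$ codes i.i.d.\ from $\mu$, reduce the hypothesis to point-mass state distributions so that $\E\,\cerr(\code_{\gamma_j})\leq\eps_n$ for each fixed $s^n$, apply a concentration bound, and union-bound over the $|\Sset|^n$ state sequences. The only difference is cosmetic — you invoke Hoeffding's inequality where the paper uses the Chernoff/Bernstein-trick bound $e^{-\beta k\alpha}(1+e^{\beta}\eps_n)^{k}$ — and both yield the required super-exponential decay in $k=n^2$.
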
 
The proof of Lemma~\ref{lemm:BcorrSizeC} follows 
 the same lines  as in \cite[Section 4]{Ahlswede:78p} (see also \cite{WinshtokSteinberg:06c}).
 For completeness, we give the proof in Appendix~\ref{app:BET}.

\subsubsection{Deterministic Code Inner and Outer Bounds}
The next theorem characterizes the deterministic code capacity region, which demonstrates a dichotomy property. 
\begin{theorem}
\label{theo:BcorrTOdetC}
The capacity region of an AVDBC $\avbc$ with causal SI either coincides with the random code capacity region or else, it has an empty interior. 
That is, 
$\BCavc = \BrCav$ 
 or else, 
$ \interior{\BCavc}=\emptyset$. 
\end{theorem}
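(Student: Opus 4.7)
The easy inclusion $\BCavc \subseteq \BrCav$ is immediate, since every deterministic code is a singleton random code. The substance of the dichotomy is therefore the reverse inclusion under the hypothesis $\interior{\BCavc} \neq \emptyset$, and the plan is to apply Ahlswede's Elimination Technique --- for which Lemma~\ref{lemm:BcorrSizeC} has already done the hardest part by reducing an arbitrary random code to one with uniform distribution over a polynomial-size code collection.

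Given $(R_1,R_2) \in \BrCav$ and $\eps>0$, I would start with a $(2^{nR_1},2^{nR_2},n,\eps_n)$ random code with $\eps_n \to 0$ and apply Lemma~\ref{lemm:BcorrSizeC} to obtain a uniform random code $(\mu^*,\Gamma^*,\{\code_\gamma\}_{\gamma\in\Gamma^*})$ with $|\Gamma^*|\leq n^2$ and error at most a prescribed $\alpha>0$. The index $\gamma \in \Gamma^*$ is then describable with only $\lceil 2\log n\rceil$ bits, so derandomization reduces to transmitting $\gamma$ reliably as a common prefix to both decoders and then running $\code_\gamma$ on the remaining $n_2 = n - n_1$ channel uses; the rate loss incurred is $O((\log n)/n) \to 0$.

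The deterministic prefix is produced using the hypothesis $\interior{\BCavc} \neq \emptyset$: some $(R_1',R_2')\in\BCavc$ has $R_2'>0$, and since the channel is degraded the stronger user can also recover whatever the weaker user decodes, so there is a deterministic common-message code at some positive rate $R_{\mathrm{com}}$, uniformly reliable against arbitrary state sequences. A block of $n_1 = \lceil 2\log n / R_{\mathrm{com}} \rceil = O(\log n)$ symbols then suffices to broadcast $\gamma$ to both receivers.

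The final deterministic code uses the enlarged message set $\Gamma^* \times [1{:}2^{nR_1}] \times [1{:}2^{nR_2}]$: the encoder transmits $\gamma$ in the first $n_1$ symbols and then $(m_1,m_2)$ via $\code_\gamma$ in the remaining $n_2$ symbols, while each decoder first recovers $\gamma$ and then applies $\dec_{j,\gamma}$. Averaging over all messages (including the newly added uniformly-chosen $\gamma$) reproduces the average error of the reduced random code, plus the vanishing prefix error, uniformly over all concatenated state sequences $s^n = (s^{n_1},s^{n_2})$. The main technical obstacle I anticipate is the clean extraction of a uniformly reliable positive-rate common-message prefix code from the abstract hypothesis $\interior{\BCavc}\neq\emptyset$; the degradedness of the broadcast channel is precisely what makes this step go through, since the stronger user can simulate the weaker user's decoder and thus guarantees a common rate of at least $R_2'>0$ against any prefix-length state sequence.
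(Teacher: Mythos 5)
Your proposal is correct and follows essentially the same route as the paper: the trivial inclusion $\BCavc\subseteq\BrCav$, then Lemma~\ref{lemm:BcorrSizeC} to reduce to a uniform random code over a collection of size at most $n^2$, followed by Ahlswede's Elimination Technique in which the index $\gamma$ is broadcast deterministically to both receivers in an $o(n)$-length prefix at a positive rate guaranteed by $\interior{\BCavc}\neq\emptyset$ (with degradedness ensuring the stronger user also gets it), and two-stage decoding with a union bound. The paper phrases the prefix as a $(2^{\nu\bR_1},2^{\nu\bR_2},\nu,\eps_2)$ broadcast code with $\bR_1,\bR_2>0$ rather than your common-message formulation, but this is the same idea.
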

The proof of Theorem~\ref{theo:BcorrTOdetC} is given in Appendix~\ref{app:BcorrTOdetC}.
For every function $\encs':\Uset_2\times\Sset\rightarrow\Xset$, define a DMC $V_{Y_2|U_2,S}^{\encs'}$ specified by 
$V_{Y_2|U_2,S}^{\encs'}(y_2|u_2,s)=W_{Y_2|X,S}(y_2|\encs'(u_2,s),s)$.
%
%
\begin{coro}
\label{coro:BmainDbound}
The capacity region of $\avbc$ is bounded by
\bieee
&&\BCavc \supseteq \BIRavc \,,\;\text{if}\;\, \interior{\BCavc}\neq \emptyset \;, \label{eq:BmainInner} \\
&&\BCavc \subseteq \BrICav \;. \label{eq:BmainOuter}
\eieee
Furthermore,
if $V_{Y_2|U_2,S}^{\encs'}$ is non-symmetrizable for some $\encs':\Uset_2\times\Sset\rightarrow\Xset$, and the condition $\sCond$ holds, then 
 $\,\BCavc=\BIRavc=\BrICav$.
\end{coro}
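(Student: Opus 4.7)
The plan is to compose the preceding results. First, the outer bound (\ref{eq:BmainOuter}) is immediate: every deterministic code is a random code with trivial seed distribution, hence $\BCavc \subseteq \BrCav$, and Theorem~\ref{theo:Bmain}(1) gives $\BrCav \subseteq \BrICav$. The conditional inner bound (\ref{eq:BmainInner}) is handled by the dichotomy of Theorem~\ref{theo:BcorrTOdetC}: whenever $\interior{\BCavc}\neq\emptyset$ we have $\BCavc=\BrCav$, and Theorem~\ref{theo:Bmain}(1) yields $\BrCav\supseteq\BIRavc$, so $\BCavc\supseteq\BIRavc$.

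For the final claim, Theorem~\ref{theo:Bmain}(2) gives $\BIRavc=\BrICav$ under $\sCond$, so the remaining task is to prove $\interior{\BCavc}\neq\emptyset$ under non-symmetrizability of $V_{Y_2|U_2,S}^{\encs'}$; the three-way equality then follows by pinching between the outer bound and the conditional inner bound. To establish non-emptiness of $\interior{\BCavc}$ I would proceed by derandomization in the style of Ahlswede's Elimination Technique. Fix the function $\encs'$ supplied by the hypothesis and regard $u_2\mapsto \encs'(u_2,s)$ as a Shannon strategy that converts the physical broadcast channel into the scalar AVC $V_{Y_2|U_2,S}^{\encs'}$, with input $U_2$, state $S$, and \emph{no} residual encoder SI (the state dependence is absorbed into the kernel). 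By hypothesis this virtual AVC is non-symmetrizable, so Theorem~\ref{theo:symm0} supplies a deterministic code over $\Uset_2^{\nu}$ that achieves some positive rate $R_2^{\circ}>0$ for user~2; by degradation, user~1 can simulate $p(y_2\given y_1)$ and so decodes the same messages from $Y_1^{\nu}$.

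With this positive-rate prefix in hand, I would pick any interior point $(R_1,R_2)\in\interior{\BrCav}=\interior{\BIRavc}$ (non-degeneracy of this region is implied by the scalar positive rate together with degradation), apply Lemma~\ref{lemm:BcorrSizeC} to the random code achieving $(R_1,R_2)$ to obtain an equivalent random code with $|\Gamma^*|\leq n^2$, and then prefix the transmission with an $o(n)$-block deterministic code that conveys the seed $\gamma\in\Gamma^*$ at rate $R_2^{\circ}$ through $V_{Y_2|U_2,S}^{\encs'}$. Both decoders recover $\gamma$ (user~1 via degradation, user~2 directly) and then apply $\code_{\gamma}$; the resulting overall deterministic scheme achieves $(R_1,R_2)$ up to $o(1)$ rate loss, giving $\interior{\BCavc}\supseteq\interior{\BrCav}\neq\emptyset$.

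The main obstacle is justifying that non-symmetrizability of only the weaker marginal $V_{Y_2|U_2,S}^{\encs'}$ suffices to carry the seed to \emph{both} receivers in sublinear time. This leans on two features of the setup: (i) Theorem~\ref{theo:symm0} applies cleanly to the virtual scalar AVC because, once $\encs'$ is fixed, the prefix encoder uses only $U_2$ and retains no side information, so the classical non-symmetrizability framework is in force; and (ii) the degradation of $\avbc$ guarantees that any message user~2 decodes from $Y_2^{\nu}$ is also decodable by user~1 from $Y_1^{\nu}$. A secondary subtlety is verifying $\interior{\BrCav}\neq\emptyset$ under the stated hypotheses, which again invokes non-symmetrizability, this time to lower-bound $\min_q I_q(U_2;Y_2)$ away from zero for a suitable $p(u_2)$.
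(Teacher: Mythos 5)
Your proposal is correct and follows essentially the same route as the paper: the outer bound from $\BCavc\subseteq\BrCav\subseteq\BrICav$, the conditional inner bound from the dichotomy of Theorem~\ref{theo:BcorrTOdetC}, and the final equality by showing non-symmetrizability of $V_{Y_2|U_2,S}^{\encs'}$ forces $\interior{\BCavc}\neq\emptyset$ (via positive capacity of the weaker user's marginal AVC and degradedness) and then invoking part 2 of Theorem~\ref{theo:Bmain}. The only difference is that you re-derive inline the seed-prefix elimination argument and the single-user positivity, which the paper simply delegates to Theorem~\ref{theo:BcorrTOdetC} and part 2 of Theorem~\ref{theo:ALavcCstateC}, respectively.
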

The proof of Corollary~\ref{coro:BmainDbound} is given in Appendix~\ref{app:BmainDbound}.

To conclude this chapter, we have established inner and outer bounds, on the random code capacity region and the deterministic code capacity region, for the 
AVDBC $\avbc$ with causal SI. We also provided conditions, for which the inner bound coincides with the outer bound.


\section{Example}
\label{sec:exAVBSBC}
To illustrate the results above, we give the following example.
\begin{example} \cite[Section IV-A]{Steinberg:05p} 
\label{example:AVBSBC}
Consider an arbitrarily varying binary symmetric broadcast channel (BSBC), 
\bieee
Y_1&=&X+Z_S \mod 2 \;, \IEEEnonumber\\
Y_2&=&Y_1+V \mod 2 \;, \IEEEnonumber
\eieee
where $X,Y_1,Y_2,S,Z_S,V$ are binary, with values in $\{0,1\}$. The additive noises are distributed according to 
\bieee
Z_s&\sim& \text{Bernoulli}(\theta_s) \,,\; \text{for $s\in\{0,1\}$}\;, \IEEEnonumber\\
V&\sim& \text{Bernoulli}(\alpha) \;, \IEEEnonumber
\eieee
 with  $\theta_0\leq 1-\theta_1 \leq \frac{1}{2}$ and $\alpha<\frac{1}{2}$, where $V$ is independent of $(S,Z_S)$. 
It is readily seen the channel is  physically degraded. 
 Define the binary entropy function $h(x)=-x\log x-(1-x)\log(1-x)$, for $x\in [0,1]$, with logarithm to base $2$.

We have the following results.
The capacity region of the arbitrarily varying BSBC $\Bavcig$ without SI is given by
\bieee
\label{eq:Bex1Cavcig}
\BCavcig= \{(0,0)\} \;. 
\eieee
The capacity region of the arbitrarily varying BSBC $\avbc$ with causal SI is given by 
\bieee
\label{eq:Bex1Cavc}
\BCavc= \bigcup_{0\leq \beta\leq 1}
\left\{
\begin{array}{lll}
(R_1,R_2) \,:\; & R_2 &\leq   1-h(\alpha*\beta*\theta_1) \;, \\
								& R_1 &\leq   h(\beta*\theta_1)-h(\theta_1)
\end{array}
\right\}\;.
\eieee
It will be seen in the achievability proof that the parameter $\beta$ is related to the distribution of  $U_1$, and thus the RHS of 
(\ref{eq:Bex1Cavc}) can be thought  of as a union over Shannon strategies.
The analysis is given in Appendix~\ref{app:AVBSBC}. 

It is shown in Appendix~\ref{app:AVBSBC} that the condition $\sCond$ holds and $\BCavc=\BIRavc=\BrICav$.
Figure~\ref{fig:BBSC} provides a graphical interpretation. 
Consider a DBC $\bc$ with random parameters  with causal SI, governed by an i.i.d. state sequence, distributed according to $S\sim\text{Bernoulli}(q)$, for a given $0\leq q\leq 1$, and let $\BCrp$ denote the corresponding capacity region.
 Then, the analysis shows that  the condition $\sCond$ implies that there exists $0\leq q^*\leq 1$ such that $\BCavc=\opC(\avbc^{q^*})$, where $\opC(\avbc^{q^*})\subseteq \opC(\avbc^{q}) $ for every $0\leq q\leq 1$. Indeed,
looking at Figure~\ref{fig:BBSC}, it appears that the regions $\BCrp$, for $0\leq q\leq 1$, form a well ordered set, hence
 $\BCavc=\opC(\avbc^{q^*})$ with $q^*=1$.


\begin{center}
\begin{figure}[hbt]
        \centering
        \includegraphics[scale=0.52,trim={2.5cm 0 0 0},clip
				]{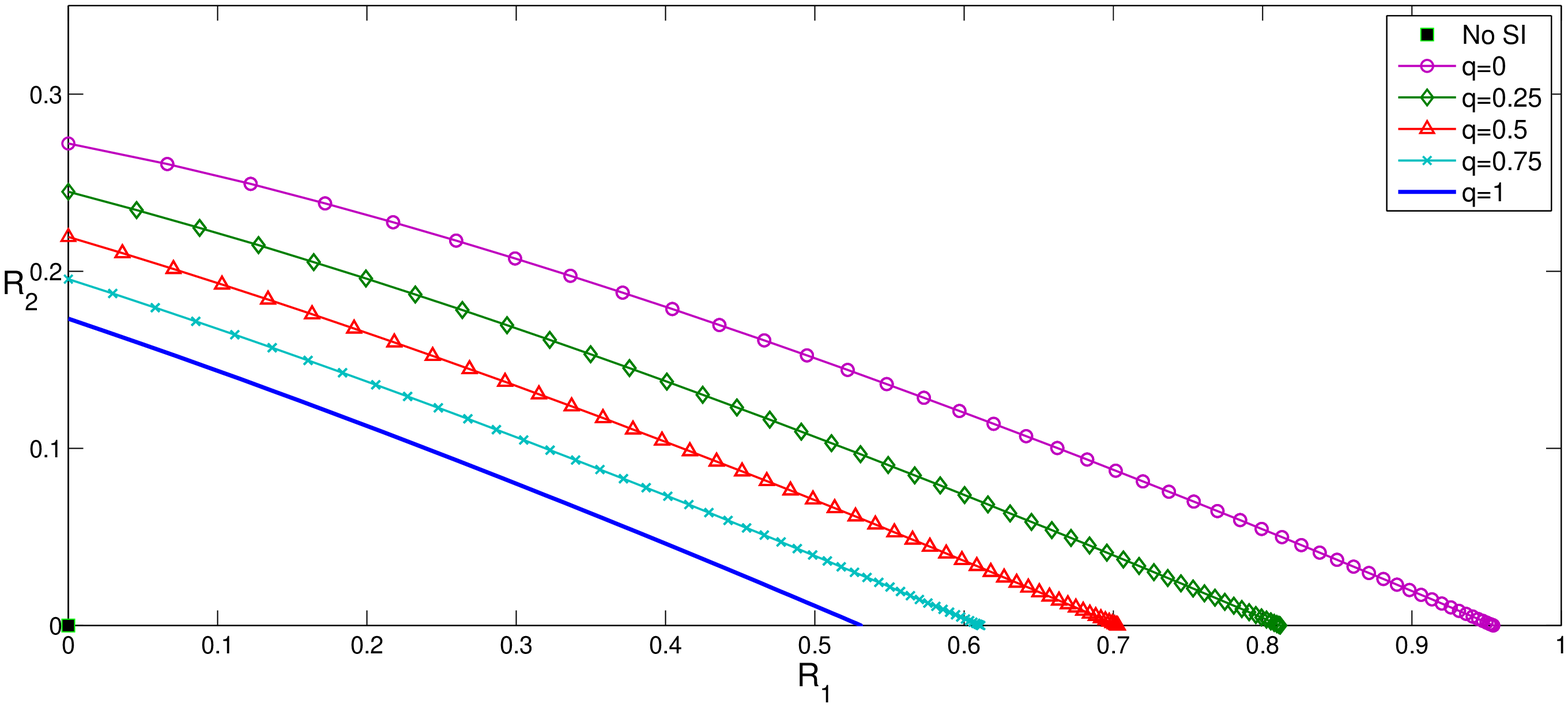}
        
\caption{The capacity region of the AVDBC in Example~\ref{example:AVBSBC}, the arbitrarily varying binary symmetric broadcast channel. The area under the thick blue line is the  capacity region of the AVDBC $\avbc$ with causal SI, with $\theta_1=0.005$, 
$\theta_2=0.9$, and 
$\alpha=0.2$. The black square at the origin stands for the capacity region of the AVDBC $\Bavcig$ without SI, $\BCavcig=\{(0,0)\}$.
The curves depict $\BCrp$ for $q=0,0.25,0.5,0.75,1$, where the capacity region of $\avbc$ is given by $\BCavc=\BrICav=\BCrp$ for $q=1$ 
(see (\ref{eq:BrICav})). 
  }
\label{fig:BBSC}
\end{figure}
\end{center}
\end{example}

\vspace{-0.35cm}
\begin{appendices}
\chapter{Input and State Constraints: Proofs}
\vspace{-0.75cm}
Observe that it suffices to prove the lower bound for the strict input constraint, and the upper bound for the average input constraint. This follows from the fact that
 the capacity under  average input constraint is at least as high as the corresponding capacity under per message input constraint, \ie $\LCcompound\leq\ALCcompound$ and $\LrCcompound\leq\ALrCcompound$.
\section{
Proof of Lemma~\ref{lemm:ALCompoundPC}}
\label{subsec:ALCompoundPC}

\subsubsection{ Lower Bound} 
We construct a code based on Shannon strategies, and decode using joint typicality  with respect to a state type, which is ``close" to some  $q\in\pLSpaceS$. 

We begin with the following definitions. Basic method of types concepts are defined as in \cite[Chapter 2]{CsiszarKorner:82b}; including the definition of a type $\hP_{x^n}$ of a sequence $x^n$; a joint type $\hspace{-0.1cm}\hP_{x^n,y^n}$ and a conditional type $\hP_{x^n|y^n}$ of a pair of sequences $(x^n,y^n)$; and 
 a $\delta$-typical set $\tset(P_{X,Y})$ with respect to a distribution $P_{X,Y}(x,y)$. 
 We also define a set of state types $\tQ$ by  
\bieee
\label{eq:tQ}
\tQ=\{ \hP_{s^n} \,:\; s^n\in\Aset^{ \delta_1 
}(q) \;\text{ for some  $q\in\pLSpaceS$}\, \} \;,
\eieee
where 
\bieee
\label{eq:delta1}
\delta_1 \triangleq
\frac{\delta}{2\cdot |\Sset|} \;.
\eieee 
Namely, $\tQ$ is the set of types that are $\delta_1$-close 
 to some state distribution $q(s)$ in $\pLSpaceS$.
A code $\code$ for the compound channel with causal SI is constructed as follows.

\emph{Codebook Generation}:
 Fix the distribution $P_U(u)$ and the function $\encs(u,s)$ that achieve  
$\inR_{low,\plimit-\eps,\Lambda+\eps}^{\rstarC}(\avc)$, 
where $\eps>0$ is arbitrarily small. Generate $2^{nR}$ independent sequences $u^n(m)$, $m\in[1:2^{nR}]$, at random, each according to $\prod_{i=1}^n P_U(u_i)$. Reveal the codebook to the encoder and the decoder.

\emph{Encoding}:
A message $m\in [1:2^{nR}]$ is encoded as follows. 
If 
\bieee
\label{eq:QdirectState}
\sum_{\tsn\in\Sset^n} q^n(\tsn) \cost^n(\encs^n(u^n(m),\tsn))\leq \plimit \,,\;\text{for all $q\in\pLSpaceS$}\;,
\eieee 
where $q^n(s^n)=\prod_{i=1}^n q(s_i)$, then
 transmit at time $i\in [1:n]$, $x_i=\encs(u_i(m),s_i)$.
Otherwise, if (\ref{eq:QdirectState}) fails to hold for some $q\in\pLSpaceS$, transmit $x^n$ $=$ $(a,\ldots,a)$, with an idle symbol $a\in\Xset$ with $\cost(a)=0$.


\emph{Decoding}: As $y^n$ is received, the decoder finds a unique $\hm\in[1:2^{nR}]$ such that
$
 (u^n(\hm),y^n)\in\tset(P_U P^{q}_{Y|U}) $, 
for some $q\in\tQ$, 
where 
 \bieee
\label{eq:LUchannelYL}
P^{q}_{Y|U}(y|u)=\sum_{s\in\Sset} q(s)\channel\left( y| \encs(u,s),s \right) \;.
\eieee
If there is none, or more than one such $\hm\in[1:2^{nR}]$, then the decoder declares an error.

\emph{Analysis of Probability of Error}:
Due to symmetry, we may assume without loss of generality that the user sent the message 
$m=1$. Let $q(s)\in\pLSpaceS$ denote the \emph{actual} state distribution chosen by the jammer.

The error event is bounded by the union of the events below. 
Define 
\bieee
\Eset_1&=&\{ U^n(1)\notin \Aset^{\nicefrac{\delta}{2}}(P_U) \} \;, \\
\Eset_2 &=&\{ (U^n(1),Y^n)\notin \tset(P_U P^{q'}_{Y|U}) \;\text{ for all $q'\in\tQ$} \} \;, 
\label{eq:SlemmCompoundCerrEv1}\\
\Eset_3 &=&\{ (U^n(m),Y^n)\in \tset(P_U P^{q'}_{Y|U}) \;\text{ for some $m\neq 1,\, q'\in\tQ$} \} \;.
\label{eq:SlemmCompoundCerrEv2}
\eieee
 Then, 
 the probability of error is bounded by
\begin{align}
\label{eq:SlemmCompoundCerrBound}
 \err(q,\code) 
\leq& \prob{\Eset_1}+ \cprob{\Eset_2}{\Eset_1^c }+ \cprob{\Eset_3}{\Eset_1^c } \;,
\end{align}
 where the conditioning on $M=1$ is omitted for convenience of notation. 
The first term in the RHS of (\ref{eq:SlemmCompoundCerrBound}) tends to zero exponentially as $n\rightarrow\infty$, by the law of large numbers and Chernoff's bound. As for the other terms,
 observe that given that the event $\Eset_1^c$ occurs, \ie $U^n(1)\in\Aset^{\nicefrac{\delta}{2}}(P_U)$, 
 we have that for a sufficiently small $\delta>0$, the requirement
 \bieee
\sum_{s^n\in\Sset^n}q^n(s^n) \cost^n(\encs^n(U^n(1),s^n))=\frac{1}{n} \sum_{i=1}^n \sum_{s\in\Sset} q(s)
\cost(\encs(U_i(1),s))\leq\plimit
\eieee
 is held for all $q\in\pLSpaceS$. Hence,
\bieee
\label{eq:SLencsDirLow}
X^n=\encs^n(U^n(1),S^n) \;.
\eieee
	
As for the second term in the RHS of (\ref{eq:SlemmCompoundCerrBound}), we now claim that  the event $\Eset_2$ implies that 
	$(U^n(1),Y^n)\notin \Aset^{\nicefrac{\delta}{2}}(P_U P^{q''}_{Y|U})$ for all $q''\in\pLSpaceS$.
	This claim is due to the following. 
Suppose that $(U^n(1),Y^n)\in \Aset^{\nicefrac{\delta}{2}}(P_U P^{q''}_{Y|U})$ for some $q''\in\pLSpaceS$.  
Then, for a sufficiently large $n$, there exists a type $q'(s)$ such that 
\bieee
\label{eq:qdelta1}
|q'(s)-q''(s)|\leq \delta_1 \;, 
\eieee
 for all $s\in\Sset$, and by the definition in (\ref{eq:tQ}), $q'\in\tQ$.  Then, (\ref{eq:qdelta1}) implies that
\bieee
|P_{Y|U}^{q'}(y|u)-P_{Y|U}^{q''}(y|u)|\leq |\Sset|\cdot \delta_1=\frac{\delta}{2} \;,
\eieee
for all $u\in\Uset$ and $y\in\Yset$ (see (\ref{eq:delta1}) and (\ref{eq:LUchannelYL})). Hence, 
$(U^n(1),Y^n)\in \Aset^{\delta}(P_U P^{q'}_{Y|U})$. It follows that if  $(U^n(1),Y^n)\notin \Aset^{\delta}(P_U P^{q'}_{Y|U})$ for all $q'\in\tQ$, then 
	$(U^n(1),Y^n)\notin \Aset^{\nicefrac{\delta}{2}}(P_U P^{q''}_{Y|U})$ for all $q''\in\pLSpaceS$. 
		Thus,
	\bieee
	\label{eq:SllnRL}
	\cprob{\Eset_2}{\Eset_1^c}&\leq& \cprob{(U^n(1),Y^n)\notin \Aset^{\nicefrac{\delta}{2}}(P_U P^{q''}_{Y|U}) \;\text{ for all $q''\in\pLSpaceS$} }{\Eset_1^c} \IEEEnonumber \\
	&\leq& \cprob{(U^n(1),Y^n)\notin \Aset^{\nicefrac{\delta}{2}}(P_U P^{q}_{Y|U})  }{\Eset_1^c} \;.
	\eieee
The RHS of (\ref{eq:SllnRL}) exponentially tends to zero as $n\rightarrow\infty$ by the law of large numbers and  Chernoff's bound. 
	
	We move to the third term in the RHS of (\ref{eq:SlemmCompoundCerrBound}). 
	By the union of events bound 
and the fact that the number of type classes in $\Sset^n$ is bounded by $(n+1)^{|\Sset|}$,  
 we have that  
\begin{align}
\label{eq:SE2poly}
&\cprob{\Eset_3}{\Eset_1^c}
\leq (n+1)^{|\Sset|}\cdot \sup_{q'\in\tQ} \cprob{
(U^n(m),Y^n)\in \tset(P_U P^{q'}_{Y|U}) \;\text{ for some $m\neq 1$} 
}{\Eset_1^c} \nonumber\\
\leq& (n+1)^{|\Sset|}\cdot 2^{nR} \cdot \sup_{q'\in\tQ}\left[  \sum_{u^n\in\Uset^n} P_{U^n}(u^n) \cdot \sum_{y^n \,:\; (u^n,y^n)\in \tset(P_U P^{q'}_{Y|U})} P_{Y^n}^q(y^n)
\right] \;,
\end{align}
where we have defined
	$
	P_Y^q(y)
	=\sum\limits_{u\in\Uset,s\in\Sset} P_U(u)\cdot q(s)\cdot \channel(y|\encs(u,s),s) 
	$. 
This follows from (\ref{eq:SLencsDirLow}) and the fact that 
 $U^n(m)$ is independent of $Y^n$ for every $m\neq 1$. 
 Let $y^n$ satisfy $(u^n,y^n)\in \tset(P_U P^{q'}_{Y|U})$. Then, $\,y^n\in\Aset^{\delta_2}(P_Y^{q'})$ with $\delta_2\triangleq |\Uset|\cdot\delta$. By Lemmas 2.6 and 2.7 in
 \cite{CsiszarKorner:82b},
\bieee
\label{eq:SpYbound}
P_{Y^n}^q(y^n)=2^{-n\left(  H(\hP_{y^n})+D(\hP_{y^n}||P_Y^q)
\right)}\leq 2^{-n H(\hP_{y^n})}
\leq 2^{-n\left( H_{q'}(Y) -\eps_1(\delta) \right)} \;,
\eieee
where $\eps_1(\delta)\rightarrow 0$ as $\delta\rightarrow 0$. Therefore, by (\ref{eq:SE2poly})$-$(\ref{eq:SpYbound}), along with 
 \cite[Lemma 2.13]{CsiszarKorner:82b},
\begin{align}
& \prob{\Eset_3|\Eset_1^c}           																									
\leq
 \;(n+1)^{|\Sset|}\cdot \sup_{q'\in\pLSpaceS} 
2^{-n[ I_{q'}(U;Y) 
-R-\eps_2(\delta) ]} \label{eq:SLexpCR} \;,
\end{align}
with $\eps_2(\delta)\rightarrow 0$ as $\delta\rightarrow 0$, 
 The RHS of (\ref{eq:SLexpCR})
 exponentially tends to zero as $n\rightarrow\infty$, provided that $R<\min_{q'\in\pLSpaceS} I_{q'}(U;Y)
-\eps_2(\delta)$. 
%
\qed

\subsubsection{Upper Bound}
Assume to the contrary that there exists an achievable rate $R>\LrIRav$ using random codes. Thus, for some $q^*(s)\in\pLSpaceS$, we have that 
$R>\inC_{\plimit}(\avc^{q^*})$, where $\inC_{\plimit}(\avc^{q})\triangleq \max\limits_{
\text{\footnotesize{$
\encs(u,s),p(u) \,:\; \E_{q}\, \cost(\encs(U,S))\leq\plimit
$}}
} I_{q}(U;Y)$. 

The achievability assumption implies that for every $\eps>0$ and sufficiently large $n$, there exists a $(2^{nR},n)$ random code 
$\code^\Gamma$ for the compound channel $\LcompoundP$ 
 such that $\err(\qn,\code^\Gamma)<\eps$ for all i.i.d. state distributions $q\in\pLSpaceS$. 
 If such a code would exist, it could have been used over a random parameter channel with $S^n\sim \prod_{i=1}^n q^*(s_i)$, 
with causal SI,
 achieving a rate $R>\inC_{\plimit}(\avc^{q^*})$. This stands in contradiction to Shannon's fundamental result in 
\cite{Shannon:58p}, 
 hence the assumption is false. 
\qed 

\section{Proof of Lemma~\ref{lemm:LRT}}
\label{app:LRT}
We state the proof of our modified version of Ahlswede's RT \cite{Ahlswede:78p}. The proof follows the lines of \cite[Subsection IV-B]{Ahlswede:78p}. 
Let $\widetilde{s}^{\;n}\in\Sset^n$ such that $l^n(\widetilde{s}^{\;n})\leq\Lambda$. Denote the type of  $\widetilde{s}^{\;n}\in\Sset^n$ by $\hq
$.  
Observe that 
$
\hq\in \apLSpaceS 
$. 

Given a permutation $\pi\in\Pi_n$,
\bieee
\sum_{s^n\in\Sset^n} q^n(s^n) h(s^n)=\sum_{s^n\in\Sset^n} q^n(\pi s^n) h(\pi s^n)=\sum_{s^n\in\Sset^n} q^n(s^n) h(\pi s^n) \;,
\eieee
for every i.i.d. state distribution $q^n(s^n)=\prod_{i=1}^n q(s_i)$, with  $q\in\apLSpaceS$,
where the first equality holds since $\pi$ is a bijection, and the second equality holds since $q^n$ is i.i.d.
Hence, taking $q=\hq$,
\bieee
\sum_{s^n\in\Sset^n} \hq^{\;n}(s^n) h(s^n)=\frac{1}{n!} \sum_{\pi\in\Pi_n} \sum_{s^n\in\Sset^n} \hq^{\;n}(s^n) h(\pi s^n) \;,
\eieee
and by (\ref{eq:RTcondCs}),
\bieee
\sum_{s^n\in\Sset^n}  \hq^{\;n}(s^n) \left[\frac{1}{n!}\sum_{\pi\in\Pi_n} h(\pi s^n)\right] \leq \alpha_n \;.
\eieee
Then,
\bieee
\sum_{s^n \,:\; \hP_{s^n}=\hq}  \hq^{\;n}(s^n) \left[\frac{1}{n!}\sum_{\pi\in\Pi_n} h(\pi s^n)\right] \leq \alpha_n \;.
\eieee
The expression in the square brackets is identical for all sequences $s^n$ of type $\hq$. Thus, 
\bieee
\label{eq:rtineq1}
\left[\frac{1}{n!}\sum_{\pi\in\Pi_n} h(\pi \widetilde{s}^{\;n})\right]\cdot
\sum_{s^n \,:\; \hP_{s^n}=\hq}  \hq^{\;n}(s^n)  \leq \alpha_n \;.
\eieee
The second sum is the probability of 
 the type class of $\hq$, hence
\bieee
\label{eq:rtineq2}
\sum_{s^n \,:\; \hP_{s^n}=\hq}  \hq^{\;n}(s^n) \geq \frac{1}{(n+1)^{|\Sset|}} \;,
\eieee
by \cite[Theorem 11.1.4]{CoverThomas:06b}. The proof follows from (\ref{eq:rtineq1}) and (\ref{eq:rtineq2}). \qed

\section{Proof of Theorem~\ref{theo:ALavcCr}}
\label{app:ALavcCr}

Consider the AVC $\avc$ per message 
 input constraint $\plimit$  and state constraint $\Lambda$, as specified by (\ref{eq:StateCn}) and (\ref{eq:inputCstrict}). 

\subsection*{Part 1}
\subsubsection{Lower Bound} 
We use Ahlswede's RT twice, as follows. Let 
$R<\inR^{\rstarC}_{low,\plimit-2\delta,\Lambda+2\delta}(\avc)$,
 where $\delta>0$ is arbitrarily small.
Consider the compound channel with causal SI, under input constraint $\plimit$, with $\Qset=\pLSpaceS$, hence $\Qset\subseteq \overline{\pSpace}_{\Lambda+2\delta}(\Sset)$.
According to Lemma~\ref{lemm:ALCompoundPC}, 
for some $\theta>0$ and sufficiently large $n$,  
 there exists a  $(2^{nR},n)$ Shannon strategy code $\code=(U^n(m),\encs(u,s),\dec(y^n))$ for the compound channel $\LcompoundP$ with causal SI, such that 
\bieee
\label{eq:LrAVcosti}
\sum_{s^n\in\Sset^n} q^n(s^n) \cdot\E\, 
\cost^n(\encs^n(U^n(m),&&s^n))  \leq\plimit-2\delta \,,\; 
\text{for all $m\in [1:2^{nR}]$ }\;.
\eieee
and 
\bieee
\label{eq:LrAVerrDirect}
\E\,
\err(q,\code)&&=\sum_{s^n\in\Sset^n} q^n(s^n) \cdot\E\, \cerr(\code)  \leq e^{-2\theta n} \;,
\eieee
for all i.i.d. state distributions $q^n(s^n)=\prod_{i=1}^n q(s_i)$, with $q\in\apLSpaceS$.
 The expectation in Equations (\ref{eq:LrAVcosti}) and (\ref{eq:LrAVerrDirect}) is on the ensemble of codebooks, corresponding to the
 independent i.i.d. random sequences $U^n(m)$, $m\in [1:2^{nR}]$, as set in the proof of Lemma~\ref{lemm:ALCompoundPC}. 

Given such a Shannon strategy code, 
 we have that (\ref{eq:RTcondCs}) is satisfied with  $h_0(s^n)=\E\, \cerr(\code)$  and $\alpha_n=e^{-2\theta n}$.  
Consequently, by Lemma~\ref{lemm:LRT}, for a sufficiently large $n$,
\bieee
\label{eq:ALdetErrC}
\frac{1}{n!} \sum_{\pi\in\Pi_n} \E\, P_{e|\pi s^n}^{(n)}(\code)&&\leq (n+1)^{|\Sset|}e^{-2\theta n} 
\leq e^{-\theta n}  \;,
\eieee
for all $s^n\in\Sset^n$ with $l^n(s^n)\leq\Lambda$.  

On the other hand, for every Shannon strategy  code $\code=(u^n(m),\encs(u,s),g(y^n))$, and for every $\pi\in\Pi_n$,
\bieee
P_{e|\pi s^n}^{(n)}(\code)  &\stackrel{(a)}{=}&
\frac{1}{2^{ nR }}\sum_{m=1}^{2^ {nR}}
\sum_{y^n:\dec(y^n)\neq m}  \nchannel(y^n|\encs^n(u^n(m),\pi s^n),\pi s^n) \IEEEnonumber\\
&\stackrel{(b)}{=}&\frac{1}{2^{ nR }}\sum_{m=1}^{2^ {nR}}
\sum_{ y^n:\dec(\pi y^n)\neq m}  \nchannel(\pi y^n|\encs^n(u^n(m),\pi s^n),\pi s^n) \IEEEnonumber\\
&\stackrel{(c)}{=}&\frac{1}{2^{ nR }}\sum_{m=1}^{2^ {nR}}
\sum_{y^n:\dec(\pi y^n)\neq m}  \nchannel( y^n|\pi^{-1}\encs^n(u^n(m),\pi s^n), s^n) \;,\qquad
\label{eq:Lcerrpi}
\eieee
where $(a)$ is obtained by plugging 
 $\pi s^n$ and $x^n=\encs^n(\cdot,\cdot)$ in (\ref{eq:cerr});
in $(b)$ we simply change the order of summation over $y^n$; and $(c)$ holds because the channel is memoryless. Note that for a Shannon strategy code,  $x_i=\encs(u_i,s_i)$, $i\in[1:n]$, by Definition~\ref{def:StratCode} (see (\ref{eq:StratEnc})).
 Thus,
$\pi^{-1}\encs^n(u^n(m),\pi s^n)$ $=$ $\encs^n(\pi^{-1} u^n(m),s^n) 
$, and 
\bieee
 P_{e|\pi s^n}^{(n)}(\code)
&=&\frac{1}{2^{ nR }}\sum_{m=1}^{2^ {nR}}
\sum_{y^n:\dec(\pi y^n)\neq m} \nchannel( y^n|\encs^n(\pi^{-1} u^n(m),s^n), s^n) \;.\qquad
\label{eq:LcorrErrC}
\eieee
The last expression suggests the use of permutations applied to the encoding \emph{strategy sequence} and the channel output sequence.

Then, consider the $(2^{nR},n)$ random code $\code^\Pi$, specified by 
\bieee
\label{eq:LCpi}
f_\pi^n(m,s^n)= \encs^n(\pi^{-1} U^n(m),s^n) \;,\quad g_\pi(y^n)=\dec(\pi y^n) \;,\quad \pi\in\Pi_n \;,
\eieee 
with a uniform distribution $\mu(\pi)=\frac{1}{|\Pi_n|}=\frac{1}{n!}$. 
Such permutations can be implemented without knowing $s^n$, hence this coding scheme does not violate the causality requirement. 

 From (\ref{eq:LcorrErrC}), we see that 
\bieee 
\cerr(\code^\Pi)=\sum_{\pi\in\Pi_n} \mu(\pi) \cdot\E\, P_{e|\pi s^n}^{(n)}(\code) \;,
\eieee
for all $s^n\in\Sset^n$ with $l^n(s^n)\leq \Lambda$. Therefore, together with (\ref{eq:ALdetErrC}), we have that the probability of error of the random code $\code^\Pi$ is bounded by 
\bieee 
\label{eq:LrandErr}
\err(\qn,\code^{\Pi})\leq e^{-\theta n} \;,
\eieee 
for every $\qn(s^n)\in\pLSpaceSn$. 

It is left for us to verify that the random code $\code^{\Pi}$ obeys the input constraint.
To this end, we apply Ahlswede's RT again. 
Let $m\in [1:2^{nR}]$ and $q(s)\in\pLSpaceS$, and let a sequence of i.i.d. random variables $\oS_1,\ldots,\oS_n\sim q(s)$.
Define the random variables
\bieee
\Phi_i(m)=\cost(\encs(U_i(m),\oS_i)) \,,\;\text{for $i\in [1:n]$}\;.
\eieee
 Then, $\Phi_1(m),\ldots,\Phi_n(m)$ are i.i.d. as well, and by (\ref{eq:LrAVcosti}), $\E \Phi_1(m)\leq \plimit-2\delta$. Hence,
for every $m\in [1:2^{nR}]$ and $q\in\pLSpaceS$, 
\begin{align*}
&\prob{\cost^n(\encs^n(U^n(m),\oS^n))>\plimit-\delta} 
= \prob{\frac{1}{n} \sum_{i=1}^n \Phi_i(m)>\plimit-\delta }\leq 2^{-n\cdot\overline{\dE}(\plimit,\Lambda)} \;,
\end{align*}
where
$
\overline{\dE}(\plimit,\Lambda) \triangleq \min\limits_{m\in [1:2^{nR}],q\in\pLSpaceS} \min\limits_{P_{\Phi'} \,:\; \E \Phi'>\plimit-\delta} D(P_{\Phi'} || P_{\Phi_1(m)}) 
$, 
by standard large deviations considerations (see \eg \cite[pp. 362--364]{CoverThomas:06b}).
On the other hand,
\begin{align}
&\prob{\cost^n(\encs^n(U^n(m),\oS^n))>\plimit-\delta} 
= \sum_{s^n\in\Sset^n} q^n(s^n) h_m(s^n) \;,
\end{align}
where $h_m(s^n)=\prob{\cost^n(\encs^n(U^n(m),s^n))>\plimit-\delta}$.
Thus,  by  Lemma~\ref{lemm:LRT},
\bieee
\label{eq:Lhbound}
\frac{1}{n!} \sum_{\pi\in\Pi_n} h_m(\pi s^n) &&\leq (n+1)^{|\Sset|}\cdot 2^{-n\cdot\overline{\dE}(\plimit,\Lambda)} \leq e^{-\theta' n} \;, 
\eieee
for all $s^n\in\Sset^n$ with $l^n(s^n)\leq\Lambda$, 
 for some $\theta'>0$ and sufficiently large $n$.

Then, 
\begin{align}
&\frac{1}{n!} \sum_{\pi\in\Pi_n} \E\, \cost^n(\encs^n(U^n(m),\pi s^n)) \nonumber\\
=& \frac{1}{n!} \sum_{\pi\in\Pi_n}
 h_m(\pi s^n) 
\cdot  \E\left(\, \cost^n(\encs^n(U^n(m),\pi s^n)) \,\Big|\; \cost^n(\encs^n(U^n(m),\pi s^n))>\plimit-\delta
 \,\right)
\nonumber\\
&+ \frac{1}{n!} \sum_{\pi\in\Pi_n}
(1- h_m(\pi s^n)) 
\cdot  \E\left(\, \cost^n(\encs^n(U^n(m),\pi s^n)) \,\Big|\; \cost^n(\encs^n(U^n(m),\pi s^n))\leq\plimit-\delta
 \,\right) \;.
\label{eq:LrandInC}
\end{align}
To bound the first sum in the RHS of (\ref{eq:LrandInC}), we use (\ref{eq:Lhbound}) and the fact that $\cost^n(x^n)\leq \cost_{max}$, for all 
$x^n\in\Xset^n$.
As for the second sum in the RHS of (\ref{eq:LrandInC}),  observe that the expectation in the last line is bounded by $(\plimit-\delta)$.
Hence, 
\bieee
\frac{1}{n!} \sum_{\pi\in\Pi_n} \E\,\cost^n(\encs^n(U^n(m),\pi s^n))
&\leq& \cost_{max}\cdot e^{-\theta' n}  +\plimit-\delta \;.
\eieee
It follows that for a sufficiently large $n$,
\begin{align}
\frac{1}{n!} \sum_{\pi\in\Pi_n} \E\, \cost^n(f_\pi^n(m,s^n)) 
=&\frac{1}{n!} \sum_{\pi\in\Pi_n} \E\, \cost^n(\encs^n(U^n(m),\pi s^n))
\leq \plimit \;,
\label{eq:LinB}
\end{align}
where the 
equality is due to (\ref{eq:LCpi}),
and the fact that the input constraint is additive (see (\ref{eq:LInConstraintStrict})).

Thus, it  follows from (\ref{eq:LrandErr}) and (\ref{eq:LinB}) that $\code^\Pi$ is a $(2^{nR},n,e^{-\theta n})$ random 
 code for the AVC $\avc$ with causal SI at the encoder, under input constraint $\plimit$ and state constraint $\Lambda$. 
\qed

\subsubsection{Upper Bound}
Assume to the contrary that there exists an achievable rate 
$
R> 
\inR^{\rstarC}_{up,\plimit+\delta,\Lambda-\delta}(\avc) 
$, 
 using random codes over the AVC $\avc$ with causal SI, under input constraint $\plimit$ and state constraint $\Lambda$,
where $\delta>0$ is arbitrarily small. 
That is, for every $\eps>0$ and sufficiently large $n$,
there exists a $(2^{nR},n)$ random code $\code^\Gamma=(\mu,\Gamma,\{\code_\gamma\}_{\gamma\in\Gamma})$ for the AVC $\avc$ with causal SI, such that 
\bieee
&& \sum_{\gamma\in\Gamma} \mu(\gamma) \left[ 
 \sum_{s^n\in\Sset^n} q^n(s^n) \cost^n(\enc^n_\gamma(m,s^n)) \right] \leq\plimit
\;,
\label{eq:LconvRandInC}
\\
&& \err(q^n,\code^\Gamma)\leq\eps \;,
\label{eq:StateConverse1b}
\eieee
 for all $m\in [1:2^{nR}]$ and $q^n(s^n)\in\pLSpaceSn$. 
	In particular, for a kernel,
	$
	\cerr(\code^\Gamma)\leq\eps 
	$, 
	for all $s^n\in\Sset^n$ such that $l^n(s^n)\leq\Lambda$.
	
	Consider using the random code $\code^\Gamma$ over the compound channel $\avc^{\overline{\pSpace}_{\Lambda-\delta}(\Sset)}$ with causal SI under input constraint $\plimit+\delta$, where $\delta>0$ is arbitrarily small. Let $\oq(s)\in\overline{\pSpace}_{\Lambda-\delta}(\Sset)$ be a given state distribution. Then, 
	 define a sequence of i.i.d. random variables $\oS_1,\ldots,\oS_n\sim \oq(s)$.  
	Letting 
	$\oq^n(s^n)\triangleq\prod_{i=1}^n \oq(s_i)$, the probability of error is bounded by
	\bieee
	\err(\oq,\code^\Gamma)
	&\leq& 
	\sum_{s^n\,:\; l^n(s^n)\leq\Lambda} \oq^{n}(s^n) \cerr(\code^\Gamma)+\prob{l^n(\oS^{n})>\Lambda} \;.\qquad 
	\eieee
	Then, the first sum is bounded by (\ref{eq:StateConverse1b}), and the second term vanishes as well by the law of large numbers, since
	$\oq(s)\in\overline{\pSpace}_{\Lambda-\delta}(\Sset)$. 

	As for the input constraint, define a random variable $L\in\Gamma$, with $L\sim\mu(\ell)$. Then, for every $m\in [1:2^{nR}]$,
	\begin{align}
	&\sum_{\gamma\in\Gamma} \mu(\gamma) \sum_{s^n\in\Sset^n} \oq^{n}(s^n)
	 \cost^n(\enc_\gamma^n(m,s^n)) 
	=
	\E_{\oq} \, \cost^n(\enc^n_L(m,\oS^{n})) \\
  =& \prob{l^n(\oS^{n})\leq\Lambda} \cdot \E_{\oq} \hspace{-0.1cm} \left( \cost^n(\enc^n_L(m,\oS^{n})) \,\big|\;				l^n(\oS^{n})\leq\Lambda \right) \nonumber\\
	&+\prob{l^n(\oS^{n})>\Lambda} \cdot \E_{\oq} \hspace{-0.1cm}  \left( \cost^n(\enc^n_L(m,\oS^{n})) \,\big|\;				l^n(\oS^{n})>\Lambda \right) \\
	\leq&  \E_{\oq} \hspace{-0.1cm} \left( \cost^n(\enc^n_L(m,\oS^{n})) \,\big|\;				l^n(\oS^{n})\leq\Lambda \right)+ \cost_{max}\cdot \eps_n 
	\leq \plimit+\delta \;,
	\label{eq:LconvIC2}
	\end{align}
	with $\eps_n\rightarrow 0$ as $n \rightarrow \infty$. The first inequality follows from the law of large numbers, and last inequality is obtained by 	
	applying (\ref{eq:LconvRandInC}) to the  state distribution 
	$
	q^n(s^n)=\cprob{\oS^{n}=s^n}{l^n(\oS^{n})\leq\Lambda} 
	$, 
	which is readily seen to satisfy $q^n\in\pLSpaceSn$.

	It follows that the random code $\code^\Gamma$ achieves a rate $R> \inR^{\rstarC}_{up,\plimit+\delta,\Lambda-\delta}(\avc)$ over the compound channel $\avc^{\overline{\pSpace}_{\Lambda-\delta}(\Sset)}$ under  input constraint $\plimit+\delta$, for an arbitrarily small $\delta>0$, in contradiction to Lemma~\ref{lemm:ALCompoundPC}. 
	We deduce that the assumption is false, and $R>\LrICav$ cannot be achieved.
\qed

\subsection*{Part 2}
For $\plimit=\cost_{max}$, it follows from  (\ref{eq:ALcvCIcoro}) and  (\ref{eq:ALcvRIcoro}) that
$\LrICav=\LrIRav=\min\limits_{q(s)\in\pLSpaceS} \max\limits_{p(u),\encs(u,s)} I_q(U;Y)$. Hence, the proof follows from part 1. \qed

\section{Ahlswede's Elimination Technique}    
\label{app:LET}

\begin{proof}[Proof of Lemma~\ref{lemm:LcorrSizeC}]
The proof is an extension of \cite[Section~4]{Ahlswede:78p}. 
Consider the AVC $\avc$ with causal SI, under per message 
 input constraint $\plimit$ and state constraint $\Lambda$.
 Let $\dK>0$ be an integer, chosen later, and define the random variables
\bieee
\label{eq:LLi}
L_1,L_2,\ldots,L_{\dK} \;\,\text{i.i.d. $\sim\mu(\ell)$} \;.
\eieee
Fix $m\in [1:2^{nR}]$ and  $s^n\in\Sset^n$, and define the random variables
\begin{align}
&\Phi_j(m,s)=\cost^n( f_{L_j}^n(m,s^n)) \;,\quad j\in [1:\dK] \;,
\intertext{
and
}
&\Psi_j(s^n)= \cerr(\code_{L_j}) \;,\quad j\in [1:\dK] \;,
\end{align}
which correspond to the code $\code_{L_j}=(f_{L_j}^n,g_{L_j})$ in the code collection $\{\code_\gamma=(f^n_\gamma,g_\gamma)\}_{\gamma\in\Gamma}$. 
Since $\code^\Gamma$ is a $(2^{nR},n,\eps_n)$ random code over the AVC $\avc$ with causal SI, under per message 
 input constraint $\plimit$ and state constraint $\Lambda$, we have that 
$ \sum_\gamma\mu(\gamma)\sum_{s^n} \qn(s^n) \cerr(\code_\gamma)\leq \eps_n$, for all $\qn(s^n)\in\pLSpaceSn$. In particular, for a kernel, we have that  for a given $m\in [1:2^{nR}]$ and $s^n\in\Sset^n$ with $l^n(s^n)\leq\Lambda$,
\bieee
\label{eq:LPhiIneqC}
&&\E \Phi_j(m,s^n)=\sum_{\gamma\in\Gamma} \mu(\gamma) \cost^n(f_\gamma^n(m,s^n)) \leq \plimit \;,
\intertext{
and
}
\label{eq:LPsiIneqC}
&&\E \Psi_j(s^n)=\sum_{\gamma\in\Gamma} \mu(\gamma)\cdot \cerr(\code_\gamma) \leq \eps_n \;,
\eieee
for all $j\in[1:k]$. 
Now take $n$ to be large enough so that $\eps_n<\alpha$. 

Consider the code $\code^{\Gamma^*}=(\mu^*,\Gamma^*=[1:\dK],\{\code_{L_j}\}_{j=1}^k)$ formed by a random collection of codes, with $\mu^*(j)=\frac{1}{\dK}$. The event that a ``bad code" is chosen is bounded by the union of the following events.
Denote the event that the input constraint is violated 
 by
\bieee
\Aset_1&=\Bigg\{& \frac{1}{\dK} \sum_{j=1}^{\dK} \Phi_j(m,s^n)> \plimit+\delta 
\,,\; 
\label{eq:LETa1}
\IEEEnonumber\\
&&\text{for some $(m,s^n)\in [1:2^{nR}]\times \Sset^n$ with $l^n(s^n)\leq\Lambda$}
\Bigg\}\;,
\intertext{and denote the event that the error requirement is violated by}
\Aset_2&=\Bigg\{& \frac{1}{\dK} \sum_{j=1}^{\dK} \Psi_j(s^n)\geq \alpha \,,\; \text{for some $s^n\in\Sset^n$ with $l^n(s^n)\leq\Lambda$}
 \Bigg\} \;,
\label{eq:LETa2}
\eieee
where $0<\alpha<1$ and $\delta>0$ are arbitrarily small. 
Then, by the union of events bound
\begin{align}
&\prob{\Aset_1\cup\Aset_2} 
\leq \prob{\Aset_1}+\prob{\Aset_2} 
\;.
\label{eq:LETa1Ua2}
\end{align}

Keeping $m$ and $s^n$ fixed, the random variables $\Phi_j(m,s^n)$ and $\Psi_j(s^n)$  are each i.i.d., due to (\ref{eq:LLi}).
Consider the first term in the RHS of (\ref{eq:LETa1Ua2}),  $\prob{\Aset_1}$ (see (\ref{eq:LETa1})). By standard large deviations considerations,
we have that
\bieee
\prob{\frac{1}{\dK}\sum_{j=1}^{\dK} \Phi_j(m,s^n)> \plimit+ \delta}&\leq&  2^{-\dK \cdot (\dE_j(s^n)-\eps')} \;,
\eieee
with
\bieee
\dE_j(s^n)\triangleq \min_{P_{\Phi'}\,:\; \E \Phi'>\plimit+\alpha} D(P_{\Phi'}||P_{\Phi_1(m,s^n)}) \;,
\eieee
 (see \eg \cite[pp. 362--364]{CoverThomas:06b}), where $\eps'>0$ is arbitrarily small. 
Thus, the first term in the RHS of (\ref{eq:LETa1Ua2}) is bounded by 
\bieee
\prob{\Aset_1}&\leq& 
\sum_{m\in [1:2^{nR}]}\sum_{s^n\in\Sset^n \,:\; l^n(s^n)\leq\Lambda}\prob{ \frac{1}{k} \sum_{j=1}^{\dK} \Phi_j(m,s^n)> \plimit+\delta} \\
&\leq&
2^{nR}\cdot |\Sset|^n\cdot 2^{-\dK \cdot (\dE_j(s^n)-\eps')} \;.
\label{eq:eBound1}
\eieee
Since $2^{nR}\cdot|\Sset|^n$ grows only exponentially in $n$, choosing
 \bieee
\dK=n^2
\eieee
 results in a super exponential decay.

As for the second term in the RHS of (\ref{eq:LETa1Ua2}), $\prob{\Aset_2}$ (see (\ref{eq:LETa2})). The technique known as  Bernstein's trick \cite{Ahlswede:78p} is now applied.  
\bieee
\prob{\sum_{j=1}^{\dK} \Psi_j(s^n)\geq \dK\alpha} &\stackrel{(a)}{\leq}&
\E\left\{ \exp\left[ \beta \left(
\sum_{j=1}^\dK \Psi_j(s^n)- k\alpha
\right)
\right]
\right\} \\
&=& e^{-\beta \dK \alpha}\cdot
\E\left\{  \prod_{j=1}^\dK e^{\beta  \Psi_j(s^n)} \right\} \\
&\stackrel{(b)}{=}& e^{-\beta \dK \alpha}\cdot
\prod_{j=1}^\dK \E\left\{   e^{\beta  \Psi_j(s^n)} \right\} \\
&\stackrel{(c)}{\leq}& e^{-\beta \dK \alpha}\cdot
\prod_{j=1}^\dK \E\left\{   1+e^\beta \cdot\Psi_j(s^n) \right\} \\
&\stackrel{(d)}{\leq}& e^{-\beta \dK \alpha}\cdot
 \left(   1+e^\beta \eps_n \right)^\dK 
\eieee
where $(a)$ is an application of Chernoff's inequality; $(b)$ follows from the fact that $\Psi_j(s^n)$ are independent; $(c)$ holds since $e^{\beta x}\leq1+e^\beta x$, for $\beta>0$ and $0\leq x\leq 1$; $(d)$ follows from (\ref{eq:LPsiIneqC}). We take $n$ to be large enough for $1+e^\beta \eps_n\leq e^\alpha$ to hold. Thus, choosing $\beta=2$, we have that 
\bieee
\prob{\frac{1}{\dK}\sum_{j=1}^\dK \Psi_j(s^n)\geq \alpha} &\leq&
e^{- \alpha k }=e^{-\alpha n^2} \;,
\eieee
for all $s^n\in\Sset^n$ with $l^n(s^n)\leq\Lambda$. Hence, the second term in the RHS of (\ref{eq:LETa1Ua2})
is bounded by 
\bieee
\prob{\Aset_2}\leq
\sum_{s^n\in\Sset^n \,:\; l^n(s^n)\leq\Lambda} \prob{  \frac{1}{\dK} \sum_{j=1}^\dK \Psi_j(s^n) \,\geq\, \alpha
 } 
& \leq& |\Sset|^n \cdot e^{-\alpha n^2} \;.
\label{eq:eBound2}
\eieee

By (\ref{eq:LETa1Ua2}), (\ref{eq:eBound1}) and (\ref{eq:eBound2}), we have that probability that either the input constraint or the error requirement are violated decays super exponentially with  blocklength, namely $\prob{\Aset_1\cup\Aset_2} \sim 2^{-\theta n^2}$, for some $\theta>0$.
 It follows that 
%
 there exists a random code $\code^{\Gamma^*}=(\mu^*,\Gamma^*,\{\code_{\gamma_j}\}_{j=1}^k)$ 
 for the AVC $\avc$, such that for all $m\in [1:2^{nR}]$ and $q^n\in\pLSpaceSn$,
\begin{IEEEeqnarray}{l}
\sum_{\gamma\in\Gamma^*}\mu^*(\gamma) \sum_{s^n\in\Sset^n} q^n(s^n)
 \cost^n(\enc^n_\gamma(m,s^n))\leq\plimit+\delta \,,\;\text{}
\intertext{
and
} 
\err(\qn, \code^{\Gamma^*})=\sum_{s^n\in\Sset^n} \qn(s^n) \cerr(\code^{\Gamma^*})\leq \alpha \;,
\end{IEEEeqnarray}
as we were set to prove.
\end{proof}

\section{Proof of Theorem \ref{theo:ALavcCstateC}}
\label{app:ALavcCstateC}

\subsection*{Part 1}

Consider the AVC $\avc=\{\channel\}$ with causal SI, under average input constraint $\plimit$ and
  state constraint $\Lambda$. 
Then, for every encoding mapping $\encs(u,s)$, consider the AVC $\Uavc=\{\xichannel\}$ \emph{without SI}, under state constraint $\Lambda$, and \emph{free} of input constraint.
%
Hence, any coding scheme employed over the AVC $\Uavc$ without SI can also be employed  over the AVC $\avc$ with causal SI, using the encoding function $\encs(u,s)$,  provided that the input constraint $\plimit$ on $\avc$ is satisfied.

Let a type $P_U(u)$ and a function $\encs(u,s)$  achieve $\inR_{low,\plimit-2\eps,\Lambda+\eps}(\avc)$, where $\eps>0$ is arbitrarily small.
 Hence,
$P_U(u)\in  \pSpace_{\plimit-2\eps,\Lambda+\eps,\encs}(\Uset)$. 
 By \cite[Theorem 2]{CsiszarNarayan:88p}, for every $\eps>0$ and sufficiently large $n$, if
\bieee
R< \min_{q(s)\in\apLSpaceS} I_q(U;Y) \;,
\eieee
then there exists a $(2^{nR},n,\eps_1)$ code $\code_{\encs}=(u^n(m),g(y^n))$  over the AVC $\Uavc$ without SI, under state constraint $\Lambda$.
The code constructed in \cite{CsiszarNarayan:88p} is formed by a random selection of $2^{nR}$ independent codewords $u^n(m)$, for $m\in[1:2^{nR}]$, with uniform distribution over the type class of $P_U$ 
 (see proof of Lemma 3 in \cite{CsiszarKorner:82b}).

Consider the code $\code'$ over the AVC $\avc$, as described below.

\emph{Encoding:}
 To send a  message $m\in [1:2^{nR}]$, do as  as follows.
If  
\bieee
\label{eq:LdetC1}
\frac{1}{2^{nR}}\sum_{\tm=1}^{2^{nR}}
\cost^n(\encs^n(u^n(\tm),$ $\tsn))\leq\plimit \;,
\eieee
 for all $\tsn\in\Sset^n$ with $l^n(\tsn)\leq\Lambda$ then, at time $i\in [1:n]$, transmit 
$
x_i=\encs(u_i(m),s_i) 
$. 
Otherwise, 
 transmit 
$
x^n=(a,\ldots,a) 
$, 
 with an idle input symbol $a\in\Xset$, with $\cost(a)=0$.

\emph{Decoding:} 
Use 
 the decoder of the original code $\code_{\encs}$, namely 
$
\hm=g(y^n) 
$. 

\emph{Analysis of Probability of Error:} 
%
 Assume without loss of generality that the user sent the message $M=1$. 
Denote $\dM\triangleq 2^{nR}$. For every $s^n\in\Sset$, define a sequence of $\dM$ random variables
given by $Z_m(s^n)=\cost^n(\encs^n(U^n(m),s^n))$, for $m\in [1:\dM]$, and
consider the event
\begin{align}
 & \Eset_1=
\bigg\{\, \sum_{m=1}^{2^{nR}} Z_m(s^n)
>\plimit \,,\;\text{for some $s^n\in\Sset^n$ with 
$l^n(s^n)\leq\Lambda$} 
\,\bigg\} \;.
\label{eq:ALEset0TiDet}
\end{align}
 Then, the probability of error $\err(q^n,\code')=\prob{g(Y^n)\neq 1}$ is bounded as follows,
\begin{align}
\err(q^n,\code') 
=&\prob{\Eset_1}\cdot\cprob{g(Y^n)\neq 1}{\Eset_1} 
+\prob{\Eset_1^c}\cdot\cprob{g(Y^n)\neq 1}{\Eset_1^c} \nonumber\\
\leq& \prob{\Eset_1}+\cprob{g(Y^n)\neq 1}{\Eset_1^c} \;,
\label{eq:ALerrAVCLow1}
\end{align}
where the conditioning on $M=1$ is omitted to simplify notation. 

Now, we bound the first term in the RHS of (\ref{eq:ALerrAVCLow1}). Fix $s^n\in\Sset^n$ with $l^n(s^n)\leq\Lambda$. 
 Recall that  $U^n(1),\ldots,U^n(\dM)$ is a sequence of vectors that are independent of each other, where each vector has the same distribution.
 Therefore, for every given $s^n\in\Sset^n$ with $l^n(s^n)\leq\Lambda$, the sequence $Z_1(s^n),\ldots,Z_{\dM}(s^n)$ is i.i.d., hence
\begin{align}
\prob{\sum_{m=1}^{\dM} Z_m(s^n)>\plimit} 
\leq 2^{-\dM\cdot \dF_0(\plimit)}=2^{-2^{nR}\cdot \dF_0(\plimit)} \;,
\end{align}
where
$
\dF_0(\plimit) \triangleq \min\limits_{
 s^n\in\Sset^n \,:\;  l^n(s^n)\leq\Lambda
}
 \min\limits_{
P_{Z'} \,:\; \E Z'>\plimit} D(P_{Z'} || P_{Z_m(s^n)}) 
>0$, 
by standard large deviations considerations (see \eg \cite[pp. 362--364]{CoverThomas:06b}).
Thus, applying the union bound to (\ref{eq:ALEset0TiDet}), we have that
\bieee
\prob{\Eset_1}&\leq& |\Sset^n|\cdot  \max_{s^n\in\Sset^n \,:\;  l^n(s^n)\leq\Lambda} \prob{\sum_{m=1}^{\dM} Z_m(s^n)>\plimit} \leq
|\Sset^n|\cdot 2^{-2^{nR}\cdot \dF_0(\plimit)} \;.
\eieee
Hence, $\prob{\Eset_1}$ decays to zero double exponentially as $n\rightarrow\infty$.

As for the second term in the RHS of (\ref{eq:ALerrAVCLow1}), the probability $\cprob{g(Y^n)\neq 1}{\Eset_1^c}$ vanishes as well, due to the following. Given that the event $\Eset_1^c$ occurred, we have that 
$
X^n=\encs^n(U^n(1),S^n) 
$. 
Then, applying the results by \cite{CsiszarNarayan:88p}, we have that $\code_{\encs}$ is a $(2^{nR},n,\eps_1)$ code over the AVC $\Uavc$, where $\eps_1>0$ is arbitrarily small. It thus follows that $\cprob{g(Y^n)\neq 1}{\Eset_1^c}\leq\eps_1$.
%
\qed

\subsection*{Part 2}
The converse part is a direct consequence of Theorem~\ref{theo:ALavcCr}, by which $\LCavc\leq\LrCav=\LrICav=\LrIRav$, for $\plimit=\cost_{max}$.
In the proof of the direct part,  
  the lemma below is used as a tool. 
\begin{lemma} \cite{CsiszarNarayan:88p}
\label{lemm:RPencsPos}  
If $\channel(y|x,s)$ is non-symmetrizable, then  
 for every $p\in\pSpace(\Xset)$ with $p(x)>0$ for all $x\in\Xset$, 
we have that
$
\min_{q\in\pSpace(\Sset)} I_q(X;Y)>0 
$. 
\end{lemma}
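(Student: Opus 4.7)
}
The plan is to argue by contradiction: if the minimum were zero, we could construct a symmetrizing kernel.
First I would note that the map $q \mapsto I_q(X;Y)$ is continuous on the compact simplex $\pSpace(\Sset)$, so the minimum is attained at some $q^* \in \pSpace(\Sset)$. It therefore suffices to show that $I_{q^*}(X;Y) > 0$.

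Suppose, for contradiction, that $I_{q^*}(X;Y)=0$. Then, under the joint distribution $p(x)\cdot q^*(s)\cdot \channel(y|x,s)$, the variables $X$ and $Y$ are independent, so there exists a pmf $P_Y^*$ on $\Yset$ such that
\begin{equation*}
\sum_{s\in\Sset}\channel(y|x,s)\,q^*(s)=P_Y^*(y)
\quad\text{for every $x\in\Xset$ with $p(x)>0$ and every $y\in\Yset$.}
\end{equation*}
Since $p(x)>0$ for all $x\in\Xset$ by hypothesis, this identity holds for \emph{every} $x\in\Xset$.

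Next I would use this identity to define a symmetrizing channel. Set
\begin{equation*}
J(s\mid x)\triangleq q^*(s) \,,\quad s\in\Sset,\ x\in\Xset,
\end{equation*}
that is, a kernel that ignores its input. Then for every $x_1,x_2\in\Xset$ and every $y\in\Yset$,
\begin{equation*}
\sum_{s\in\Sset}\channel(y|x_1,s)\,J(s|x_2)
=\sum_{s\in\Sset}\channel(y|x_1,s)\,q^*(s)
=P_Y^*(y),
\end{equation*}
and by the symmetric argument the same value equals $\sum_s \channel(y|x_2,s)\,J(s|x_1)$. Hence $J$ symmetrizes $\channel$ in the sense of Definition~\ref{def:symmetrizable}, contradicting the assumption that $\channel$ is non-symmetrizable. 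This contradiction shows $I_{q^*}(X;Y)>0$, and therefore $\min_{q}I_q(X;Y)>0$.

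The argument is short and the only subtle point is the need for the strict positivity hypothesis $p(x)>0$ for all $x$, which is precisely what turns ``$X,Y$ independent under $p\cdot q^*$'' into an identity valid on the whole of $\Xset$, not merely on the support of $p$; without this hypothesis one could only symmetrize on the support, which would not suffice to contradict non-symmetrizability as defined for $\channel$ over all of $\Xset$. I do not anticipate any other obstacle.
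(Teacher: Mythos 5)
Your proof is correct, and it is exactly the standard argument: the paper cites this lemma from Csisz\'{a}r and Narayan without reproving it, and their proof is the same contrapositive construction — if $I_{q^*}(X;Y)=0$ for a full-support input distribution, the constant kernel $J(s|x)=q^*(s)$ symmetrizes $\channel$. Your handling of the two potentially delicate points (attainment of the minimum by continuity and compactness, and the role of the full-support hypothesis in extending the independence identity to all of $\Xset$) is also right.
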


Now, assume that there exists a function $\encs'(u,s)$, such that 
$V_{Y|U,S}^{\encs'}(y|u,s)=\channel(y|\encs'(u,s),s)$ is non-symmetrizable. 
  We  show that every rate $R<\LrICav$ can be achieved.  
%
The assumption above, along with Lemma~\ref{lemm:RPencsPos} and 
\cite[Theorem 2]{CsiszarNarayan:88p}, 
imply that 
 the capacity without constraints is positive, \ie 
$
\opC_{\cost_{max},l_{max}}(\avc)>0 
$. 
 This, in turn,  allows us to use Ahlswede's ET \cite{Ahlswede:78p} 
 using the random code constructed in the proof of Theorem~\ref{theo:ALavcCr} to construct a deterministic code (see  \cite[Section V]{CsiszarNarayan:88p}).

Let $R<\inR^{\rstarC}_{low,\plimit,\Lambda
\,}\hspace{-0.1cm}(\avc)$. 
 By Theorem~\ref{theo:ALavcCr},  for some $\theta>0$ and sufficiently large $n$,  
 there exists a $(2^{nR},n,e^{-\theta n})$ random code for the AVC $\avc$ with causal SI, under  
state constraint $\Lambda$. 
Thus, by Lemma~\ref{lemm:LcorrSizeC},  for every $\eps_1>0$ and sufficiently large $n$, 
 there exists a $(2^{nR},n,\eps_1)$ random  code 
$
\code^\Gamma= \bigg(\, \mu(\gamma)=\frac{1}{k} \,,\; \Gamma=[1:k] \,,\;  \{\code_\gamma=(\enc_\gamma^n,g_\gamma)  \}_{\gamma\in \Gamma}  \,\bigg) 
$, 
 for the AVC $\avc$ under 
 state constraint $\Lambda$, 
with 
$
k=|\Gamma|\leq n^2 
$. 

Next, we claim that the code index $\gamma\in [1:k]$ can be reliably sent  over the AVC $\avc$ with causal SI, under 
 state constraint $\Lambda$. 
Consider a 
 code for the index $\gamma\in [1:k]$, with a blocklength $\nu$ and rate $\bR$.
Since $k$ is polynomial at most, 
 such a code requires a negligible blocklength, \ie
 $
\nu=o(n) 
$. 
Therefore, the jammer is virtually free of state constraints during this transmission. 
However, as deduced above, 
 the capacity without state constraints is positive, under the assumptions of part 2
 of the theorem, and thus 
 for every $\eps_2>0$ and sufficiently large $\nu$, there exists a $(2^{\nu\bR},\nu,\eps_2)$ deterministic code 
$ 
\code_{\text{i}}=(\tfnu,\gnu)  
$ to send $\gamma\in [1:k]$, where $\nu=o(n)$ and $\bR>0$. 

Now, consider a code 
 formed by the concatenation of $\code_{\text{i}}$ as a prefix to a corresponding code in the code collection $\{\code_\gamma\}_{\gamma\in\Gamma}$. The encoder sends both $\gamma$ and $m$, 
 by transmitting $\tfnu(\gamma,s^\nu)$ and then 
 $x^n=\enc_\gamma^n(m,s_{\nu+1},\ldots,s_{\nu+n})$. 
Subsequently, decoding is performed in two stages as well; the index is estimated first, with  
$\hgamma=\gnu(y_1,\ldots,y_{\nu})$, and the message is then estimated by  
$\widehat{m}=g_{\hgamma}(y_{\nu+1},\ldots,y_{\nu+n})$.  
By the union of events bound, the probability of error is then bounded by $\eps=\eps_1+\eps_2$. 
That is, the concatenated code is a $(2^{(\nu+n)\tR_n},\nu+n,\eps)$ code over the AVC $\avc$ with causal SI, under  
 state constraint $\Lambda$, where $\nu=o(n)$, and  
  the rate   $\tR_n=\frac{n}{\nu+n}\cdot R$ approaches $R$ as $n\rightarrow \infty$. 
\qed


\section{Analysis of Example~\ref{example:LAVNTC}}
\label{app:LAVNTC}
We rely on the analysis of Erez and Zamir in \cite{ErezZamir:00p}. They considered Shannon's model \cite{Shannon:58p} of a channel with random parameters with causal SI, where the state sequence $S^n$ is i.i.d. according to a given distribution $q(s)$. In \cite{ErezZamir:00p},
Erez and Zamir consider a modulo-additive channel, 
\bieee
\label{eq:AdditiveModulo}
Y=X+Z_S \mod |\Xset| \;,
\eieee
with $\Xset=\Zset=\Yset=\{0,1,\ldots,|\Xset|-1\}$, such that given $S=s$, the additive noise is distributed according to 
$
Z_s\sim p(z|s) 
$. 
Let $\Uset$ be the index set for the set of all functions $\encs_u: \Sset\rightarrow\Xset$. 
It is shown in \cite{ErezZamir:00p} that the capacity of the modulo-additive random parameter channel $\rp$ with causal SI is given by
\bieee
\label{eq:ModuloAdditiveC}
\ICrp=\log|\Xset|-\min_{u\in\Uset} H(Z_S-\encs_u(S)) \;.
\eieee
For $u\in\Uset$ that achieves the minimum above, $\encs_u(S)$ is interpreted as the minimum error-entropy predictor of $Z_S$. 
The DMC $\channel$ in Example~\ref{example:LAVNTC} is a special case of their model.

First, consider the arbitrarily varying  noisy-typewriter channel $\avcig$ without SI, under a state constraint $\Lambda$, 
when free of input constraints, \ie $\plimit=\cost_{max}$. We calculate the 
random code capacity given by (\ref{eq:Lminimax}), due to \cite{CsiszarNarayan:88p1}.
Consider a given $0\leq q\leq 1$, and let 
\bieee
S=\begin{cases}
1 &\text{w.p. $1-q$}\;,\\
2 &\text{w.p. $q$}\;.
\end{cases}
\eieee
 The entropy of the additive noise $Z$ is then given by 
\bieee
H_q(Z)=h(\theta)+\theta h(q) \;,
\eieee
hence,
\bieee
\label{eq:Lex2ICrpig}
\ICrpig\triangleq\max_{p(x)} I_q(X;Y) 
=\log 3-h(\theta)-\theta h(q) \;.
\eieee
Minimizing over $0\leq q\leq \Lambda$ yields
\bieee
\label{eq:Lex2LrICavig}
\LrICavig=\min_{0\leq q\leq \Lambda} \ICrpig=\begin{cases}
\log 3 -h(\theta)-\theta 	h(\Lambda)							&\text{if $0\leq \Lambda\leq \frac{1}{2}$}									\;,\\
\log 3 -h(\theta)-\theta 													&\text{if $\Lambda\geq\frac{1}{2}$}									\;.
\end{cases}\qquad
\eieee
and by Theorem~\ref{theo:LavcC0R}, due to \cite{CsiszarNarayan:88p1}, 
 the random code capacity of the AVC $\avcig$ without SI, under state constraint $\Lambda$, is given by
 $\LrCavig=\LrICavig$.

We now claim that $\avcig$ is non-symmetrizable for all $\theta\neq\frac{2}{3}$, which will imply that the deterministic code capacity is given by 
$\LCavcig=\LrICavig$, by Theorem~\ref{theo:LavcC0stateC}, due to \cite{CsiszarNarayan:88p}.
Assume to the contrary that $\avcig$ is symmetrizable and there exists $J(s|x)$ that satisfies (\ref{eq:symmetrizable}). In particular, denoting 
$\alpha_x=J(2|x)$ for $x\in\{0,1,2\}$, we have that both of the following relations hold for $y\in\{0,1,2\}$,
\begin{subequations}
\label{eq:EX1symm}
\bieee
&&(1-\alpha_1)\cdot \channel(y|0,1)+\alpha_1\cdot \channel(y|0,2) \IEEEnonumber\\&&\quad= 
  (1-\alpha_0)\cdot \channel(y|1,1)+\alpha_0\cdot \channel(y|1,2) \;,
\eieee
and
\bieee
&&(1-\alpha_2)\cdot \channel(y|0,1)+\alpha_2\cdot \channel(y|0,2) \IEEEnonumber\\&&\quad=   
  (1-\alpha_0)\cdot \channel(y|2,1)+\alpha_0\cdot \channel(y|2,2) \;.
\eieee
\end{subequations}
 Taking $y=0$, we have
$
1-\theta=\alpha_0\cdot \theta=(1-\alpha_0)\cdot\theta 
$. 
Since $\theta>0$, this 
 can only hold for $\alpha_0=\frac{1}{2}$ and $\theta=\frac{2}{3}$. Thus, for $\theta\neq\frac{2}{3}$, the AVC $\avcig$ without SI is non-symmetrizable, and by Theorem~\ref{theo:LavcC0stateC}, $\LCavcig=\LrICavig$. 

For $\theta=\frac{2}{3}$, we have that 
\bieee
\LrICavig=
\begin{cases}
\frac{2}{3}(1-h(\Lambda))						&\text{if $0\leq \Lambda\leq \frac{1}{2}$}									\;,\\
0													&\text{if $\Lambda\geq\frac{1}{2}$}									\;.
\end{cases}
\eieee
Since the capacity without constraints is zero, Theorem~\ref{theo:symm0} implies that $\channel$ is symmetrizable for this value of $\theta$. Substituting $y=0$ and $y=1$ in (\ref{eq:EX1symm}), we find that $\channel$ can only be symmetrized by $J(s|x)$ such that $\alpha_0=\alpha_1=\alpha_2=\frac{1}{2}$, hence $\sum_{x,s} p(x)J(s|x) l(s)=\frac{1}{2}$ for all $p$.
It then follows that $\LCavcig=\LrICavig$.
Therefore, when SI is not available,  $\LCavcig=\LrICavig$ for all values of $\theta>0$, and the capacity is thus given by  
 (\ref{eq:Lex2Cavcig}).  

Now, consider the arbitrarily varying noisy-typewriter channel $\avc$ with causal SI, under state constraint $\Lambda$.  
We use the formula  in
 (\ref{eq:ModuloAdditiveC}) (by \cite{ErezZamir:00p}) to find an explicit expression for $\ICrp$. There are nine mappings $\encs_u:\Sset\rightarrow\Xset$. 
For $\encs_1(s)=0$, $\encs_2(s)=1$ and $\encs_3(s)=2$, we have 
\bieee
H(Z-\encs_u(S))=H(Z)=h(\theta)+\theta h(q) \,,\; u=1,2,3.
\eieee
For $\encs_4(s)=s$, $\encs_5(s)=s+1$ and $\encs_6(s)=s+2$, we have 
\bieee
H(Z-\encs_u(S))=H\left(  (K-1)\cdot S \right)=h(\theta)+(1-\theta) h(q) \,,\; u=4,5,6.\quad
\eieee
For $\encs_7(s)=2s$, $\encs_8(s)=2s+1$ and $\encs_9(s)=2s+2$, we have 
\bieee
H(Z-\encs_u(S))=H\left(  (K-2)\cdot S \right)=h(\theta*q) \,,\; u=7,8,9 \;,
\eieee
where $\theta*q=\theta(1-q)+(1-\theta )q$. Therefore,
\bieee
\ICrp &=&\log 3-\min\left(\, h(\theta)+\theta h(q) \,,\; h(\theta)+(1-\theta) h(q) \,,\; h(\theta*q)  \right)\;.
\eieee
Therefore, 
\begin{align}
\label{eq:LNTrC}
&\LrICav \nonumber\\
=&\begin{cases}
\log 3-\min\left(\, h(\theta)+\theta h(\Lambda) \,,\; h(\theta)+(1-\theta)h(\Lambda) \,,\; h(\theta*\Lambda)  \right) 
&\text{if $0\leq \Lambda< \frac{1}{2}$}\;,\\
\log 3-\min\left(\, h(\theta)+\theta  \,,\; h(\theta)+(1-\theta) \,,\; 1  \right) 
&\text{if $ \Lambda\geq \frac{1}{2}$}\;.
\end{cases}\qquad
\end{align}
and by part 2 of Theorem~\ref{theo:ALavcCr}, the random code capacity of the AVC $\avc$ with causal SI, under state constraint $\Lambda$,
is given by $\LrCav=\LrICav=\LrIRav$.

Let us examine the condition in part 2 of Theorem~\ref{theo:ALavcCstateC}. Assume to the contrary that $\Uavc$ is symmetrizable.
In particular, taking $\encs_{u_1}(s)=s$ and $\encs_{u_2}(s)=2s$, \ie $u_1=4$ and $u_2=7$, we have that for some $\beta_u$,
where $\beta_u=J(2|u)$,
\begin{multline}
(1-\beta_7)\channel(y|1,1)+\beta_7\channel(y|2,2)=\\
(1-\beta_4)\channel(y|2,1)+\beta_4\channel(y|1,2) \;.
\end{multline}  
Thus, for $y=0$, we get $0=\theta$, which contradicts our assumption that $\theta>0$, and by part 2 of Theorem~\ref{theo:ALavcCstateC},
$\LCavc=\LrICav$.
 \qed

\chapter{AVDBC with Causal SI: Proofs}
\section{Proof of Lemma~\ref{lemm:BcompoundLowerB}}
\label{app:BcompoundLowerB}
We show that every rate pair $(R_1,R_2)\in\BIRcompound$ can be achieved using deterministic codes over the compound DBC $\Bcompound$ with causal SI. 
We construct a code based on superposition coding with Shannon strategies, and decode using joint typicality  with respect to a channel state type, which is ``close" to some  $q\in\Qset$. 

Define a set $\tQ$ of state types
\bieee
\tQ= \left\{ \emp_{s^n} \,:\; s^n\in\Aset^{\delta_1}(q) \,,\;\text{for some $q\in\Qset$}  \right\} \;,
\label{eq:BtQ}
\eieee
where
\bieee
\label{eq:Bdelta1CompNoSI}
\delta_1\triangleq \frac{\delta}{2\cdot |\Sset|} \;.
\eieee
That is, $\tQ$ is the set of types that are $\delta_1$-close to some state distribution $q(s)$ in $\Qset$. 
Now, a code for the compound DBC with causal SI is constructed as follows

\emph{Codebook Generation:} Fix the distribution $P_{U_1,U_2}(u_1,u_2)=p(u_1,u_2)$ and the function $\encs(u_1,u_2,s)$. Generate $2^{nR_2}$ independent sequences at random,
\bieee
u_2^n(m_2) \sim  \prod_{i=1}^n P_{U_2}(u_{2,i}) \,,\;\text{for $m_2\in[1:2^{nR_2}]$}\;.
\eieee
For every $m_2\in[1:2^{nR_2}]$, generate $2^{nR_1}$ sequences at random,  
 \bieee
u_1^n(m_1,m_2) \sim \prod_{i=1}^n P_{U_1|U_2}(u_{1,i}|u_{2,i}(m_2)) \,,\;\text{for $m_1\in[1:2^{nR_1}]$}\;,
\eieee
conditionally independent given $u_2^n(m_2)$.

\emph{Encoding}: To send a pair of messages $(m_1,m_2)\in [1:2^{nR_1}]\times [1:2^{nR_2}]$, transmit at time $i\in[1:n]$,
\bieee
x_i=\encs\left( u_{1,i}(m_1,m_2),u_{2,i}(m_2),s_i \right) \;.
\eieee

\emph{Decoding}: Let
 \begin{align}
\label{eq:BUchannelYL1causal}
P^{q}_{U_1,U_2,Y_1,Y_2}(u_1,u_2,y_1,y_2)=\sum_{s\in\Sset} q(s)
 P_{U_1,U_2}(u_1,u_2) 
 \bc\left( y_1, y_2| \encs(u_1,u_2,s),s \right) \;.
 \end{align}
Observing $y_2^n$, decoder 2 finds a unique $\tm_2\in[1:2^{nR_2}]$ such that
\bieee
 (u_2^n(\tm_2),y_2^n)\in\tset(P_{U_2} P^{q}_{Y_2|U_2}) \;,\quad\text{for some $q\in\tQ$} \;.
\eieee  
If there is none, or more than one such $\tm_2\in[1:2^{nR_2}]$, then decoder 2 declares an error.

Observing $y_1^n$, decoder 1 finds a unique pair of messages $(\hm_1,\hm_2)\in [1:2^{nR_1}]\times [1:2^{nR_1}]$ such that
\bieee
(u_2^n(\hm_2),u_1^n(\hm_1,\hm_2),y_1^n)\in\tset( P_{U_1,U_2} P^q_{Y_1|U_1,U_2} )
 \;,\quad\text{for some $q\in\tQ$} \;.
\eieee
If there is none, or more than such pair $(\hm_1,\hm_2)$, then decoder 1 declares an error.

\emph{Analysis of Probability of Error}:
By the union of events bound,
\bieee
\label{eq:BcompLerr12}
\err(q,\code)\leq \prob{\tM_2\neq 1}+\prob{(\hM_1,\hM_2)\neq (1,1)} \;,
\eieee
where the conditioning on $(M_1,M_2)=(1,1)$ is omitted for convenience of notation.
The error event for decoder 2 is the union of the following events. 
\bieee
\Eset_{2,1} &=&\{ (U_2^n(1),Y_2^n)\notin \tset(P_{U_2} P^{q'}_{Y_2|U_2}) \;\text{ for all $q'\in\tQ$} \} \;, \\
\Eset_{2,2} &=&\{ (U_2^n(m_2),Y^n)\in \tset(P_{U_2} P^{q'}_{Y_2|U_2}) \;\text{ for some $m_2\neq 1,\, q'\in\tQ$} \} \;.
\eieee
 Then, by the union of events bound,
\bieee
\label{eq:Bub2}
 \prob{\tM_2\neq 1}\leq \prob{\Eset_{2,1}}+ \prob{\Eset_{2,2}} \;.
\eieee  
 Considering the first term, we claim that the event $\Eset_{2,1}$ implies that $(U_2^n(1),Y_2^n)\notin \Aset^{\delta}(P_{U_2} P^{q''}_{Y_2|U_2})$ for all $q''\in\Qset$. 
	Suppose that there exists $q''\in\Qset$ that satisfies 
	$(U_2^n(1),Y_2^n)\in \Aset^{\nicefrac{\delta}{2}}(P_{U_2} P^{q''}_{Y_2|U_2})$. 
	Then, for a sufficiently large $n$, there exists a type $q'(s)$ such that 
	\bieee
	|q'(s)-q''(s)|\leq \delta_1 \;.
	\eieee
	It can then be inferred that $q'\in\tQ$ (see (\ref{eq:BtQ})), and 
	\bieee
	|P_{Y_2|U_2}^{q'}(y_2|u_2)-P_{Y_2|U_2}^{q'}(y_2|u_2)|\leq |\Sset|\cdot\delta_1=\frac{\delta}{2} \;,
	\eieee
	for all $u_2\in\Uset_2$ and $y_2\in\Yset_2$ (see  (\ref{eq:Bdelta1CompNoSI}) and (\ref{eq:BUchannelYL1causal})).
	Hence,	$(U_2^n(1),Y_2^n)\in \Aset^{\delta}(P_{U_2} P^{q'}_{Y_2|U_2})$. 
	Equivalently, if  $(U_2^n(1),Y_2^n)\notin \Aset^{\delta}(P_{U_2} P^{q'}_{Y_2|U_2})$ for all $q'\in\tQ$, then 
	$(U_2^n(1),Y_2^n)\notin \Aset^{\nicefrac{\delta}{2}}(P_{U_2} P^{q''}_{Y_2|U_2})$ for all $q''\in\Qset$. 
	Thus,
	\bieee 
	\label{eq:BllnRL}
	\prob{\Eset_{2,1}}&\leq& \prob{(U_2^n(1),Y_2^n)\notin \Aset^{\nicefrac{\delta}{2}}(P_{U_2} P^{q''}_{Y_2|U_2}) \;\text{ for all $q''\in\Qset$} } \IEEEnonumber \\
	&\leq& \prob{(U_2^n(1),Y_2^n)\notin \Aset^{\nicefrac{\delta}{2}}(P_{U_2} P^{q}_{Y_2|U_2})  } \;.
	\eieee
The last expression  tends to zero exponentially as $n\rightarrow\infty$ by the law of large numbers and Chernoff's bound.
	
	Moving to the second term in the RHS of (\ref{eq:Bub2}), 
	we use the classic method of types considerations to bound $\prob{\Eset_{2,2}}$. 
	By the union of events bound 
and the fact that the number of type classes in $\Sset^n$ is bounded by $(n+1)^{|\Sset|}$ 
 \cite[Lemma 2.2]{CsiszarKorner:82b}, we have that  
\begin{align}
&\prob{\Eset_{2,2}} \nonumber\\
\leq& (n+1)^{|\Sset|}
\cdot \sup_{q'\in\tQ} \prob{
(U_2^n(m_2),Y_2^n)\in \tset(P_{U_2} P^{q'}_{Y_2|U_2}) \;\text{ for some $m_2\neq 1$} 
}\;.
\label{eq:BE2poly}
\end{align}
For every $m_2\neq 1$,
\begin{align}
\prob{
(U_2^n(m_2),Y_2^n)\in \tset(P_{U_2} P^{q'}_{Y_2|U_2}) 
}
=&  \sum_{u_2^n\in\Uset_2^n} P_{U_2^n}(u_2^n) \cdot \prob{(u_2^n,Y_2^n)\in\tset(P_{U_2} P^{q'}_{Y_2|U_2})} 													\nonumber\\
=&  \sum_{u_2^n\in\Uset_2^n} P_{U_2^n}(u_2^n) \cdot \sum_{y_2^n \,:\; (u_2^n,y_2^n)\in \tset(P_{U_2} P^{q'}_{Y_2|U_2})} P_{Y_2^n}^q(y_2^n) \;,
\label{eq:BE2bound0} 
\end{align}
where the first equality holds since $U_2^n(m_2)$ is independent of $Y_2^n$ for every $m_2\neq 1$. Let 
 $(u_2^n,y_2^n)\in \tset(P_{U_2} P^{q'}_{Y_2|U_2})$. Then, $\,y_2^n\in\Aset^{\delta_2}(P_{Y_2}^{q'})$ with $\delta_2\triangleq |\Uset_2|\cdot\delta$. By Lemmas 2.6 and 2.7 in
 \cite{CsiszarKorner:82b},
\bieee
\label{eq:BpYbound}
P_{Y_2^n}^q(y_2^n)=2^{-n\left(  H(\hP_{y_2^n})+D(\hP_{y_2^n}||P_{Y_2}^q)
\right)}\leq 2^{-n H(\hP_{y_2^n})}
\leq 2^{-n\left( H_{q'}(Y_2) -\eps_1(\delta) \right)} \;,
\eieee
where $\eps_1(\delta)\rightarrow 0$ as $\delta\rightarrow 0$. Therefore, by (\ref{eq:BE2poly})$-$(\ref{eq:BpYbound}),
\begin{align}
& \prob{\Eset_{2,2}}           																										\nonumber\\
\leq& (n+1)^{|\Sset|}																															\nonumber\\
& \cdot \sup_{q'\in\tQ} \left[
2^{nR_2} \cdot \sum_{u_2^n\in\Uset_2^n} P_{U_2^n}(u_2^n)\cdot |\{y_2^n\,:\; (u_2^n,y_2^n)\in\tset(P_{U_2} P_{Y_2|U_2}^{q'})\}| \cdot 
 2^{-n\left( H_{q'}(Y_2) -\eps_1(\delta) \right)}		 \right]	  												\nonumber\\
\leq& (n+1)^{|\Sset|}\cdot \sup_{q'\in\Qset} 
2^{-n[ I_{q'}(U_2;Y_2) 
-R_2-\eps_2(\delta) ]} \label{eq:BexpCR2} \;,
\end{align}
with $\eps_2(\delta)\rightarrow 0$ as $\delta\rightarrow 0$, 
where the last inequality is due to \cite[Lemma 2.13]{CsiszarKorner:82b}. The RHS of (\ref{eq:BexpCR2})
  tends to zero exponentially as $n\rightarrow\infty$, provided that $R_2<\inf_{q'\in\Qset} I_{q'}(U_2;Y_2)
-\eps_2(\delta)$.  

Now, consider the error event of decoder 1. For every $(m_1,m_2)\in [1:2^{nR_1}]\times [1:2^{nR_1}]$, define the events
\begin{align*}
&\Eset_{1,2}(m_2)=\{ (U_2^n(m_2),Y_1^n)\in\Aset^{\delta_3}(P_{U_2} P^{q'}_{Y_1|U_2})  \,,\;\text{for some $q'\in\tQ$}  \}\;,\\
&\Eset_{1,1}(m_1,m_2)=\{ (U_2^n(m_2),U_1^n(m_1,m_2),Y_1^n)\in\tset(P_{U_2,U_1} P^{q'}_{Y_1|U_2,U_1})  \,,\;\text{for some $q'\in\tQ$}\}\;,
\end{align*}
where $\delta_3\triangleq |\Uset_1|\delta$. 
Then, the error event is bounded by
\begin{align}
\left\{(\hM_1,\hM_2)\neq (1,1)\right\}
&\subseteq \,\Eset_{1,1}(1,1)^c \,\cup\; \bigcup_{m_1\neq 1}\Eset_{1,1}(m_1,1)
\cup\; \bigcup_{ \substack{m_1\in [1:2^{nR_1}]\\ m_2\neq 1}
}\Eset_{1,1}(m_1,m_2)																																\nonumber\\
&\subseteq\,\Eset_{1,1}(1,1)^c \,\cup\; \bigcup_{m_1\neq 1}\Eset_{1,1}(m_1,1)
\cup\; \bigcup_{  m_2\neq 1}
\Eset_{1,2}(m_2)
  \;,
\end{align}
where the last line follows from the fact that if the event $\Eset_{1,1}(m_1,m_2)$ occurs, then $\Eset_{1,2}(m_2)$ occurs as well.
Thus, by the union of events bound,
\begin{align}
\prob{(\hM_1,\hM_2)\neq (1,1)}
\leq&\prob{ \Eset_{1,1}(1,1)^c} + \sum_{m_2\neq 1} \prob{\Eset_{1,2}(m_2) }
+ \sum_{m_1\neq 1} \prob{\Eset_{1,1}(m_1,1) } \nonumber\\
\leq&
2^{-\theta n}+2^{-n\left(\inf\limits_{q'\in\Qset} I_{q'}(U_2;Y_1)-R_2-\eps_3(\delta) \right)}+\sum_{m_1\neq 1}\prob{\Eset_{1,1}(m_1,1) }
\;,
\label{eq:Blowerexp1}
\end{align}
where the last inequality follows from the law of large numbers and type class considerations used before, with $\eps_3(\delta)\rightarrow 0$ as $\delta\rightarrow 0$.  
Since the compound  DBC is assumed to be degraded, we have that $I_{q'}(U_2;Y_1)\geq I_{q'}(U_2;Y_2)$ for all $q'\in\pSpace(\Sset)$.
Thus,  taking $R_2<\inf_{q'\in\Qset} I_{q'}(U_2;Y_2)
-\eps_2(\delta)$ guarantees that the middle term in the RHS of (\ref{eq:Blowerexp1}) 
 tends to zero  exponentially as $n\rightarrow\infty$.
 It remains for us to bound the last sum. Using similar type class considerations, we have that for every $q'\in\tQ$ and $m_1\neq 1$,
\begin{align}
&\prob{ ( U_2^n(1), U_1^n(m_1,1),Y_1^n)\in \tset(P_{U_2,U_1} P^{q'}_{Y_1|U_2,U_1})}  \nonumber\\
&= \sum_{(u_2^n,u_1^n,y_1^n)\in \tset(P_{U_2,U_1} P^{q'}_{Y_1|U_2,U_1})} 
P_{U_2^n}(u_2^n)\cdot    P_{U_1^n|U_2^n}(u_1^n|u_2^n)\cdot  P^q_{Y_1^n|U_2^n}(y_1^n|u_2^n)\nonumber\\
&\leq 2^{n(H_{q'}(U_2,U_1,Y_1)+\eps_4(\delta))}\cdot 2^{-n(H(U_2)-\eps_4(\delta))}\cdot 2^{-n(H(U_1|U_2)-\eps_4(\delta))}\cdot 2^{-n(H_{q'}(Y_1|U_2)-\eps_4(\delta))} 																									\nonumber\\
&= 2^{-n(I_{q'}(U_1;Y_1|U_2)-4 \eps_4(\delta))} \;,
\label{eq:BI1}
\end{align}
where $\eps_4(\delta)\rightarrow 0$ as $\delta\rightarrow 0$. 
Therefore, the sum term in the RHS of (\ref{eq:Blowerexp1}) is bounded by 
\begin{align}
&\sum_{m_1\neq 1}\prob{\Eset_{1,1}(m_1,1) }\\
&=\sum_{m_1\neq 1} \prob{
(U_2^n(1),U_1^n(m_1,1),Y_1^n)\in\tset(P_{U_2,U_1} P^{q'}_{Y_1|U_2,U_1})  \,,\;\text{for some $q'\in\tQ$}\}
}\\
&\leq (n+1)^{|\Sset|} \cdot 2^{-n\left(\inf\limits_{q'\in\Qset} I_{q'}(U_1;Y_1|U_2)-R_1- \eps_5(\delta) \right)} \;,
\label{eq:BexpCR1}
\end{align}
where the last line follows from (\ref{eq:BI1}), and $\eps_5(\delta)\rightarrow 0$ as $\delta\rightarrow 0$. The last expression  tends to zero  exponentially as $n\rightarrow\infty$ and $\delta\rightarrow 0$ provided that $R_1<\inf_{q'\in\Qset} I_{q'}(U_1;Y_1|U_2)-\eps_5(\delta)$.  

The probability of error, averaged over the class of the codebooks, exponentially decays to zero  as $n\rightarrow\infty$. Therefore, there must exist a $(2^{nR_1},2^{nR_2},n,\eps)$ deterministic code, for a sufficiently large $n$.
\qed

\section{Proof of Theorem~\ref{theo:BcvC}}            
\label{app:BcvC}   
\section*{Part 1}
At the first part of the theorem it is assumed that the interior of the capacity region is non-empty, \ie $\interior{\BCcompound}\neq\emptyset$. Denote the marginal compound channels with causal SI, corresponding to user 1 and user 2, by
\bieee
\compound_1=\{ \Qset, \sbc \} \,,\;\text{and
}\quad 
\compound_2=\{ \Qset, \wbc \}\;,
\eieee
respectively. 
Since the compound  DBC is assumed to be degraded, this means that 
\bieee
\label{eq:BApos}
\opC(\compound_1)\geq \opC(\compound_2)>0 \;.
\eieee
 
 \begin{proof}[Achievability proof]
 We show that every rate pair $(R_1,R_2)\in\BICcompound$ can be achieved using a code based on Shannon strategies with the addition of a codeword \emph{suffix}. 
 At time $i=n+1$, having completed the transmission of the messages, the type of the state sequence $s^n$ is known to the encoder.
Following the assumption that the interior of the capacity region is non-empty,  the type of $s^n$  can be reliably communicated to both receivers 
 as a suffix, while the blocklength is increased by  $\nu>0$ additional channel uses, where $\nu$ is small compared to $n$. 
The receivers  first estimate the type of $s^n$, and then use joint typicality  with respect to the estimated type. 
  The details are provided below.  
	
By (\ref{eq:BApos}), we have that 
for every $\eps_1>0$ and sufficiently large blocklength $\nu$, there exists a 
$(2^{\nu \tR_1},$ $2^{\nu \tR_2},\nu,\eps_1)$ code $\tcode=(\tf^\nu,\tg_1,\tg_2)$
 for the transmission of a type  $\hP_{s^n}$ at  positive rates $\tR_1>0$ and $\tR_2>0$. 
 Since the total number of types is polynomial in $n$ (see \cite{CsiszarKorner:82b}),  the type $\hP_{s^n}$ can be transmitted at a negligible rate, with a blocklength that grows a lot slower than $n$, \ie   
\bieee
\nu=o(n) \;.
\eieee
We now construct a code $\code$ over the compound DBC with causal SI, such that the blocklength is $n+o(n)$, and the rate 
$R'_n$ approaches $R$ as $n\rightarrow\infty$. 

\emph{Codebook Generation:} Fix the distribution $P_{U_1,U_2}(u_1,u_2)=p(u_1,u_2)$ and the function $\encs(u_1,u_2,s)$. Generate $2^{nR_2}$ independent sequences $u_2^n(m)$, $m\in[1:2^{nR_2}]$, at random, each according to
 $\prod_{i=1}^n P_{U_2}(u_{2,i})$.
For every $m_2\in [1:2^{nR_2}]$, generate $2^{nR_1}$ sequences at random, 
\bieee
u_1^n(m_1,m_2) \sim \prod_{i=1}^n P_{U_1|U_2}(u_{1,i}|u_{2,i}(m_2)) \,,\;\text{for $m_1\in [1:2^{nR_1}]$}\;,
\eieee
conditionally independent given $u_2^n(m_2)$.  
 Reveal the codebook of the message pair $(m_1,m_2)$ and the codebook of the type  $\hP_{s^n}$  to the encoder and the decoders.

\emph{Encoding}: To send a message pair $(m_1,m_2)\in [1:2^{nR_1}]\times [1:2^{nR_2}]$, transmit at time $i\in[1:n]$,
\bieee
x_i=\encs\left( u_{1,i}(m_1,m_2),u_{2,i}(m_2),s_i\right) \;.
\eieee
At time $i\in [n+1:n+\nu]$, knowing the sequence of previous states $s^n$,  transmit 
\bieee
x_i=\tf_i(\hP_{s^n},\, s_{n+1},\ldots,s_{n+i} ) \;,
\eieee
where $\hP_{s^n}$ is the type of the sequence $(s_1,\ldots,s_n)$. 
That is, the encoded type $\hP_{s^n}$ is transmitted as a suffix of the codeword.
We note that the type of the sequence $(s_{n+1},\ldots,s_{n+i})$ is not necessarily $\hP_{s^n}$, and it 
 is irrelevant for that matter since by (\ref{eq:BApos}),  there exists a  $(2^{\nu \tR_1},2^{\nu \tR_2},\nu,\eps_1)$ code $\tcode=(\tf^\nu,\tg_1,\tg_2)$ for the transmission of $\hP_{s^n}$ over the compound  DBC with causal SI, with $\tR_1>0$ and $\tR_2>0$.

\emph{Decoding}: Decoder 2 receives the output sequence $y_2^{n+\nu}$. As a pre-decoding step, the receiver decodes the last $\nu$ output symbols, and finds an estimate of the type of the state sequence, 
\bieee
\hq_2=\tg_2(y_{2,n+1},\ldots,y_{2,n+\nu}) \;.
\eieee
 Then, given the output sequence $y_2^n$,  decoder 2 finds a unique $\tm_2\in[1:2^{nR_2}]$ such that
\bieee
 (u_2^n(\tm_2),y_2^n)\in\tset(P_{U_2} P^{\hq_2}_{Y_2|U_2}) \;.
\eieee  
If there is none, or more than one such $\tm_2\in[1:2^{nR_2}]$, then decoder 2 declares an error.

Similarly, decoder 1 receives  $y_1^{n+\nu}$ and begins with decoding the type of the state sequence,
\bieee
\hq_1=\tg_1(y_{1,n+1},\ldots,y_{1,n+\nu}) \;.
\eieee
Then, decoder 1 finds a unique pair of messages $(\hm_1,\hm_2)\in [1:2^{nR_1}]\times [1:2^{nR_2}]$ such that 
\bieee
(u_2^n(\hm_2),u_1^n(\hm_1,\hm_2),y_1^n)\in \tset(P_{U_2,U_1} P^{\hq_1}_{Y_1|U_2,U_1}) \;.
\eieee
If there is none, or more than one such pair  $(\hm_1,\hm_2)\in [1:2^{nR_1}]\times [1:2^{nR_2}]$, then decoder 1 declares an error.

\emph{Analysis of Probability of Error}:
By symmetry, 
we may assume without loss of generality that the users sent $(M_1,M_2)=(1,1)$. 
Let $q(s)\in\Qset$ denote the actual state distribution chosen by the jammer, and let $\qn(s^n)=\prod_{i=1}^n q(s_i)$. 
Then, by the union of events bound, the probability of error is bounded by 
\bieee
\err(q,\code)\leq \prob{\tM_2 \neq 1} + \prob{(\hM_1,\hM_2)\neq (1,1)} \;,
\eieee
where the conditioning on $(M_1,M_2)=(1,1)$ is omitted for convenience of notation. 

Define the  events  
\begin{align}
&\Eset_{1,0} =\{ \hq_1\neq \hP_{S^n} \} 						 																		\label{eq:BE11}\\
&\Eset_{1,1}(m_1,m_2,q')=\{ (U_2^n(m_2),U_1^n(m_1,m_2),Y_1^n)\in \tset(P_{U_2,U_1} P^{q'}_{Y_1|U_2,U_1}) \}   \label{eq:BE12}\\
&\Eset_{1,2}(m_2,q')=\{ (U_2^n(m_2),Y_1^n)\in \Aset^{\delta_1}(P_{U_2} P^{q'}_{Y_1|U_2}) \} 		\;,					\label{eq:BE13}\\
\intertext{ 
and
 }
&\Eset_{2,0} =\{ \hq_2\neq \hP_{S^n} \} 																								\label{eq:BE21}\\
&\Eset_{2,1}(m_2,q') =\{ (U_2^n(m_2),Y_2^n)\in \tset(P_{U_2} P^{q'}_{Y_2|U_2}) \;,						\label{eq:BE22} 
\end{align}
for every $m_1\in [1:2^{nR_1}]$, $m_2\in [1:2^{nR_2}]$, 
 and $q'\in\pSpace(\Sset)$, where $\delta_1=|\Uset_1|\delta$. The error event of decoder 2 is bounded by
\bieee
\left\{ \tM_2\neq 1 \right\}&\subseteq&\Eset_{2,0}\cup\Eset_{2,1}(1,\hq_2\,)^c\cup
\bigcup_{m_2\neq 1} \Eset_{2,1}(m_2,\hq_2\,) \IEEEnonumber\\
&=& 
\Eset_{2,0} \,\cup\,
\left(  \Eset_{2,0}^c\cap\Eset_{2,1}(1,\hq_2\,)^c  \right) \,\cup\,
\left( \bigcup_{m_2\neq 1} \Eset_{2,0}^c\cap\Eset_{2,1}(m_2,\hq_2\,) \right) 
 \;. \IEEEnonumber
\eieee
 By the union of events bound,
\begin{align}
&\prob{\tM_2\neq 1}																																	\nonumber\\	
&\leq 
 \prob{\Eset_{2,0}} +
\prob{\Eset_{2,0}^c\cap\Eset_{2,1}(1,\hq_2\,)^c }                             
+  \prob{\bigcup_{m_2\neq 1} \Eset_{2,0}^c\cap  \Eset_{2,1}(m_2,\hq_2\,)} \;.
\label{eq:BcvcErr1}
\end{align}

 Since the code $\tcode$ for the transmission of the type is a $(2^{\nu \tR_1},2^{\nu \tR_2},\nu,\eps_1)$ code,
where $\eps_1>0$ is arbitrarily small, we have that the probability of erroneous decoding of the type is bounded by
 \bieee
\label{eq:decq}
\prob{\Eset_{1,0}\cup \Eset_{2,0}}\leq\eps_1 \;.
\eieee
 Thus, 
the first term in the RHS of (\ref{eq:BcvcErr1}) is bounded by $\eps_1$. Then, we maniplute the last two term as follows.  
\begin{align}
\prob{\tM_2\neq 1}
\leq& \;
\sum_{s^n\in\Aset^{\delta_2}(q)}\qn(s^n)
\cprob{\Eset_{2,0}^c\cap\Eset_{2,1}(1,\hq_2\,)^c  }{S^n=s^n}
																																			\nonumber\\
&+
\sum_{s^n\notin\Aset^{\delta_2}(q)} \qn(s^n)
\cprob{\Eset_{2,0}^c\cap\Eset_{2,1}(1,\hq_2\,)^c }{S^n=s^n}
																																			\nonumber\\
&+
\sum_{s^n\in\Aset^{\delta_2}(q)} \qn(s^n)
\cprob{\bigcup_{m_2\neq 1} \Eset_{2,0}^c\cap  \Eset_{2,1}(m_2,\hq_2\,)}{S^n=s^n}
																																			\nonumber\\
&+
\sum_{s^n\notin\Aset^{\delta_2}(q)} \qn(s^n)
\cprob{\bigcup_{m_2\neq 1} \Eset_{2,0}^c\cap  \Eset_{2,1}(m_2,\hq_2\,)}{S^n=s^n}  +\eps_1			\;,	
\label{eq:BEcompound}																														
\end{align}
where 
\bieee
\delta_2 \triangleq\frac{1}{2|\Sset|
}\cdot\delta 	\;.		\label{eq:Bdelta1}
\eieee
 Next we show that the first and the third sums in (\ref{eq:BEcompound}) 
 tend to zero as $n\rightarrow\infty$. 

Consider a given $s^n\in\Aset^{\delta_2}(q)$. For notational convenience, denote
 \bieee
\label{eq:BempC}
q''=\hP_{s^n} \;.
\eieee 
 Then, by the definition of the $\delta$-typical set, we have that
\bieee
&&|q''(s)-q(s)|\leq\delta_2 \;\text{for all $s\in\Sset\,,\;$ and }\;\,
 q''(s)=0 \;\text{when $q(s)=0$}\;. 																																						\IEEEnonumber
\eieee
It follows that
\bieee
|P_{U_2}(u_2)P_{Y_2|U_2}^{q''}(y|u_2)-P_{U_2}(u_2)P_{Y_2|U_2}^q(y_2|u_2)|&\leq&  \delta_2\cdot\sum_{s,u_1} P_{U_1|U_2}
(u_1|u_2) \wbc(y_2|\encs(u_1,u_2,s),s)\IEEEnonumber\\
&\leq& \delta_2\cdot\sum_{s,u_1} P_{U_1|U_2}(u_1|u_2)
= \delta_2\cdot|\Sset|=\frac{\delta}{2} \;,
\label{eq:BpUYclose}
\eieee
for all $u_2\in\Uset_2$ and $y_2\in\Yset_2$, where the last equality follows from (\ref{eq:Bdelta1}).

Consider the first sum in the RHS of (\ref{eq:BEcompound}). 
Given a state sequence  $s^n\in\Aset^{\delta_2}(q)$,  we have that
\begin{align}
&\cprob{\Eset_{2,0}^c\cap\Eset_{2,1}(1,\hq_2\,) }{S^n=s^n}    \nonumber\\ =& 
\cprob{\Eset_{2,0}^c\cap\Eset_{2,1}(1,\hP_{s^n}\,) }{S^n=s^n} \nonumber\\ =&
\cprob{\Eset_{2,0}^c\cap\Eset_{2,1}(1,q''\,) }{S^n=s^n}       \nonumber\\ =&
\cprob{\Eset_{2,0}^c }{\Eset_{2,1}(1,q''), S^n=s^n}\cdot\cprob{\Eset_{2,1}(1,q'')\,) }{S^n=s^n} 
 \;, \label{eq:BEsum1}
\end{align}
where the first equality follows from (\ref{eq:BE21}),  
 and the second equality follows from (\ref{eq:BempC}). 
Then,
\bieee
\cprob{\Eset_{2,0}^c\cap\Eset_{2,1}(1,\hq_2\,) }{S^n=s^n}
&\leq& \cprob{\Eset_{2,1}(1,q'')\, }{S^n=s^n} 																\IEEEnonumber\\
&=& \prob{\, (U_2^n(1),Y_2^n)\notin\Aset^{\delta}(P_{U_2} P^{q''}_{Y_2|U_2})  \,\big|\; S^n=s^n  }
\;. \qquad\quad
\label{eq:BE2bound1} 
\eieee
 Now, suppose that $(U_2^n(1),Y_2^n)\in\Aset^{\nicefrac{\delta}{2}}(P_{U_2} P^q_{Y_2|U_2})$, where $q$ is the actual state distribution. By (\ref{eq:BpUYclose}), in this case we have that $(U_2^n(1),Y_2^n)\in\tset(P_{U_2} P^{q''}_{Y_2|U_2})$. 
 Hence, by (\ref{eq:BE2bound1}), we have that
\begin{align}
\label{eq:BE2causal}
&\cprob{\Eset_{2,0}^c\cap\Eset_{2,1}(1,\hq_2\,)^c }{S^n=s^n} \nonumber\\
\leq&
\prob{\, (U_2^n(1),Y_2^n)\notin\Aset^{\nicefrac{\delta}{2}}(P_{U_2} P^q_{Y_2|U_2})  \,\big|\; S^n=s^n  }
\;.
\end{align}
The first sum in the RHS of (\ref{eq:BEcompound}) 
 is then bounded as follows. 
\begin{align}
&\sum_{s^n\in\Aset^{\delta_2}(q)} \qn(s^n)\cprob{\Eset_{2,0}^c\cap\Eset_{2,1}(1,\hq_2\,)^c }{S^n=s^n} 											\IEEEnonumber\\
&\leq
\sum_{s^n\in\Aset^{\delta_2}(q)} \qn(s^n) \prob{\, (U_2^n(1),Y_2^n)\notin\Aset^{\nicefrac{\delta}{2}}(P_{U_2} P^q_{Y_2|U_2})  \,\big|\; S^n=s^n } 				\IEEEnonumber\\
&\leq
\sum_{s^n\in\Sset^n} \qn(s^n) \prob{\, (U_2^n(1),Y_2^n)\notin\Aset^{\nicefrac{\delta}{2}}(P_{U_2} P^q_{Y_2|U_2})  \,\big|\; S^n=s^n } 										\IEEEnonumber\\
&=
\prob{\, (U_2^n(1),Y_2^n)\notin\Aset^{\nicefrac{\delta}{2}}(P_{U_2} P^q_{Y_2|U_2})  }\leq \eps_2 \;,
\label{eq:2sum1}
\end{align}
for a sufficiently large $n$, where the last inequality follows from  the law of large numbers.

We bound the third sum in the RHS of (\ref{eq:BEcompound}) 
 using similar arguments. If $(U_2^n(m_2),Y_2^n)\in\tset(P_{U_2} P_{Y_2|U_2}^{q''})$, then
 $(U_2^n(m_2),Y_2^n)\in\Aset^{\nicefrac{3\delta}{2}}(P_{U_2} P_{Y_2|U_2}^q)$, due to (\ref{eq:BpUYclose}). Thus, for every $s^n\in\Aset^{\delta_2}(q)$,
\begin{align}
&\cprob{\bigcup_{m_2\neq 1} \Eset_{2,0}^c\cap\Eset_{2,1}(m_2,\hq_2\,) }{S^n=s^n}      																				\nonumber\\
&\leq  \sum_{m_2\neq 1} \cprob{\Eset_{2,1}(m_2,q'')}{S^n=s^n}  																													
																						 \nonumber\\
&= \sum_{m_2\neq 1} \prob{\, (U_2^n(m_2),Y_2^n)\in\tset(P_{U_2} P^{q''}_{Y_2|U_2}) 
\,\big|\; S^n=s^n }
																																																				\nonumber\\
&\leq \sum_{m_2\neq 1} \prob{\, (U_2^n(m_2),Y_2^n)\in\Aset^{\nicefrac{3\delta}{2}}(P_{U_2} P^q_{Y_2|U_2}) 
  \,\big|\; S^n=s^n } \;.
\end{align}
This, in turn, implies that the third sum in the RHS of (\ref{eq:BEcompound}) 
is bounded by
\begin{align}
&\sum_{s^n\in\Aset^{\delta_2}(q)} \qn(s^n)\cprob{\bigcup_{m_2\neq 1} \Eset_{2,0}^c\cap\Eset_{2,1}(m_2,\hq\,) }{S^n=s^n} 		\nonumber\\
&\leq
\sum_{s^n\in\Sset^n}\sum_{m_2\neq 1} \qn(s^n)\cdot \prob{\, (U_2^n(m_2),Y_2^n)\in\Aset^{\nicefrac{3\delta}{2}}(P_{U_2} P^q_{Y_2|U_2}) 
 \,\big|\; S^n=s^n }\nonumber\\
&=
\sum_{m_2\neq 1} \prob{\, (U_2^n(m_2),Y_2^n)\in\Aset^{\nicefrac{3\delta}{2}}(P_{U_2} P^q_{Y_2|U_2}) \, }\nonumber\\
&\leq   2^{-n[ I_{q}(U_2;Y_2)-R_2-\eps_2(\delta) ]} \label{eq:2sum2} \;, 
\end{align} 
with $\eps_2(\delta)\rightarrow 0$ as $\delta\rightarrow 0$. The last inequality follows from standard type class considerations. The RHS of  (\ref{eq:2sum2})  tends to zero as $n\rightarrow\infty$, provided that 
\bieee
R_2< I_{q}(U_2;Y_2)-\eps_2(\delta) \;,
\eieee
 for some $p(u_1,u_2)$ and $\encs(u_1,u_2,s)$. 
Then, it follows from the law of large numbers that the second and fourth sums in the RHS of  (\ref{eq:BEcompound})
tend to zero as $n\rightarrow\infty$. Thus, by 
 (\ref{eq:2sum1}) and (\ref{eq:2sum2}), we have that 
the probability of error of decoder 2, $\prob{\tM_2\neq 1}$, tends to zero as $n\rightarrow\infty$.

Now, consider the error event of decoder 1,
\bieee
&&\left\{(\hM_1,\hM_2)\neq (1,1)\right\} 										\nonumber\\
&\subseteq&\, \Eset_{1,0} \,\cup\;  \,\Eset_{1,1}(1,1,\hq_1)^c \,\cup\; 
\bigcup_{(m_1,m_2)\neq (1,1)}  \Eset_{1,1}(m_1,m_2,\hq_1)   \nonumber\\
&=&\Eset_{1,0} \,\cup\;  \,\Eset_{1,1}(1,1,\hq_1)^c \,\cup\;
 \bigcup_{m_1\neq 1}\Eset_{1,1}(m_1,1,\hq_1)
\cup\; \bigcup_{ \substack{m_1\in [1:2^{nR_1}]\\ m_2\neq 1}
}\Eset_{1,1}(m_1,m_2,\hq_1)																	 \nonumber\\
&\subseteq\,& \Eset_{1,0} \,\cup\;  \Eset_{1,1}(1,1,\hq_1)^c \,\cup\; 
\bigcup_{m_1\neq 1}\Eset_{1,1}(m_1,1,\hq_1) \cup\;
 \bigcup_{  m_2\neq 1} \Eset_{1,2}(m_2,\hq_1) 							 \nonumber\\
&=& \Eset_{1,0} 																																									
\,\cup\; \left( \Eset_{1,0}^c \cap \Eset_{1,1}(1,1,\hq_1)^c \right)														
\,\cup\;\bigcup_{m_1\neq 1} \left( \Eset_{1,0}^c \cap \Eset_{1,1}(m_1,1,\hq_1) \right) 				
\IEEEnonumber\\&&
\,\cup\; \bigcup_{  m_2\neq 1} \left( \Eset_{1,0}^c \cap \Eset_{1,2}(m_2,\hq_1) \right)
  \;,
\eieee 
where the second inclusion follows from the fact that if the event $\Eset_{1,1}(m_1,m_2,\hq_1)$ occurs, then $\Eset_{1,2}(m_2,\hq_1)$ occurs as well.
Thus, by the union of events bound,
\begin{align}
&\prob{(\hM_1,\hM_2)\neq (1,1)} 																									\nonumber\\
&\leq \prob{\Eset_{1,0}}  +
\prob{ \Eset_{1,0}^c \cap \Eset_{1,1}(1,1,\hq_1)^c} +  \prob{\bigcup_{m_2\neq 1} \Eset_{1,0}^c \cap \Eset_{1,2}(m_2,\hq_1) }																																		\nonumber\\
&+  \prob{\bigcup_{m_1\neq 1}  \Eset_{1,0}^c \cap \Eset_{1,1}(m_1,1,\hq_1) } 
\;.
\label{eq:Blowerexp1c}
\end{align}
By (\ref{eq:decq}), the first term is bounded by $\eps_1$, and 
as done above, we write
\begin{align}
&\prob{(\hM_1,\hM_2)\neq (1,1)} \nonumber\\
\leq& 
\sum_{s^n\in\Aset^{\delta_2}(q)}\qn(s^n)
\cprob{ \Eset_{1,0}^c \cap \Eset_{1,1}(1,1,\hP_{s^n})^c}{S^n=s^n}
																																			\nonumber\\
&+
\sum_{s^n\in\Aset^{\delta_2}(q)} \qn(s^n)
\cprob{\bigcup_{m_2\neq 1}  \Eset_{1,0}^c \cap \Eset_{1,2}(m_2,\hP_{s^n})}{S^n=s^n}
\nonumber\\
&+
\sum_{s^n\in\Aset^{\delta_2}(q)} \qn(s^n)
\cprob{\bigcup_{m_1\neq 1}  \Eset_{1,0}^c \cap \Eset_{1,1}(m_1,1,\hP_{s^n})}{S^n=s^n}
\nonumber\\
& +3\cdot\prob{S^n\notin\Aset^{\delta_2}(q)}+\eps_1 \;,
\label{eq:BEcausal1}
\end{align}
where $\delta_2$ is given by (\ref{eq:Bdelta1}). 
 By the law of large numbers, the probability $\prob{S^n\notin\Aset^{\delta_2}(q)}$ tends to zero as $n\rightarrow\infty$.
As for the sums, we use similar arguments to those used above.

 We have that for a given $s^n\in\Aset^{\delta_2}(q)$,
\begin{align}
&|P_{U_1,U_2}(u_1,u_2) P^{q''}_{Y_1|U_1,U_2}(y_1|u_1,u_2)-P_{U_1,U_2}(u_1,u_2)P^{q}_{Y_1|U_1,U_2}(y_1|u_1,u_2)| \nonumber\\
&\leq 
\delta_2 \cdot \sum_{s\in\Sset} \sbc(y_1|\encs(u_1,u_2,s) 
\leq |\Sset|\cdot \delta_2=\frac{\delta}{2} \;,
\label{eq:Btset1}
\end{align}
with $q''=\hP_{s^n}$, where the last equality follows from (\ref{eq:Bdelta1}).

The first sum in the RHS of (\ref{eq:BEcausal1}) is bounded by 
\begin{align}
&\sum_{s^n\in\Aset^{\delta_2}(q)}\qn(s^n)
\cprob{ \Eset_{1,0}^c \cap \Eset_{1,1}(1,1,\hP_{s^n})^c}{S^n=s^n} \nonumber\\
&\leq \sum_{s^n\in\Sset^n}\qn(s^n)
\cprob{ (U_2^n(1),U_1^n(1,1),Y_1^n )\notin \Aset^{\nicefrac{\delta}{2}}(P_{U_1,U_2} P^{q}_{Y_1|U_1,U_2})  }{S^n=s^n}\nonumber\\
&=
\prob{ (U_2^n(1),U_1^n(1,1),Y_1^n )\notin \Aset^{\nicefrac{\delta}{2}}(P_{U_1,U_2} P^q_{Y_1|U_1,U_2})}\leq \eps_2\;.
\end{align}
 The last inequality follows from the law of large numbers, with a sufficiently large $n$.

The second sum in the RHS of (\ref{eq:BEcausal1}) is bounded by 
\begin{align}
\sum_{s^n\in\Aset^{\delta_2}(q)} \qn(s^n)
\cprob{\bigcup_{m_2\neq 1}  \Eset_{1,0}^c \cap \Eset_{1,2}(m_2,\hP_{s^n})}{S^n=s^n} 
\leq 2^{-n( I_q(U_2;Y_1)-R_2-\eps_3(\delta)} \;.
\end{align}
with $\eps_3(\delta)\rightarrow 0$ as $n\rightarrow \infty$ and $\delta\rightarrow 0$. This is obtained following exactly the same analysis as for decoder 2. Then, the second sum tends to zero provided that  
\bieee
R_2 \leq I_q(U_2;Y_1)-\eps_3(\delta) \;.
\eieee
Since the compound  DBC is assumed to be degraded, the requirement $R_2<I_q(U_2;Y_2)$ suffices.

The third sum in the RHS of (\ref{eq:BEcausal1}) is bounded by 
\begin{align}
&\sum_{s^n\in\Aset^{\delta_2}(q)} \qn(s^n)
\cprob{\bigcup_{m_1\neq 1}  \Eset_{1,0}^c \cap \Eset_{1,1}(m_1,1,\hP_{s^n})}{S^n=s^n}\\
&\leq 
\sum_{s^n\in\Aset^{\delta_2}(q)} \sum_{m_1\neq 1}  \qn(s^n)
\cprob{  \Eset_{1,1}(m_1,1,\hP_{s^n})}{S^n=s^n} \;.
\end{align}
For every $s^n\in\Aset^{\delta_2}(q)$, it follows from (\ref{eq:Btset1}) that the event $\Eset_{1,1}(m_1,1,\hP_{s^n})$
implies that 
\bieee
(U_2^n(1),U_1^n(m_1,1),Y_1^n)\in\Aset^{\nicefrac{3\delta}{2}}(P^q_{U_2,U_1,Y_1}) \;.
\eieee
  Thus,  the sum is bounded by 
\begin{align}
&\sum_{s^n\in\Aset^{\delta_2}(q)} \qn(s^n)
\cprob{\bigcup_{m_1\neq 1}  \Eset_{1,0}^c \cap \Eset_{1,1}(m_1,1,\hP_{s^n})}{S^n=s^n}\\
&\leq 
2^{-n(I_q(U_1;Y_1|U_2)-R_1-\delta_3)}
\end{align}
where $\delta_3\rightarrow 0$ as $\delta\rightarrow 0$.

We conclude that the RHS of both (\ref{eq:BEcompound}) and (\ref{eq:BEcausal1}) tend to zero as $n\rightarrow \infty$. Thus, the overall probability of error, averaged over the class of the codebooks, 
 decays to zero  as $n\rightarrow\infty$. Therefore, there must exist a $(2^{nR_1},2^{nR_2},n,\eps)$ deterministic code, for a sufficiently large $n$.
\end{proof}

\begin{proof}[Converse proof]
Assume to the contrary that there exists an achievable rate pair
$(R_1,R_2)\notin 
\bigcap_{q(s)\in\Qset} \BICrp$  using random codes over the compound DBC $\Bcompound$ with causal SI. Hence, for some state distribution $q^*(s)$ in the closure of $\Qset$, we have that $(R_1,R_2)\notin\BICrpS$.

The achievability assumption implies that for every $\eps>0$ and sufficiently large $n$,
there exists a $(2^{nR_1},2^{nR_2},n)$ random code $\code^\Gamma$ for the compound  DBC $\Bcompound$ with causal SI, with
  $\err(q,\code^\Gamma)\leq \eps$ for all i.i.d. state distributions $q(s)\in\Qset$, and in particular, for $q^*(s)$, since $\err(q,\code^\Gamma)$ is continuous in $q$.
	
	Consider the DBC $\BrpS$ with causal SI where the state sequence is i.i.d. according $q^*(s)$.
	If such a random code $\code^\Gamma$ would exist, then it could have been used over the DBC $\BrpS$,  achieving a rate pair $(R_1,R_2)\notin\BICrpS$. This is a contradiction, since
	the random code capacity region of $\BrpS$ is given by $\BICrpS$ \cite[Theorem 4]{Steinberg:05p}. 
	We deduce that the assumption is false, and $(R_1,R_2)\notin 
\bigcap_{q(s)\in\Qset} \BICrp$ cannot be achieved.
\end{proof}

\section*{Part 2}
We show that when  the set of state distributions $\Qset$ is convex,  and the condition $\sCondQ$ holds, 
the capacity region of the compound DBC $\Bcompound$ with causal SI is given by $\BCcompound=\BrCcompound=\BIRcompound=\BICcompound$ 
(and this holds regardless of whether the interior of the capacity region is empty or not). 

Due to part 1, we have that
\bieee
\label{eq:Bcompound2up}
 \BrCcompound \subseteq \BICcompound \;.
\eieee 
By Lemma~\ref{lemm:BcompoundLowerB},
\bieee
\BCcompound \supseteq \BIRcompound \;.
\eieee
Thus, 
\bieee
\label{eq:BcompoundTightineq}
 \BIRcompound\subseteq  \BCcompound \subseteq \BrCcompound \subseteq \BICcompound \;. 
\eieee

To conclude the proof, we show that the condition $\sCondQ$ implies that $\BIRcompound\supseteq\BICcompound$, hence the inner and outer bounds coincide. By Definition~\ref{def:Bcompoundachieve}, if a function $\encs(u_1,u_2,s)$ and a set 
$\Dset$ achieve $\BIRcompound$ and $\BICcompound$, then
\begin{subequations}
\label{eq:BcompoundachieveEq} 
\begin{align}  
\label{eq:BIRcompoundachieveEq} 
\BIRcompound =\bigcup_{p(u_1,u_2)\in\Dset}\,  
\left\{
\begin{array}{lll}
(R_1,R_2) \,:\; & R_2 &\leq  \min_{q\in\Qset} I_q(U_2;Y_2) \;, \\
								& R_1 &\leq  \min_{q\in\Qset} I_q(U_1;Y_1|U_2)  
\end{array}
\right\} \;,
\intertext{and}
\label{eq:BICcompoundachieveEq} 
\BICcompound = \bigcap_{q(s)\in\Qset}\, \bigcup_{p(u_1,u_2)\in\Dset} 
\left\{
\begin{array}{lll}
(R_1,R_2) \,:\; & R_2 &\leq   I_q(U_2;Y_2) \;, \\
								& R_1 &\leq   I_q(U_1;Y_1|U_2)  
\end{array}
\right\} \;.
\end{align}
\end{subequations}
Hence, when the condition $\sCondQ$ holds, we have  by Definition~\ref{def:sCondQ} 
 that for some  $\encs(u_1,u_2,s)$, $\Dset\subseteq\pSpace(\Uset_1\times\Uset_2)$, and $q^*\in\Qset$,
\bieee
\BIRcompound&=&  \bigcup_{p(u_1,u_2)\in\Dset}
\left\{
\begin{array}{lll}
(R_1,R_2) \,:\; & R_2 &\leq   I_{q^*}(U_2;Y_2) \;, \\
								& R_1 &\leq   I_{q^*}(U_1;Y_1|U_2)  
\end{array}
\right\} \\
&\supseteq& \BICcompound \;,
\eieee
where the last line follows from (\ref{eq:BICcompoundachieveEq}).
%
%
\qed

\section{Proof of Theorem~\ref{theo:Bmain}}
\label{app:Bmain}

\subsection*{Part 1}
First, we explain the general idea. 
As in Chapter~\ref{chap:avcC}, we devise  a causal version of Ahlswede's Robustification Technique 
(RT)  \cite{Ahlswede:86p,WinshtokSteinberg:06c}. Namely, we use codes for the compound  DBC to construct a random code for the AVDBC using randomized permutations. However, in our case, the causal nature of the problem imposes a difficulty, and the application of the RT is not straightforward.

In \cite{Ahlswede:86p,WinshtokSteinberg:06c}, the state information is non-causal and a random code is defined via permutations of the codeword symbols. This cannot be done here, because the SI is provided to the encoder in a causal manner. 
We resolve this difficulty using Shannon strategy codes for the compound  DBC to construct a random code for the AVDBC, applying permutations to the \emph{strategy sequence} $(u_1^n,u_2^n)$, which is an integral part of the Shannon strategy code, and is independent of the channel state. The details are given below.

\subsubsection{Inner Bound}
We show that the region defined in (\ref{eq:BIRcompoundP}) can be achieved by random codes over the 
AVDBC $\avbc$ with causal SI, \ie $\BCavc \supseteq \BIRavc$.
%
%
The proof relies on similar ideas to those in the proof of Theorem~\ref{theo:ALavcCr} in Appendix~\ref{theo:ALavcCr}. 
We start with Ahlswede's RT, stated below. Let $h:\Sset^n\rightarrow [0,1]$ be a given function. If, for some fixed $\alpha_n\in(0,1)$, and for all 
$ \qn(s^n)=\prod_{i=1}^n q(s_i)$, with 
$q\in\pSpace(\Sset)$, 
\bieee
\label{eq:BRTcondC}
\sum_{s^n\in\Sset^n} \qn(s^n)h(s^n)\leq \alpha_n \;,
\eieee
then,
\bieee
\frac{1}{n!} \sum_{\pi\in\Pi_n} h(\pi s^n)\leq \beta_n \;,\quad\text{for all $s^n\in\Sset^n$} \;,
\eieee
where $\Pi_n$ is the set of all $n$-tuple permutations $\pi:\Sset^n\rightarrow\Sset^n$, and 
$\beta_n=(n+1)^{|\Sset|}\cdot\alpha_n$. 

According to Lemma~\ref{lemm:BcompoundLowerB}, 
 for every $(R_1,R_2)\in\BIRavc$, there exists a  $(2^{nR_1},$ $2^{nR_2},$ $n,$ $e^{-2\theta n})$ Shannon strategy code for the compound DBC $\BcompoundP$ with causal SI, for some $\theta>0$ and sufficiently large $n$. 
Given such a Shannon strategy code $\code=$ $(u_1^n(m_1,m_2),$ $u_2^n(m_2),$ $\encs(u_1,u_2,s),$ $\dec_1(y_1^n),$ $\dec_2(y_2^n))$,
 we have that (\ref{eq:BRTcondC}) is satisfied with  $h(s^n)=\cerr(\code)$  and $\alpha_n=e^{-2\theta n}$.  
As a result, Ahlswede's RT tells us that
\bieee
\label{eq:BdetErrC}
\frac{1}{n!} \sum_{\pi\in\Pi_n} P_{e|\pi s^n}^{(n)}(\code)\leq (n+1)^{|\Sset|}e^{-2\theta n} 
\leq e^{-\theta n}  \;,\quad\text{for all $s^n\in\Sset^n$} \;,
\eieee 
for a sufficiently large $n$, such that $(n+1)^{|\Sset|}\leq e^{\theta n}$.  

On the other hand, for every $\pi\in\Pi_n$,  
\begin{align}
&P_{e|\pi s^n}^{(n)}(\code)																											\nonumber\\
 &\stackrel{(a)}{=}
\frac{1}{2^{ n(R_1+R_2) }}\sum_{m_1,m_2}
\sum_{(\pi y_1^n,\pi y_2^n)\notin\Dset(m_1,m_2)} 
 \nBC(\pi y_1^n,\pi y_2^n|\encs^n(u_1^n(m_1,m_2),u_2^n(m_2),\pi s^n),\pi s^n) 	\nonumber\\
 &\stackrel{(b)}{=}
\frac{1}{2^{ n(R_1+R_2) }}\sum_{m_1,m_2}
\sum_{(\pi y_1^n,\pi y_2^n)\notin\Dset(m_1,m_2)} 
 \nBC( y_1^n, y_2^n|\pi^{-1} \encs^n(u_1^n(m_1,m_2),u_2^n(m_2),\pi s^n), s^n) \;,\nonumber\\
 &\stackrel{(c)}{=}\frac{1}{2^{ n(R_1+R_2) }}\sum_{m_1,m_2}
\sum_{
(\pi y_1^n,\pi y_2^n)\notin\Dset(m_1,m_2)}
 \nBC( y_1^n, y_2^n| \encs^n(\pi^{-1} u_1^n(m_1,m_2),\pi^{-1} u_2^n(m_2), s^n), s^n)
\label{eq:Bcerrpi}
\end{align}
where $(a)$ is obtained by  plugging $\pi s^n$ and $x^n=\encs^n(\cdot,\cdot,\cdot)$ in (\ref{eq:Bcerr}) 
 and then changing the order of summation over $(y_1^n,y_2^n)$; $(b)$ holds because the broadcast channel is memoryless; and $(c)$ follows from that fact that  for a Shannon strategy code,  $x_i=\encs(u_{1,i},u_{2,i},s_i)$, $i\in[1:n]$, by Definition~\ref{def:BStratCode}. 
The last expression suggests the use of permutations applied to the encoding \emph{strategy sequence} and the channel output sequences.

Then, consider the $(2^{nR_1},2^{nR_2},n)$ random code $\code^\Pi$, specified by 
\begin{subequations}
\label{eq:BCpi}
\begin{align}
f_\pi^n(m_1,m_2,s^n)&= \encs^n(\pi^{-1} u_1^n(m_1,m_2),\pi^{-1} u_2^n(m_2),s^n) \;,\\
\intertext{and}
g_{1,\pi}(y_1^n)&=\dec_1(\pi y_1^n)
\;,\quad g_{2,\pi}(y_2^n)=\dec(\pi y_2^n) \;,
\end{align}
\end{subequations}
for $\pi\in\Pi_n$,
with a uniform distribution $\mu(\pi)=\frac{1}{|\Pi_n|}=\frac{1}{n!}$. 
Such permutations can be implemented without knowing $s^n$, hence this coding scheme does not violate the causality requirement. 

 From (\ref{eq:Bcerrpi}), 
 we see that 
\bieee 
\cerr(\code^\Pi)=\sum_{\pi\in\Pi_n} \mu(\pi) P_{e|\pi s^n}^{(n)}(\code) \;,
\eieee
for all $s^n\in\Sset^n$, and therefore, together with (\ref{eq:BdetErrC}), we have that the probability of error of the random code $\code^\Pi$ is bounded by 
\bieee 
\err(\qn,\code^{\Pi})\leq e^{-\theta n} \;,
\eieee 
for every $\qn(s^n)\in\pSpace^n(\Sset^n)$. That is, $\code^\Pi$ is a $(2^{nR_1},2^{nR_2},n,e^{-\theta n})$ random 
 code for the AVDBC $\avbc$ with causal SI at the encoder. 
This completes the proof of the inner bound. 
\qed 

\subsubsection{Outer Bound}
We show that the capacity region of the AVDBC $\avbc$ with causal SI is included within the region defined in
(\ref{eq:BrICav}), \ie $\BrCav\subseteq \BrICav$. 

The random code capacity region of the AVDBC is included within the random code capacity region of the compound  DBC, namely
\bieee
\label{eq:Outercomp2}
\BrCav \subseteq \BrCcompoundP \;.
\eieee
By Theorem~\ref{theo:BcvC} we have that $\BCcompound\subseteq\BICcompound$. Thus,   
with $\Qset=\pSpace(\Sset)$,
\bieee
\label{eq:Outercomp11}
\BrCcompoundP\subseteq\BrICav \;.
\eieee
It follows from (\ref{eq:Outercomp2}) and (\ref{eq:Outercomp11}) that $\BrCav\subseteq\BrICav$. Since the random code capacity region always includes  the deterministic code capacity region, we have that
$
\BCavc \subseteq \BrICav 
$ as well.
\qed 

\subsection*{Part 2}
The second equality, $\BIRavc=\BrICav$, follows from  part 2 of Theorem~\ref{theo:BcvC}, taking $\Qset=\pSpace(\Sset)$. 
By part 1, $\BIRavc\subseteq \BrCav\subseteq\BrICav$, hence the proof follows. \qed

\section{Proof of Lemma~\ref{lemm:BcorrSizeC}}
\label{app:BET}

The proof follows the lines of \cite[Section~4]{Ahlswede:78p}. 
 Let $\dK>0$ be an integer, chosen later, and define the random variables
\bieee
\label{eq:BLi}
L_1,L_2,\ldots,L_{\dK} \;\,\text{i.i.d. $\sim\mu(\ell)$} \;.
\eieee
Fix $s^n$, and define the random variables
\bieee
\Omega_j(s^n)= \cerr(\code_{L_j}) \;,\quad j\in [1:\dK] \;,
\eieee
which is the conditional probability of error of the code $\code_{L_j}$ given the state sequence $s^n$. 

Since $\code^\Gamma$ is a $(2^{nR_1},2^{nR_2},n,\eps_n)$ code, we have that 
$ \sum_\gamma\mu(\gamma)\sum_{s^n} \qn(s^n) \cerr(\code_\gamma)\leq \eps_n$, for all $\qn(s^n)$. In particular, for a kernel, we have that  
\bieee
\label{eq:BPsiIneqC}
\E \Omega_j(s^n)=\sum_{\gamma\in\Gamma} \mu(\gamma)\cdot \cerr(\code_\gamma) \leq \eps_n \;,
\eieee
for all $j\in[1:k]$.

Now take $n$ to be large enough so that $\eps_n<\alpha$. 
Keeping $s^n$ fixed,  we have that the random variables $\Omega_j(s^n)$ are i.i.d., due to (\ref{eq:BLi}).
{ Next the technique known as Bernstein's trick \cite{Ahlswede:78p} is applied. }
\bieee
\prob{\sum_{j=1}^{\dK} \Omega_j(s^n)\geq \dK\alpha} &\stackrel{(a)}{\leq}&
\E\left\{ \exp\left[ \beta \left(
\sum_{j=1}^\dK \Omega_j(s^n)- k\alpha
\right)
\right]
\right\} \\
&=& e^{-\beta \dK \alpha}\cdot
\E\left\{  \prod_{j=1}^\dK e^{\beta  \Omega_j(s^n)} \right\} \\
&\stackrel{(b)}{=}& e^{-\beta \dK \alpha}\cdot
\prod_{j=1}^\dK \E\left\{   e^{\beta  \Omega_j(s^n)} \right\} \\
&\stackrel{(c)}{\leq}& e^{-\beta \dK \alpha}\cdot
\prod_{j=1}^\dK \E\left\{   1+e^\beta \cdot\Omega_j(s^n) \right\} \\
&\stackrel{(d)}{\leq}& e^{-\beta \dK \alpha}\cdot
 \left(   1+e^\beta \eps_n \right)^\dK 
\eieee
where $(a)$ is an application of Chernoff's inequality; $(b)$ follows from the fact that $\Omega_j(s^n)$ are independent; $(c)$ holds since $e^{\beta x}\leq1+e^\beta x$, for $\beta>0$ and $0\leq x\leq 1$; $(d)$ follows from (\ref{eq:BPsiIneqC}). We take $n$ to be large enough for $1+e^\beta \eps_n\leq e^\alpha$ to hold. Thus, choosing $\beta=2$, we have that 
\bieee
\prob{\frac{1}{\dK}\sum_{j=1}^\dK \Omega_j(s^n)\geq \alpha} &\leq&
e^{- \alpha k } \;,
\eieee
for all $s^n\in\Sset^n$. Now, by the union of events bound, we have that
\bieee
\prob{
\max_{s^n}  \frac{1}{\dK} \sum_{j=1}^\dK \Omega_j(s^n) \,\geq\, \alpha
 }&=&\prob{
\exists s^n :  \frac{1}{\dK} \sum_{j=1}^\dK \Omega_j(s^n) \,\geq\, \alpha
 }\\
&\leq&\sum_{s^n\in\Sset^n} \prob{  \frac{1}{\dK} \sum_{j=1}^\dK \Omega_j(s^n) \,\geq\, \alpha
 }\\
&\leq& |\Sset|^n \cdot e^{-\alpha \dK} \;.
\label{eq:BeBound}
\eieee
Since $|\Sset|^n$ grows only exponentially in $n$, choosing $k=n^2$ results in a super exponential decay. 

Consider the code $\code^{\Gamma^*}=(\mu^*,\Gamma^*=[1:k],\{\code_{L_j}\}_{j=1}^k)$ formed by a random collection of codes, with $\mu^*(j)=\frac{1}{k}$.  It follows 
 that  the conditional  probability of error given  $s^n$, which is given by  
\bieee 
\cerr(\code^{\Gamma^*})=
\frac{1}{k} \sum_{j=1}^{k} \cerr(\code_{L_j}) \;,
\eieee
 exceeds $\alpha$ with a super exponentially small probability $\sim e^{-\alpha n^2}$, for all $s^n\in\Sset^n$.
 Thus, there exists a random code $\code^{\Gamma^*}=(\mu^*,\Gamma^*,\{\code_{\gamma_j}\}_{j=1}^k)$ 
 for the AVBC $\avbc$, such that 
\bieee
\err(\qn, \code^{\Gamma^*})=\sum_{s^n\in\Sset^n} \qn(s^n) \cerr(\code^{\Gamma^*})\leq \alpha \;,\quad\text{for all $\qn(s^n)
\in\pSpace(\Sset^n)$}\;.
\eieee

\qed

\section{Proof of Theorem~\ref{theo:BcorrTOdetC}}
\label{app:BcorrTOdetC}
\begin{proof}[Achievability proof]
 To show achievability, 
we follow the lines of \cite{Ahlswede:78p}, with the required adjustments.
  We use the random code constructed in the proof of  Theorem~\ref{theo:Bmain} 
	to construct a deterministic code.

Let $(R_1,R_2)\in\BrCav$, and consider the case where $\interior{\BCavc}\neq \emptyset$. Namely,
\bieee
\label{eq:Brpos}
\opC(\avc_1)\geq \opC(\avc_2)>0 \;,
\eieee
where $\avc_1=\{\sbc\}$ and $\avc_2=\{\wbc\}$ denote the marginal AVCs with causal SI of the stronger user and the weaker user, respectively.  
By Lemma~\ref{lemm:BcorrSizeC},  for every $\eps_1>0$ and sufficiently large $n$, 
 there exists a $(2^{nR_1},2^{nR_2},n,\eps_1)$ random  code  
$
\code^\Gamma=\big(\mu(\gamma)=\frac{1}{k},\Gamma=[1:k],\{\code_\gamma \}_{\gamma\in \Gamma}\big) 
$, 
where
$\code_\gamma=(\encn_\gamma,\dec_{1,\gamma},\dec_{2,\gamma})$, 
for $\gamma\in\Gamma$, 
 and 
$
k=|\Gamma|\leq n^2 
$. 
Following (\ref{eq:Brpos}),  we have that for every $\eps_2>0$ and sufficiently large $\nu$, the code index $\gamma\in [1:k]$ can be sent over $\avbc$ using a $(2^{\nu\bR_1},2^{\nu\bR_2},\nu,\eps_2)$ deterministic code 
$ 
\code_{\text{i}}=(\tfnu,\gnu_1,\gnu_2)  
$, where $\bR_1>0$, $\bR_2>0$.
Since $k$ is at most polynomial, 
 the encoder can reliably convey $\gamma$ to the receiver with a negligible blocklength, \ie
$ 
\nu=o(n) 
$. 

Now, consider a code 
 formed by the concatenation of $\code_{\text{i}}$ as a prefix to a corresponding code in the code collection $\{\code_\gamma\}_{\gamma\in\Gamma}$. 
That is, the encoder sends both the index $\gamma$ and the message pair $(m_1,m_2)$ to the receivers, such that 
 the index $\gamma$ is transmitted first by $\tfnu(\gamma,s^\nu)$, and then the message pair $(m_1,m_2)$ is transmitted by the codeword $x^n=\enc_\gamma^n($ $m_1,m_2,$ $s_{\nu+1},\ldots,s_{\nu+n})$.
Subsequently, decoding is performed in two stages as well; decoder 1 estimates the index at first, with  
$\hgamma_1=$ $\gnu_1(y_{1,1},\ldots,$ $y_{1,\nu})$, and the message $m_1$ is then estimated by  
$\widehat{m}_1=$ $g_{1,\hgamma_1}(y_{1,\nu+1},$ $\ldots,y_{1,\nu+n})$.  
Similarly, decoder 2 estimates the index  with  
$\hgamma_2=$ $\gnu_2(y_{2,1},$ $\ldots,y_{2,\nu})$, and the message $m_2$ is then estimated by  
$\widehat{m}_2=$ $g_{2,\hgamma_2}(y_{2,\nu+1},\ldots,$ $y_{2,\nu+n})$.

 By the union of events bound, the probability of error 
 is then bounded by $\eps=\eps_1+\eps_2$, 
for every joint distribution 
in $\pSpace^{\nu+n}(\Sset^{\nu+n})$. 
That is, the concatenated code 
 is a $(2^{(\nu+n)\tR_{1,n}},2^{(\nu+n)\tR_{2,n}},\nu+n,\eps)$ code over the AVDBC $\avbc$ with causal SI, where $\nu=o(n)$. 
Hence, 
  the blocklength is $n+o(n)$, and the 
the rates   $\tR_{1,n}=\frac{n}{\nu+n}\cdot R_1$ and $\tR_{2,n}=\frac{n}{\nu+n}\cdot R_2$ approach $R_1$ and 
$R_2$, respectively, as $n\rightarrow \infty$. 
\end{proof}

\begin{proof}[Converse proof]
In general, the deterministic code capacity region is included within the random code capacity region. Namely,
$\BCavc\subseteq\BrCav$. 
\end{proof} 

\section{Proof of Corollary~\ref{coro:BmainDbound}}
\label{app:BmainDbound}
First, consider the inner and outer bounds in (\ref{eq:BmainInner}) and (\ref{eq:BmainOuter}).
The bounds are obtained as a  direct consequence of part 1 of Theorem~\ref{theo:Bmain} and  Theorem~\ref{theo:BcorrTOdetC}. Note that the outer bound (\ref{eq:BmainOuter}) holds regardless of any condition, since the deterministic code capacity region is always included within the random code capacity region, \ie $\BCavc \subseteq \BrCav\subseteq \BrICav$.

Now, suppose that the marginal $V_{Y_2|U_2,S}^{\encs'}$ is non-symmetrizable for some $\encs':
\Uset_2\times\Sset\rightarrow\Xset$, and the condition $\sCond$ holds. 
 Then, by part 2 of Theorem~\ref{theo:ALavcCstateC}, 
the capacity of the corresponding single-user AVC is positive, \ie $\opC(\avc_2)>0$.
Since the AVDBC $\avc$ is assumed to be degraded, we then have that $\opC(\avc_1)\geq\opC(\avc_2)>0$, which means that
 $\interior{\BCavc}\neq\emptyset$. 
%
 Hence, by Theorem~\ref{theo:BcorrTOdetC}, the deterministic code capacity region coincides with the random code capacity region, \ie $\BCavc=\BrCav$. Then, the proof follows from part 2 of Theorem~\ref{theo:Bmain}. \qed 


\section{Analysis of Example~\ref{example:AVBSBC}}
\label{app:AVBSBC}
We begin with the case of an arbitrarily varying BSBC $\Bavcig$ without SI. 
We claim that the single user marginal AVC $\avc_{1,0}$  without SI, corresponding to the stronger user, has zero capacity.
Denote $q\triangleq q(1)=1-q(0)$. 
Then, observe that the additive noise is distributed according to $Z_S\sim\text{Bernoulli}(\eps_q)$, with
$
\eps_q\triangleq (1-q)\cdot \theta_0+q\cdot \theta_1 
$, 
for $0\leq q\leq 1$. 
By Theorem~\ref{theo:avcC0R}, $\opC(\avc_{1,0})\leq\opC^{\rstarC}\hspace{-0.1cm}(\avc_{1,0})= \min_{0\leq q\leq 1} [1-h(\eps_q)]$.
Since $\theta_0< \frac{1}{2}\leq \theta_1$, there exists $0\leq q\leq 1$
such that $\eps_q=\frac{1}{2}$, thus $\opC(\avc_{1,0})=0$.
%
The capacity region of the AVDBC $\Bavcig$ without SI is then given by  
$\BCavcig=\{(0,0)\}$.

Now, consider the arbitrarily varying BSBC $\avbc$ with causal SI. 
 By Theorem~\ref{theo:Bmain}, 
 the random code capacity region is bounded by 
$
\BIRavc \subseteq \BrCav\subseteq \BrICav 
$. 
We show that the bounds coincide, and are thus tight. 
Let $\Brp$ denote the DBC $\bc$ with causal SI, governed by  an i.i.d. state sequence, distributed according to $S\sim\text{Bernoulli}(q)$.
By \cite{Steinberg:05p}, the corresponding capacity region 
 is given by
\begin{subequations}
\label{eq:Bex1ICrpT}
\begin{align}
\label{eq:Bex1ICrpa}
&\BICrp=
\bigcup_{0\leq \beta\leq 1}
\left\{
\begin{array}{lll}
(R_1,R_2) \,:\; & R_2 &\leq   1-h(\alpha*\beta*\delta_q) \;, \\
								& R_1 &\leq   h(\beta*\delta_q)-h(\delta_q)
\end{array}
\right\}
 \,,
\end{align}
where
 \bieee
\label{eq:ex1deltaq}
\delta_q\triangleq (1-q)\cdot \theta_0+q\cdot (1-\theta_1) \;,
\eieee
\end{subequations}
 for $0\leq q\leq 1$. For every given $0\leq q'\leq 1$, we have that 
$
\BrICav=\bigcap_{0\leq q\leq 1} \BICrp \subseteq 
\inC(\avbc^{q'}) 
$. 
 Thus, taking $q'=1$, we have that
\bieee
\label{eq:ex1out}
\BrICav
 \subseteq 
\bigcup_{0\leq \beta\leq \frac{1}{2}}
\left\{
\begin{array}{lll}
(R_1,R_2) \,:\; & R_2 &\leq   1-h(\alpha*\beta*\theta_1) \;, \\
								& R_1 &\leq   h(\beta*\theta_1)-h(\theta_1)
\end{array}
\right\}\;,
\eieee
where we have used the identity 
$h(\alpha*(1-\delta))=h(\alpha*\delta)$. 

Now, to show that the region above is achievable, we examine the inner bound,
\bieee
\BIRavc=
\bigcup_{p(u_1,u_2),\encs(u_1,u_2,s)}\, 
\left\{
\begin{array}{lll}
(R_1,R_2) \,:\; & R_2 &\leq \min_{0\leq q\leq 1}   I_q(U_2;Y_2) \;, \\
								& R_1 &\leq \min_{0\leq q\leq 1}  I_q(U_1;Y_1|U_2)  
\end{array}
\right\}\;.
\eieee
Consider the following choice of $p(u_1,u_2)$ and $\encs(u_1,u_2,s)$. Let $U_1$ and $U_2$ be independent random variables, 
\bieee
\label{eq:BSBCdistAchieve}
U_1\sim\text{Bernoulli}(\beta) \,,\;\text{and}\;\, U_2\sim\text{Bernoulli}\left(\frac{1}{2} \right)\;,
\eieee
for $0\leq \beta\leq\frac{1}{2}$, 
 and let 
\bieee
\label{eq:BSBCxiAchieve}
\encs(u_1,u_2,s)=u_1+u_2+s \mod 2\;.
\eieee
Then,
\begin{align}
 &H_q(Y_1|U_1,U_2)=H_q(S+Z_S)=h(\delta_q) \;,																\nonumber\\ 
 &H_q(Y_1|U_2)=H_q(U_1+S+Z_S)=h(\beta*\delta_q) \;,													\nonumber\\
 &H_q(Y_2|U_2)=H_q(U_1+S+Z_S+V)=h(\alpha*\beta*\delta_q) \;,								\nonumber\\ 
 &H_q(Y_2)=1 \;, 
\intertext{
where addition is modulo $2$, and $\delta_q$ is given by (\ref{eq:ex1deltaq}). 
Thus,
}
&I_q(U_2;Y_2)=1-h(\alpha*\beta*\delta_q) \;, \nonumber\\
&I_q(U_1;Y_1|U_2)=h(\beta*\delta_q)-h(\delta_q) \;,
\end{align}
hence 
\bieee
\label{ex1:innerR}
\BIRavc \supseteq
\bigcup_{0\leq \beta\leq \frac{1}{2}}\, 
\left\{
\begin{array}{lll}
(R_1,R_2) \,:\; & R_2 &\leq \min_{0\leq q\leq 1}  1-h(\alpha*\beta*\delta_q) \;, \\
								& R_1 &\leq \min_{0\leq q\leq 1}  h(\beta*\delta_q)-h(\delta_q)  
\end{array}
\right\}\;.
\eieee
Note that $\theta_0\leq \delta_q\leq 1-\theta_1 \leq\frac{1}{2}$. For $0\leq\delta\leq\frac{1}{2}$, the functions 
$g_1(\delta)=1-h(\alpha*\beta*\delta)$ and $g_2(\delta)= h(\beta*\delta)-h(\delta)$ are monotonic decreasing functions of $\delta$, hence the minima in (\ref{ex1:innerR}) are both achieved with $q=1$.
It follows that 
	\begin{align}
	\label{eq:ex1BrCav}
\BrCav=\BIRavc=\BrICav=
\bigcup_{0\leq \beta\leq 1}\, 
\left\{
\begin{array}{lll}
(R_1,R_2) \,:\; & R_2 &\leq   1-h(\alpha*\beta*\theta_1) \;, \\
								& R_1 &\leq h(\beta*\theta_1)-h(\theta_1)  
\end{array}
\right\}\;.
\end{align}

It can also be verified that the condition $\sCond$ holds (see Definition~\ref{def:sCond}), in agreement with part 2 of Theorem~\ref{theo:Bmain}. 
First, we specify a function $\encs(u_1,u_2,s)$ and a distributions set 
$\Dset^{\rstarC}$ that achieve $\BIRavc$ and $\BrICav$ (see Definition~\ref{eq:Bachieve}).
 Let $\encs(u_1,u_2,s)$ be as in (\ref{eq:BSBCxiAchieve}), and let $\Dset^{\rstarC}$ be the set of distributions $p(u_1,u_2)$ such that $U_1$ and $U_2$ are independent random variables, distributed according to  (\ref{eq:BSBCdistAchieve}). 
By the derivation above,  the requirement (\ref{eq:BIRachieve}) is satisfied. Now, by the derivation in 
\cite[Section IV]{Steinberg:05p}, we have that
\bieee
\BICrp=\bigcup_{p(u_1,u_2)\in\Dset^{\rstarC}} 
\left\{
\begin{array}{lll}
(R_1,R_2) \,:\; & R_2 &\leq    I_q(U_2;Y_2) \;, \\
								& R_1 &\leq   I_q(U_1;Y_1|U_2)  
\end{array}
\right\}\;.
\eieee
Then, the requirement (\ref{eq:BICachieve}) is satisfied as well, hence $\encs(u_1,u_2,s)$  and  $\Dset^{\rstarC}$
achieve $\BIRavc$ and $\BrICav$. It follows that condition $\sCond$ holds, as $q^*=1$ satisfies the desired property 
 with $\encs(u_1,u_2,s)$ and $\Dset^{\rstarC}$ as described above. 
 

We move to the deterministic code capacity region of the arbitrarily varying BSBC $\avbc$ with causal SI. 
If  $\theta_1=\frac{1}{2}$,  the capacity region is given by $\BCavc=\BrCav=\{(0,0)\}$, by (\ref{eq:ex1BrCav}). 
Otherwise, $\theta_0< \frac{1}{2}< \theta_1$, and we now  show  that the condition in Corollary~\ref{coro:BmainDbound} is met.
 Suppose that  $V^{\encs'}_{Y_2|U,S}$ is symmetrizable for all $\encs':\Uset_2\times\Sset\rightarrow\Xset$.
That is, for every $\encs'(u_2,s)$,  
there exists $\lambda_{u_2}=J(1|u_2)$ such that
\begin{multline}
(1-\lambda_{u_{b}})W_{Y_2|X,S}(y_2|\encs'(u_{a},0),0)+\lambda_{u_{b}}W_{Y_2|X,S}(y_2|\encs'(u_{a},1),1)    =\\
(1-\lambda_{u_{a}})W_{Y_2|X,S}(y_2|\encs'(u_{b},0),0)+\lambda_{u_{a}}W_{Y_2|X,S}(y_2|\encs'(u_{b},1),1)
\end{multline}
for all  $u_{a},u_{b}\in\Uset_2$, $y_2\in\{0,1\}$. 
If this is the case, then for $\encs'(u_2,s)=u_2+s \mod 2$, taking $u_{a}=0$, $u_{b}=1$, $y_2=1$, we have that
\begin{align}
\label{eq:BSCfair}
(1-\lambda_{1})\cdot(\alpha*\theta_0)+\lambda_{1}\cdot(1-\alpha*\theta_1)= (1-\lambda_{0})\cdot(1-\alpha*\theta_0)+\lambda_{0}\cdot(\alpha*\theta_1) \;.
\end{align}
This is a contradiction. 
 Since $f(\theta)=\alpha*\theta$ is a monotonic increasing function of $\theta$, and since $1-f(\theta)=f(1-\theta)$, we have that  the value of the LHS of (\ref{eq:BSCfair}) is in $[0,\frac{1}{2})$, while the value of the RHS of (\ref{eq:BSCfair}) is in $(\frac{1}{2},1]$.
 Thus, there exists 
$\encs':\Uset_2\times\Sset\rightarrow\Xset$ such that $V^{\encs'}_{Y_2|X,S}$ is non-symmetrizable for $\theta_0< \frac{1}{2}< \theta_1$. 
As the condition $\sCond$ holds, we have that $\BCavc=\BIRavc=\BrICav$, due to Corollary~\ref{coro:BmainDbound}. 
Hence, by (\ref{eq:ex1BrCav}), we have that the capacity region of the arbitrarily varying BSBC $\avbc$ with causal SI is given by  (\ref{eq:Bex1Cavc}). \qed

\end{appendices}

\vspace{-0.35cm}
\printbibliography
 
\end{document}